\begin{document}

\begin{frontmatter}

\title{Unifying Semantic Foundations for \\ Automated Verification Tools in Isabelle/UTP}

\author{Simon Foster\corref{cor1}}
\ead{simon.foster@york.ac.uk}

\cortext[cor1]{Corresponding author}

\author{James Baxter\corref{}}
\ead{james.baxter@york.ac.uk}

\author{Ana Cavalcanti}
\ead{ana.cavalcanti@york.ac.uk}

\author{Jim Woodcock}
\ead{jim.woodcock@york.ac.uk}

\author{Frank Zeyda}

\address{Department of Computer Science, University of York, Deramore Lane, Heslington, York YO10 5GH, United Kingdom}

\begin{abstract}
  The growing complexity and diversity of models used for engineering dependable systems implies that a variety of
  formal methods, across differing abstractions, paradigms, and presentations, must be integrated. Such an integration
  requires unified semantic foundations for the various notations, and co-ordination of a variety of automated
  verification tools. The contribution of this paper is Isabelle/UTP, an implementation of Hoare and He's Unifying
  Theories of Programming, a framework for unification of formal semantics. Isabelle/UTP permits the mechanisation of
  computational theories for diverse paradigms, and their use in constructing formalised semantics. These can be
  further applied in the development of verification tools, harnessing Isabelle's proof automation facilities. Several
  layers of mathematical foundations are developed, including lenses to model variables and state spaces as algebraic
  objects, alphabetised predicates and relations to model programs, algebraic and axiomatic semantics, proof
  tools for Hoare logic and refinement calculus, and UTP theories to encode computational paradigms.
\end{abstract}

\begin{keyword}
Theorem Proving \sep Lenses \sep Unifying Theories of Programming \sep Hoare Logic \sep Isabelle/HOL
\end{keyword}

\end{frontmatter}

\section{Introduction}
\label{sec:intro}
Unifying Theories of Programming~\cite{Hoare&98} (UTP) is a framework for capturing, unifying, and integrating formal
semantics using predicate and relational calculus.  It aims to provide a ``coherent structure for computer
science''~\cite{Hoare&98}, by characterising the various programming languages it has produced, their foundational
computational paradigms, and different semantic flavours. Along one axis, UTP allows us to study computational
paradigms, such as functional, imperative, concurrent~\cite{Hoare&98,Oliveira&09}, real-time~\cite{Sherif2010}, and
hybrid dynamical systems~\cite{Foster16b,Foster17b,Foster19c-HybridRelations}.
In another axis, it allows us to characterise and link different presentations of semantics, such as axiomatic
semantics~\cite{Hoare69,Mor1996,Back1998,Reynolds2002}, with operational semantics and also algebraic semantics. UTP
thus unites a diverse range of notations and paradigms, and so, with adequate tool support, it allows us to answer the
substantial challenge of integrating formal
methods~\cite{Paige1997FM-IntegratedFormalMethods,Galloway1997-IntegratedFormalMethods,Broy1998-IntegratedFormalMethods}.

The contribution of this paper is the theoretical and practical foundations of a tool for building UTP-based
verification tools, called Isabelle/UTP, which draws from the strongest characteristics of previous
implementations~\cite{Oliveira07,Feliachi2010,Zeyda2012,Foster14,Zeyda16}. Our framework can unify the different
paradigms and semantic models needed for modelling heterogeneous systems, and provides facilities for constructing
verification tools. Isabelle/UTP is a shallow embedding~\cite{Gordon1989-VerifyHOL,vonWright1994-RefHOL} of UTP into
Isabelle/HOL~\cite{Isabelle}, and so has access to Isabelle's proof capabilities. Isabelle is highly extensible,
provides efficient automated proof~\cite{Blanchette2011}, and has facilities needed to develop plugins for performing
analysis on mathematical models of software and hardware.

Isabelle/UTP uses lenses~\cite{Foster07,Foster09} to algebraically characterise mutation of a program state using two
functions, similar to Back and von Wright's work~\cite{Back1998,Back2005-Procedures}, and to support semantic reasoning
facilities used in program verification and refinement. Lenses were originally developed to support bidirectional
programming languages for solving the ``view-update problem'' in database theory~\cite{Bancilhon1981-ViewUpdate}. Here,
we apply lenses to the modelling of state mutation and extend it with novel operators and laws.

Lenses allow us to generalise Back and von Wright's approach~\cite{Back1998} since we can abstractly model ``regions''
of the observable state. These regions may correspond to individual variables, sets of variables, and also
hierarchies. We develop several novel relations on lenses that allow us to relate these regions, including notions for
independence, containment, and equivalence. We also provide operators for composing regions in a set-theoretic manner,
which allows us to model alphabets of variables. Thus, lenses allow us to characterise syntax-related aspects of program
verification, such as free variables, substitution, frames, and aliasing. We mechanise an algebraic structure for lenses
in Isabelle/HOL, and show how it unifies a variety of state space models~\cite{Schirmer2009}. We draw comparisons with
Back and von Wright's variable axioms~\cite{Back1998,Back2005-Procedures} and separation
algebra~\cite{Calcagno2007}. Our account of lenses is a unifying algebra for observation and mutation of program state.

Upon our algebraic foundation of observation spaces, we construct UTP's relational program model. We develop a ``deep''
expression model~\cite{Zeyda16}, which is technically shallow, and yet has explicit syntax constructors supporting
inductive proofs. We develop UTP predicates and binary relations~\cite{Tarski71}, and provide a rich set of mechanically
verified algebraic theorems, including the famous ``laws of programming''~\cite{Hoare87}, which form the basis for
axiomatic semantics. Crucially, lenses allow us to express meta-logical provisos that involve syntax-related properties,
such as whether two variables are different or whether an expression depends on a particular variable, without needing a
deep embedding~\cite{Foster14,Zeyda16}.

We then develop several semantic presentations, including operational semantics, Hoare
logic~\cite{Hoare69,Gordon1989-VerifyHOL,Hoare&98}, and refinement calculus~\cite{Mor1996}, with linking theorems
showing how they are connected. From these, we also develop tactics for symbolic execution of relational
programs~\cite{GordonC2010}, verification using Hoare logic~\cite{Hoare&98}, and an example verification of a
find-and-replace algorithm. This illustrates the practicability and extensibility of our tool, which is not hampered by
our deep expression model and other meta-logical machinery.

Finally, we demonstrate the mechanisation of UTP theories within the relational model, which allows us to support a
hierarchy of advanced computational paradigms, including reactive programming~\cite{Foster17c}, hybrid
programming~\cite{Foster19c-HybridRelations,Foster2020-dL}, and object oriented programming~\cite{SCS06}.  Our work
significantly advances previous contributions~\cite{Zeyda2012} with UTP theories whose observation spaces are typed by
Isabelle, and which link to established libraries of algebraic theorems~\cite{HOL-Algebra,Armstrong2015} that facilitate
efficient derivation of programming laws.

This paper is an extension of two conference papers~\cite{Foster16a,Foster18b}.  We extend and refine the material on
lenses from~\cite{Foster16a}, including a precise account of the algebra, a substantial body of novel
theorems for each operator, a novel command for constructing lens-based state spaces, and additional motivating
examples. We extend~\cite{Foster18b} with additional theorems that can be derived for UTP theories, a complete example
based on timed relations with its mechanisation, and the use of frames in refinement calculus. On the whole, we simply
present the core theorems without proofs, and therefore refer the interested reader to a number of companion
reports~\cite{Foster18c-Optics,Foster19a-IsabelleUTP} and our repository\footnote{Isabelle/UTP Repository:
\url{https://github.com/isabelle-utp/utp-main}}. All the definitions, theorems, and proofs in this paper are
mechanically validated in Isabelle/HOL, and usually accompanied by an icon (\isalogo) linking to the corresponding
resources in our repository. Though our work is primarily based on Isabelle/HOL, we prefer to use more traditional
mathematical notations~\cite{Spivey89,Hoare&98} in this work\footnote{The precedence of operators in this paper, from
highest to lowest, is : $\usubapp$, $:=$, $\oast$, $\relsemi$, $\conditional{}{\cdot}{}$, $=$, $\neg$, $\land$, $\lor$,
$\implies$, $\iff$, $\unrest$.}, since we believe this makes the results more accessible.

In summary, our contributions are as follows: (1) mechanisation of lens theory in Isabelle/HOL, including fundamental
algebraic theorems, and extension with novel relations and combinators; (2) facilities for automating construction of
observation spaces; (3) an expression model, founded on lenses, providing both efficient proof and syntax-related
queries, such as free variables and substitution; (4) a generic relational program model and proven laws of programming;
(5) application to development of unified verification calculi, such as operational semantics, Hoare logic, and
refinement calculus, including treatment of aliasing and frames; (6) characterisation of mechanised UTP theories to
support various computational paradigms.

\begin{figure}
  \centering

  \begin{minipage}{.8\linewidth}
  \xymatrix@-1.2pc{
    *+[F-:<3pt>]{\parbox{2cm}{\centering \sf Variables (\S\ref{sec:lenses})}} \ar[r] \ar@{..>}[d] & 
    *+[F-:<3pt>]{\parbox{2.1cm}{\centering \sf Expressions (\S{\ref{sec:utp-expr}})}} \ar[r] \ar@{..>}[d] & 
    *+[F-:<3pt>]{\parbox{2cm}{\centering \sf Predicates (\S\ref{sec:predcalc})}} \ar[r] \ar@{..>}[d] & 
    *+[F-:<3pt>]{\parbox{2.1cm}{\centering \sf Relations (\S\ref{sec:relcalc})}} \ar[r] \ar@{..>}[d] & 
    *+[F-:<3pt>]{\parbox{2.2cm}{\centering \sf Verification (\S\ref{sec:verify-rel})}} \ar[r] \ar@{..>}[d] &
    *+[F-:<3pt>]{\parbox{2.3cm}{\centering \sf UTP Theories (\S\ref{sec:utp-thy})}} \ar@{..>}[d]
    \\
    \parbox{2cm}{\centering\scriptsize \sf algebra (\S\ref{sec:lensax}), independence (\S\ref{sec:lens-indep}), containment (\S\ref{sec:order-eq})} &
    \parbox{2.1cm}{\centering\scriptsize \sf
    meta-logic (\S\ref{sec:meta-logic}), substitution (\S\ref{sec:subst})} &
    \parbox{2cm}{\centering\scriptsize \sf complete lattices, quantifiers, cylindrical algebra} &
    \parbox{2.1cm}{\centering\scriptsize \sf assignment, laws of programming} &
    \parbox{2.2cm}{\centering\scriptsize \sf symbolic execution (\S\ref{sec:symexec}), Hoare logic (\S\ref{sec:hoarelogic}), refinement calculus (\S\ref{sec:ref-calc})} &
    \parbox{2.3cm}{\centering\scriptsize\sf healthiness conditions, timed relations}
  }
  \end{minipage}
  \caption{Roadmap of the Isabelle/UTP Foundations}
  \label{fig:roadmap}

  \vspace{-2ex}

\end{figure}

In \S\ref{sec:background} we motivate Isabelle/UTP, and review foundational work. In \S\ref{sec:lenses} to
\S\ref{sec:utp-thy}, we describe our contributions. The paper overview given in Figure~\ref{fig:roadmap} shows how the
foundational parts of Isabelle/UTP are connected, and the sections in which they are documented. In \S\ref{sec:lenses}
we describe how state spaces and variables in a program can be modelled algebraically using lenses.  In
\S\ref{sec:isabelle-utp} we describe the core of Isabelle/UTP, first defining its expression model in
\S\ref{sec:utp-expr}, predicates in \S\ref{sec:predcalc}, meta-logical facilities in \S\ref{sec:meta-logic} and
\S\ref{sec:subst}, and the relational program model in \S\ref{sec:relcalc} with proof support and mechanised laws of
programming.  In \S\ref{sec:verify-rel} we use this relational program model to build tools for symbolic evaluation,
verification using Hoare calculus with several examples in Isabelle/UTP, and also Morgan's refinement
calculus~\cite{Mor1996}.  In \S\ref{sec:utp-thy} we describe how different computational paradigms can be captured and
mechanised using Isabelle/UTP theories, illustrating this using a UTP theory for concurrent and reactive programs. In
\S\ref{sec:relwork} we survey related work, and in \S\ref{sec:concl}, we conclude.

\section{Preliminaries and Motivation}
\label{sec:background}
In this section we motivate the contributions in our paper and briefly survey foundational work. We first introduce UTP
(\cref{sec:UTP}) and Isabelle/HOL (\cref{sec:isahol}). In \cref{sec:verify-sem} we briefly survey work on semantic
embedding to support verification. This leads to the conclusion that the state space modelling approach is the main
consideration, and so in \cref{sec:state-space-modelling} we survey and critique different approaches. Finally in
\cref{sec:our-approach} we explain how Isabelle/UTP advances the state of the art.

\subsection{Unifying Theories of Programming}
\label{sec:UTP}

Several authors consider integration of formal methods, notably Broy~\cite{Broy1998-IntegratedFormalMethods} and
Paige~\cite{Paige1997FM-IntegratedFormalMethods}. These authors emphasise the centrality of unified formal
semantics~\cite{Hehner1988,Hehner1990}. Specifically, if diverse formal methods are to be coordinated, then their
various semantic models must be reconciled to ensure consistent
verification~\cite{Paige1997FM-IntegratedFormalMethods,Galloway1997-IntegratedFormalMethods}. An important development
is Hoare and He's Unifying Theories of Programming (UTP)~\cite{Hoare1994,Hoare&98,Cavalcanti&06} technique, which fuses
several intellectual streams, notably Hehner's predicative
programming~\cite{Hehner1984-PredicativeProgramming,Hehner1988,Hoare1984-ProgramsArePredicates}, relational
calculus~\cite{Tarski41}, and the refinement calculi of Back and von Wright~\cite{Back1998}, and Morgan~\cite{Mor1996}.

The goal of UTP is to find the fundamental computational paradigms that underlie programming and modelling languages and
characterise them with unifying denotational and algebraic semantics. A significant precursor to UTP is a seminal paper
entitled \textit{Laws of Programming}~\cite{Hoare87}, in which nine prominent computer scientists, led by Hoare, give a
complete set of algebraic laws for GCL~\cite{Dijkstra75}. The purpose of the laws of programming, however, is much
deeper: it is to find laws that unify different programming languages.

UTP uses binary relations to model programs as predicates~\cite{Hehner1984-PredicativeProgramming,Hehner1990}. Relations
map an initial value of a variable, such as $x$, to its later value, $x'$. These might model program variables or
observations of the real world. For example, $clock : \nat$ records the passage of time. It makes no sense to assign
values to $clock$ that reverse time. Healthiness conditions constrain observations using idempotent functions on
predicates. For example, application of $\healthy{HT}(P) \defs (P \land clock \le clock')$ gives a predicate that
forbids reverse time travel. If $P$ is unhealthy, for example $clock' = clock - 1$, then application of
$\healthy{HT}$ changes its meaning: $\healthy{HT}(clock = clock' + 1) = \false$.

The choice of observational variables and healthiness conditions defines a UTP theory. It characterises the set of
relations whose elements are the fixed points of the healthiness conditions; $\healthy{HT}(P) = P$ for our example. An
algebra gives the relationship between theory elements, which supports the laws of programming for a particular
paradigm~\cite{Hoare87}. UTP theories can be combined by composing their healthiness conditions, which allows
multi-paradigm semantics~\cite{Oliveira&09}.

Reactive processes~\cite[chapter~8]{Hoare&98} is a UTP theory for event-driven programs. It includes observational
variables $wait : \bool$ and $tr : \seq Event$, which represent, respectively, whether a program is quiescent
(i.e. waiting for interaction), and the sequence of observed events (drawn from $Event$). An example healthiness
condition is $\healthy{R1}(P) \defs tr \le tr'$, where $\le$ is the prefix order on sequences. $\healthy{R1}$ states
that the interaction history must be retained -- we may not undo past events.

A mechanisation of UTP in a theorem prover, like Isabelle~\cite{Isabelle,Feliachi2010,Foster14}, allows us to develop
verification tools from UTP theories. Here, there are two goals that must be balanced: (1) the ability to engineer UTP
theories and express the resulting algebraic laws; and (2) support for efficient automated
verification~\cite{Gordon1989-VerifyHOL,vonWright1994-RefHOL,Alkassar2008}. In addition, the UTP is not about one
programming language, or even one intermediate verification language (IVL)~\cite{Barnett2005Boogie,Filliatre2013-Why3},
but unification of diverse modelling and program paradigms through algebraic laws. For example, our mechanisation should
permit the algebraic law below~\cite[page~96]{Back1998}.

\begin{example}[Commutativity of Assignments] \label{ex:asn-comm}
\begin{align*}
  (x := e \relsemi y := f) &= (y := f \relsemi x := e) & \text{provided } x \text{ and } y \text{ are independent}, x \text{ is not free in } f, y \text{ is not free in } e.
\end{align*}
\end{example}

\noindent This law states that two assignments, $x := e$ and $y := f$, can commute provided that they are made to
distinct variables, $x$ and $y$, and they are not mentioned in $f$ and $e$, respectively. Now, such a law is intuitive
and holds in a variety of languages, which makes it a useful artefact for unification. Mechanising a law like this
requires that variables are modelled as first-class citizens, so that they are objects of the logic, with which we can
formulate the side conditions. In \cref{sec:verify-sem} we consider approaches to semantic embeddings, and motivate the
approach we have chosen, but first we consider Isabelle/HOL.

\subsection{Isabelle/HOL}
\label{sec:isahol}

Isabelle/UTP is a conservative extension of Isabelle/HOL~\cite{Isabelle}, which is a proof assistant for
Higher Order Logic (HOL). It consists of the Pure meta-logic, and the HOL object logic. Pure provides a term language,
a polymorphic type system, a syntax-translation framework for extensible parsing and pretty printing, and an inference
engine. The jEdit-based IDE allows \LaTeX-like term rendering using Unicode.

Isabelle theories consist of type declarations, definitions, and theorems. We prove theorems in Isabelle using
tactics. The simplifier tactic, \textsf{simp}, rewrites terms using equational theorems. The \textsf{auto} tactic
combines \textsf{simp} with deductive reasoning. Isabelle also has the powerful \textsf{sledgehammer} proof
tool~\cite{Blanchette2011} which invokes external first-order automated theorem provers on a proof goal, verifying their
results using tactics like \textsf{simp} and \textsf{metis}, which is a first-order resolution prover.

HOL implements a functional programming language founded on an axiomatic set theory. This object logic gives us a
principled approach to mechanised mathematics. We construct definitions and theorems by applying axioms in the proof
kernel. HOL provides inductive datatypes, recursive functions, and records. It provides basic types, including sets
($\power A$), total functions ($A \to B$), numbers ($\nat$, $\num$, $\mathbb{R}$), and lists. These types can be
parametric: $[nat]list$.\footnote{The square brackets are not used in Isabelle; we add them for readability.}
Specialisation unifies two types if one is an instantiation of the type variables of the other. For example, $[nat]list$
specialises $[\alpha]list$, where $\alpha$ is a type parameter.

\subsection{Verification and Semantic Embedding}
\label{sec:verify-sem}

In this section we consider different approaches to developing verification tools and outline the previous approaches to
UTP mechanisation~\cite{Nuka2006-UTP,Oliveira07,Feliachi2010,Zeyda2012,Foster14,Zeyda16} in this context.

Development of verification tools is usually conducted by means of a semantic embedding, where the language and
deductive reasoning laws are embedded in a proof assistant such as Isabelle/HOL or Coq. Building on Gordon's work
seminal work~\cite{Gordon1989-VerifyHOL}, Boulton et al.~\cite{BoultonGGHHT1992} identify the two fundamental categories
of semantic embedding: deep embeddings and shallow embeddings. In a deep embedding, the syntax tree of the language is
embedded into the host logic (such as HOL) as a datatype, and this acts as the basis for deductive verification calculi. In contrast,
in a shallow embedding, the syntax tree is implicit, and the goal is to directly reuse host logic reasoning facilities.

Boulton et al. note that deep embeddings allow reasoning over the syntactic structure of programs; for example we can
calculate the set of free variables in an expression. Consequently, a deep embedding can certainly support
Example~\ref{ex:asn-comm}, since syntax is first-class. The deep embedding technique is used, for example, by Nipkow and
Klein in their book \textit{Concrete Semantics}~\cite{Nipkow2014-ConcreteSemantics}. Nevertheless, deep embeddings have
the substantial disadvantage that they restrict the use of host logic proof facilities, which hampers efficient
verification. While we can mechanise \cref{ex:asn-comm}, it is not clear how we can efficiently apply it. Moreover, a
particular requirement for a UTP mechanisation is that the syntax tree is extensible, so that additional programming
operators can be defined, which is thwarted if we opt for a deep embedding.

\begin{figure}
  {\small
  \xymatrixcolsep{1pc}
  \xymatrix{
                                                 & & \text{\begin{tabular}{c}Zeyda et al. \\ (2012)~\cite{Zeyda2012} \end{tabular}} \ar[r] & \text{\begin{tabular}{c}Foster et al.  \\ (2014)~\cite{Foster14} \end{tabular}} \ar[r] & \text{\begin{tabular}{c} Zeyda et al. \\ (2016)~\cite{Zeyda16}\end{tabular}} \ar[rd] & & \\
    \text{\begin{tabular}{c}Nuka et al. \\ (2006)~\cite{Nuka2006-UTP}\end{tabular}} \ar@{.>}[r] & \text{\begin{tabular}{c}Oliveira et al. \\ (2007)~\cite{Oliveira07}\end{tabular}} \ar[ur] \ar@{.>}[dr] & & &                      & \text{\begin{tabular}{c}Foster et al. \\ (2016)~\cite{Foster16a}\end{tabular}} \\
                                                 & & \text{\begin{tabular}{c}Feliachi et al. \\ (2010)~\cite{Feliachi2010}\end{tabular}} \ar[rr]  &  & \text{\begin{tabular}{c}Feliachi et al. \\ (2012)~\cite{Feliachi2012} \end{tabular}} \ar[ru] &                     \\
 }
}

\caption{Overview of previous UTP semantic embeddings}
\label{fig:utp-mech-overview}

\end{figure}

In contrast to deep embeddings, shallow embeddings have been very successful in supporting program
verification~\cite{Gordon1989-VerifyHOL,vonWright1994-RefHOL,Alkassar2008,Feliachi2010,Armstrong2015,Gomes2016}. As a
prominent example, the seL4 microkernel verification project uses a shallow embedding called Simpl~\cite{Alkassar2008},
which illustrates its scalability. Moreover, most of the previous UTP mechanisations are also shallow
embeddings~\cite{Oliveira07,Feliachi2010,Zeyda2012,Foster14}. The exceptions are Nuka and Woodcock~\cite{Nuka2006-UTP},
who follow the deep embedding approach, and Butterfield~\cite{Butterfield2010,Butterfield2012}, who develops a bespoke
proof tool called $U\cdot(TP)^2$. Within the shallow embeddings, broadly there are two streams, begun by Oliveira et
al.~\cite{Oliveira07,Zeyda2012,Foster14} and Feliachi et al.~\cite{Feliachi2010,Feliachi2012,Feliachi2015SymTest}, as
illustrated in Figure~\ref{fig:utp-mech-overview}. Both streams have HOL as the host logic, though Oliveira et
al.~\cite{Oliveira07} use a dialect called ProofPower-Z\footnote{ProofPower:
  \url{http://www.lemma-one.com/ProofPower/index/}}, whereas Feliachi et al.~\cite{Feliachi2010} use Isabelle/HOL. We
will consider further differences in \cref{sec:state-space-modelling}, but note that Isabelle/UTP~\cite{Foster16a}
results from the confluence of the ideas from the streams~\cite{Feliachi2010,Foster14}.

Isabelle/UTP is developed as a shallow embedding. All shallow embeddings follow roughly the same basic
approach~\cite{Gordon1989-VerifyHOL,vonWright1990-RefHOL}. First, we fix a state space $\src$, to describe states of a
program. Afterwards, we can model predicates using the type $\src \to \bool$, and programs using
$\src \times \src \to \bool$, which are binary relations. From this foundation, the programming language operators can
be defined using the predicative programming approach~\cite{Hehner1984-PredicativeProgramming}, where programs are
represented as below.

\begin{definition}[Programs as Predicates] \label{def:prog-as-pred}
\begin{align*}
  P \relsemi Q &~\defs~ (\lambda (s, s') @ (\exists s_0 @ P(s, s_0) \land Q(s_0, s'))) \\
  x := e       &~\defs~ (\lambda (s, s') @ s'.x = e(s) \land s'.y_1 = s.y_1 \land \cdots \land s'.y_n = s.y_n) \\
  \ckey{if}~b~\ckey{then}~P~\ckey{else}~Q~\ckey{fi} &~\defs~ (\lambda (s, s') @ (b(s) \land P(s, s')) \lor (\neg b(s) \land Q(s, s')))
\end{align*}
\end{definition}

\noindent These predicates effectively describe whether a given input-output pair, $(s, s') : \src \times \src$, is an
observation of the program. Sequential composition $P \relsemi Q$ is a predicative version of relational
composition~\cite{Tarski41}. It requires that there exists a middle state $s_0$ such that $P$ and $Q$ have this as a
possible final state and initial state, respectively. Assignment $x := e$ states that $x$ in the final state $s'$ has the value
$e$, and every other variable ($y_i$) retains its value. If-then-else conditional admits the behaviours of $P$ when $b$
is true, and the behaviours of $Q$ when $b$ is false. We can also denote the partial correctness Hoare triple
operator~\cite{Gordon1989-VerifyHOL}:
$$\hoaretriple{p}{Q}{r} \defs \left(\forall (s, s') @ p(s) \land Q(s, s') \implies r(s')\right)$$ A Hoare triple is
valid if, for any $(s, s')$ where the precondition $p$ is satisfied by $s$, and $Q$ has a final state $s'$ when started
from $s$, the postcondition $r$ is satisfied by $s'$. From this definition, we can prove many of the Hoare logic axioms
as theorems~\cite{Hoare69,Gordon1989-VerifyHOL}. Other axiomatic verification
calculi~\cite{Dijkstra75,Mor1996,Back1998,Hoare1994} can be characterised in a similar way -- this is the standard
shallow embedding approach.

However, not all laws are straightforward to express. In shallow embeddings, variables are often not first-class
citizens~\cite{vonWright1994-RefHOL,Alkassar2008,Feliachi2010}. That being the case, they are not objects of the logic,
and so it is not possible to express \cref{ex:asn-comm}. Moreover, consider the following variant of the forward
assignment law:
$$\hoaretriple{p}{x := e}{x = e \land p} \quad \text{provided } x \text { is not free in } p \text{ and } e$$
\noindent This is certainly a useful law, as it allows us, for example, to push initial assignments with constants
forward, such as $x := 1$. However, it requires that we can determine whether the variable $x$ is free in $p$. This is
seemingly a property that can only be expressed if $e$ and $p$ are syntactic objects. Example laws of this kind exist in
many axiomatic calculi~\cite{Dijkstra75,Mor1996,Back1998}. Now, to be clear, the absence of this law does not prevent a
particular program from being verified, and so it may not seem important. However, if the goal is unification by
characterising such laws abstractly, as is the case in UTP, then this is a significant question. Adequately answering
this is key to ensure that Isabelle/UTP is truly a unifying framework. In Definition~\ref{def:prog-as-pred}, we use the
notation $s.x$ to represent the value of variable $x$ in state $s$. How this operator is represented depends on how we
model state spaces, the crucial question for shallow embeddings, which we consider next.

\subsection{State Space Modelling}
\label{sec:state-space-modelling}

The principal difference between the shallow embeddings in \cref{sec:verify-sem} is their approach to modelling the
observation space $\src$. Schirmer and Wenzel provide a helpful discussion on modelling state
spaces~\cite{Schirmer2009}, and so we employ their framework for comparison and critique. They identify four common ways
of mechanising the modelling of state: using (1) functions; (2) tuples, (3) records, and (4) abstract types.

The first approach models state as a function, $\src \defs \textit{Var} \to \textit{Value}$, for suitable value and
variable types, and so $s.x \defs s(x)$. Gordon~\cite{Gordon1989-VerifyHOL}, Back and von
Wright~\cite{vonWright1990-RefHOL}, Oliveira et al.~\cite{Oliveira07}, and the successor UTP
mechanisations~\cite{Zeyda2012, Foster14, Zeyda16}, follow this approach. It requires a deep model of variables and
values. Consequently, it has similarities with deep embeddings, since concepts such as names and typing are first-class
citizens. This provides an expressive model with few limitations on possible manipulations of variables in the state
space~\cite{Foster14}. However, Schirmer and Wenzel highlight two obstacles~\cite{Schirmer2009}. First, the machinery
required for deep reasoning about values is heavy and \emph{a priori} limits possible value constructions, due to
cardinality restrictions in HOL. Second, explicit variable naming means the embedding must tackle syntactic issues, like
$\alpha$-renaming.

Zeyda et al.~\cite{Zeyda16} mitigate the first issue by axiomatically introducing a value universe ($\textit{Value}$) in
Isabelle. This universe has a higher cardinality that any HOL type, and so all normal types can be injected into it. The
cost of this approach is the extension of HOL with additional axioms. Moreover, the complexities associated with the
second issue remain. Once names and types are first-class citizens, it becomes necessary to replicate a large part of
the underlying meta-logic, such as a type checker. This requires great effort and can hit proof efficiency. Even so, the
functional state space approach seems necessary to model the dynamic creation of variables, as required, for example, in
modelling memory heaps in separation logic~\cite{Calcagno2007,Dongol2015}, so we do not reject it entirely.

The second approach~\cite{Schirmer2009} uses tuples to represent state; for example $\num \times \bool \times \num$ can
represent a state space with three variables~\cite{vonWright1994-RefHOL}. The value of each variable can be obtained by
decomposing the state using pattern matching, or using projection functions so that $s.x \defs \pi_n(s)$. The main issue
with this approach is that variable names are not automatically represented by the state space.

The third approach uses records to model state: a technique often used by verification tools in
Isabelle~\cite{Alkassar2008,Feliachi2010,Feliachi2012,Armstrong2015}. It is similar to the second approach, since
records are simply tuples. However, records come with bespoke selection and update functions for each index; this makes
manipulating the state space straightforward. Thus, we have $s.x \defs x(s)$, with $x$ being a field selector
function. Feliachi et al. use this approach to create their semantic embedding of the UTP in
Isabelle/HOL~\cite{Feliachi2010}. A record field represents a variable in this model. These can be abstractly
represented using pairs of field-query and update functions, $f_i$ and $f_i\textit{-upd}$. We do not need to encode the
set of variable names (\textit{Var}) in this approach.

The record state space approach greatly simplifies automation of program
verification~\cite{Feliachi2010,Feliachi2012,Feliachi2015SymTest,Armstrong2015,Gomes2016}. This is through directly
harnessing, rather than replicating, the polymorphic type system and automated proof tactics. The expense, though, is a
loss of flexibility compared to the functional approach, particularly in the decomposition of state spaces. Moreover,
a field of a record is not a first-class citizen because it is not an object of the logic in Isabelle. This means
that record fields lack the semantic structure necessary to capture their behaviour and thus manipulate or compare them
-- $f_i$ and $f_j$ are simply different functions. Consequently, implementations using records seldom provide general
support for syntactic concepts like free variables and substitution.

The previous approaches all use concrete models (types) for $\src$. The fourth approach~\cite{Schirmer2009} uses an
abstract type to represent a state space, with axiomatised projection functions for each of the variables. In this model
we have again that $s.x \defs x(s)$, but $x : \src \to \view$ is simply an abstract function without an explicit
implementation. This approach is employed by Back and von Wright~\cite{Back1998}, who use two functions
$\textsf{val}.x : \src \to \view$ and $\textsf{set}.x : \view \to \src \to \src$, to characterise each variable,
together with five axioms that characterise their behaviour (see \cref{def:back-var}). However, as indicated by both
Schirmer and Wenzel~\cite{Schirmer2009}, and Back and Preoteasa~\cite{Back2005-Procedures}, this approach requires us to
\textit{a priori} fix the number of variables available and their axioms. This hampers modularity in mechanisation,
since it is difficult to add new variables to grow a state space.

Schirmer and Wenzel's solution~\cite{Schirmer2009} is to adopt the state as functions approach, but improve its
flexibility using locales~\cite{Ballarin06}. An Isabelle locale allows the creation of a local theory context, with
fixed polymorphic constants and axiomatic laws~\cite{HOL-Algebra}. Rather than fixing concrete types for \textit{Var}
and \textit{Val}, Schirmer and Wenzel characterise these abstractly, and use locale constants to characterise injection
and projection functions. This is very similar to the approach adopted in our earlier version of
Isabelle/UTP~\cite{Foster14}, except that the latter uses type classes to assign injections to a polymorphic value
universe.

This approach imposes limitations that make it unsuitable for UTP, because it limits polymorphism in
variable types\footnote{This fact was first pointed out to us by Prof. Burkhart Wolff. Constants fixed in the head of an
  Isabelle locale have fixed types and cannot be polymorphic. In contrast, constants introduced at the global theory
  level can be fully polymorphic.}. For example, in the UTP theory of reactive processes, a trace variable $tr$ can be given
the polymorphic type $[e]\textit{list}$ for some event type $e$. When we hide events in a process, the event type can
change, since some events are no longer visible. Yet, in a locale, constants have a fixed type and so $tr$ is not truly
polymorphic. In comparison, if we define a record with a field $tr : [e]\textit{list}$ then we can assign it different
types. We conclude that there is a need for a different approach to state space modelling.

\subsection{Our Approach}
\label{sec:our-approach}

Our approach is closest to the abstract type approach, though we aim is to unify all four. For UTP, we need to treat
variables as first-class citizens. We see merits both in modelling state as a function, and also as a record. Indeed, we
recognise that there is often a need to blend these two representations, as we illustrate using the running example
below:

\begin{example}[Stores Variables and Heaps] \label{ex:st-hp}
Consider a state space with three variables $x : int$, $y : int$, and $hp : addr \to int$. The variables $x$ and $y$
represent program variables in the store, and $hp$ represents a heap mapping addresses to integer values. It can be
modelled as a record with three fields. However, we will likely want to perform assignments directly to elements of
function $hp$, for example $hp[loc] := 7$, and so we see the two state representations, functions and records,
co-existing. \qed
\end{example}

Isabelle/UTP generalises the various state-space approaches by abstractly characterising variables algebraically using
lenses~\cite{Foster07,Foster09,Hofmann2011-SymLens,Pickering2017-Optics}. A lens consists of two functions:
$\lget : \src \to \view$ that extracts a value from a state, and $\lput : \src \to \view \to \src$ that puts back an
updated view. We characterise lenses as algebraic structures to which different concrete models can be assigned, which
allows us to unify the various state space approaches. Lenses allow characterisation of both individual variables and
also state space \textit{regions} that can encompass several variables. We consider, for example, that the heap location
$hp[loc]$ is semantically a part of the heap $hp$, and therefore changes to $hp$ also effect $hp[loc]$. Moreover, in an
object oriented program the state is hierarchical: the attribute variables are all part of the object variable.

Since lenses can characterise sets of variables, we can also use them to model frames, as required, for example, by
Morgan's refinement calculus~\cite{Mor1996}. We define operators that allow us to compare and manipulate lenses,
including independence ($x \bowtie y$), sublens ($x \lsubseteq y$), and summation ($x \lplus y$), which effectively
allows execution of two lenses in parallel, but differently to Pickering's operator~\cite{Pickering2017-Optics}, which
acts on a product space. We also implement N.~Foster's lens composition operator ($x \lcomp y$)~\cite{Foster09}, which
supports hierarchical state spaces.

Isabelle/UTP, therefore, has a high level of proof automation because it is a shallow
embedding~\cite{vonWright1994-RefHOL,Feliachi2010,Feliachi2012}, and we avoid the requirement to explicitly characterise
names and values. Like any other object in Isabelle, we can assign a name to a particular lens, but this name is
meta-logical. Such globally named lenses can also be polymorphic, which helps us to model observational variables. At
the same time, even though Isabelle/UTP is a shallow embedding, the lens axioms provide us with sufficient structure to
characterise syntax-like queries, like substitution and free variables. Consequently, lenses allow us to develop a
program model that exhibits benefits of both deep and shallow embeddings.

\section{Algebraic Observation Spaces}
\label{sec:lenses}
In this section, we present our theory of lenses, which provides an algebraic semantics for state and observation space
modelling in Isabelle/UTP. Although some core definitions like the lens laws and composition operator are well
known~\cite{Foster07,Foster09,Fischer2015}, we introduce several novel operators, including summation, and relations
like independence and equivalence. We prove several algebraic properties for these operators, which are foundational for
our mechanisation of UTP.

All definitions, theorems, and proofs in this section may be found in our Isabelle/HOL
mechanisation~\cite{Foster18c-Optics}.

\subsection{Signature}

A lens is used, intuitively, to view and manipulate a region ($\view$) of a state space ($\src$), as illustrated in
Figure~\ref{fig:lens}. The view, $\view$, corresponds to the hatched region of $\src$. A region may model the contents
of one or more variables, whose type is $\view$. We introduce lenses as two-sorted algebraic structures.
\begin{definition}[Lenses] 
  A lens is a quadruple $\lquad{\view}{\src}{\lget : \src \to \view}{\lput : \src \to \view \to \src}$, where $\view$
  and $\src$ are non-empty sets called the view type and state space\footnote{In the lens
    literature~\cite{Foster09,Fischer2015,Pickering2017-Optics} this is referred to as the ``source''. We refer to it as
    the state space and observation space, a more general concept, interchangeably, depending on the context.},
  respectively, and $\lget$ and $\lput$ are total functions\footnote{N.~Foster~\cite{Foster09} also introduces a
    function called $\lcreate : \view \to \src$ that creates an element of the source. We omit this because it can be
    defined in terms of $\lput$ and is not necessary for this paper.}. The view $\view$ corresponds to a region of the
  state space $\src$. We write $\view \lto \src$ to denote the type of lenses with state space $\src$ and view type
  $\view$, and subscript $\lget$ and $\lput$ with the name of a particular lens.
  \isalink{https://github.com/isabelle-utp/utp-main/blob/d18036c259bbc47cbd4200f2428258934e935699/optics/Lens_Laws.thy\#L13}
\end{definition}
\noindent This follows the standard definition given by N.~Foster~\cite{Foster09}, though other works~\cite{Foster07}
employ partial functions. The $\lget$ function views the current value of the region, and $\lput$ updates
it. Intuitively, we use these structures to model sequences of queries and updates on the state space $\src$ in
\S\ref{sec:isabelle-utp}. Each variable in a program can be represented by an individual lens, with operators like
assignment utilising $\lget$ and $\lput$ to manipulate the corresponding region of memory. For the purpose of
example, we describe lenses for record types.

\begin{definition}[Record Field Lens] 
  We consider the definition of a new record type,
  $R \defs \llparenthesis f_1 : \tau_1, \cdots f_n : \tau_n \rrparenthesis$, with $n$ fields, each having a
  corresponding type. Each field yields a function $f_i : R \to \tau_i$, which queries the current value of a
  field. Moreover, we can update the value of field $f_i$ in $r : R$ with $k : \tau_i$ using $r(f_i := k)$. We can
  construct a lens for each field using $\lrec^{R}_{f_i} \defs \lquad{\tau_i}{R}{f_i}{\lambda s~v @ s(f_i := v)}$. \isalink{https://github.com/isabelle-utp/utp-main/blob/d18036c259bbc47cbd4200f2428258934e935699/optics/Lens_Instances.thy\#L207}
\end{definition}

\noindent The field lens allows us to employ the ``state as records'' approach~\cite{Schirmer2009,Feliachi2010}, as
discussed in \cref{sec:state-space-modelling}. We also consider a second example. Many state spaces are built using the
Cartesian product type $S_1 \times S_2$, and consequently it is useful to define lenses for such a space. We therefore
define the $\lfst$ and $\lsnd$ lenses.

\begin{definition}[Product Projection Lenses] \isalink{https://github.com/isabelle-utp/utp-main/blob/d18036c259bbc47cbd4200f2428258934e935699/optics/Lens_Algebra.thy\#L49}
  \begin{align*}
  \lfst^{S_1,S_2} ~~\defs~~& \lquad{S_1}{S_1 \times S_2}{\lambda (x, y) @ x}{\lambda (x, y)~z @ (z, y)} \\
  \lsnd^{S_1,S_2} ~~\defs~~& \lquad{S_2}{S_1 \times S_2}{\lambda (x, y) @ y}{\lambda (x, y)~z @ (x, z)}
  \end{align*}
\end{definition}

\noindent The superscripted state spaces are necessary in order to specify the product type; we omit them when they
can be determined from the context. Lenses $\lfst$ and $\lsnd$ allow us to focus on the first and second element of a
product type, respectively. Their $\lget$ functions project out these elements, and the $\lput$ functions replace the
first and second elements with the given value $z$. An application of these lenses is the ``states as products''
approach~\cite{Schirmer2009} (see \cref{sec:state-space-modelling}): we can directly model a state space with two
variables, for example $x \defs \lfst^{\num, \bool}$ and $y \defs \lsnd^{\num, \bool}$ give a state space with
$x : \num$ and $y : \bool$. A final example is the total function lens.

\begin{definition} $\lfun^{A,B}_k \defs \lquad{B}{A \to B}{\lambda f @ f(k)}{\lambda f~v @ (\lambda x @ \hifthenelse{x = k}{v}{f(x)})}$ \isalink{https://github.com/isabelle-utp/utp-main/blob/d18036c259bbc47cbd4200f2428258934e935699/optics/Lens_Instances.thy\#L11}
\end{definition}

\noindent The total function lens $\lfun^{A,B}_k$ views the output of a function $f : A \to B$ associated with a given input value
$k : A$. The $\lget$ function simply applies $f$ to $k$, and the $\lput$ function associates a new output $v$ with
$k$. The function lens allows us to also employ the ``state as functions'' approach~\cite{Oliveira07,Schirmer2009}. We
can also use it to model an array of integer values with $\lfun^{\nat,\num}_k$, as used in \cref{ex:st-hp}.

\begin{figure}
  \centering
  \begin{minipage}{.25\linewidth}
    \includegraphics[width=\linewidth]{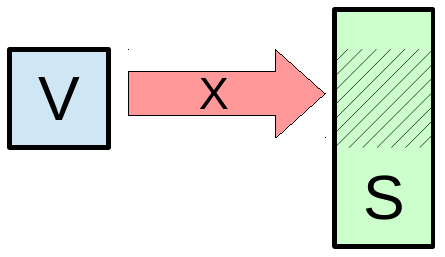}
    \caption{Lens viewing a region}
    \label{fig:lens}
  \end{minipage}\quad\begin{minipage}{.25\linewidth}
    \includegraphics[width=\linewidth]{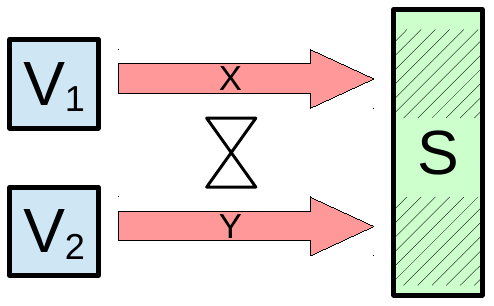}
    \caption{Independence}
    \label{fig:indep}
  \end{minipage}\quad\begin{minipage}{.4\linewidth}
    \includegraphics[width=\linewidth]{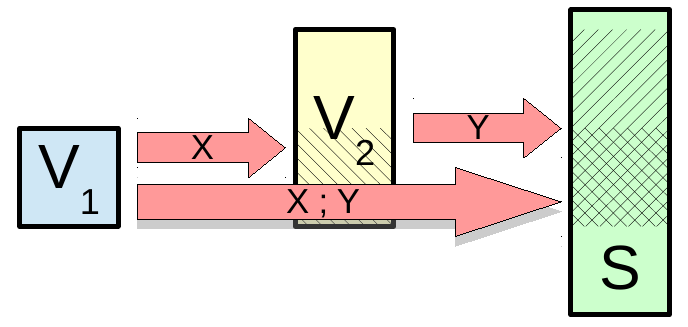}

    \vspace{-3.2ex}
    \caption{Lens composition}
    \label{fig:lcomp}
  \end{minipage}

\end{figure}

\subsection{Axiomatic Basis}
\label{sec:lensax}

The use of lenses to model variables depends on $\lget$ and $\lput$ behaving according to a set of axioms.
\begin{definition}[Total Lenses] \label{def:total-lenses}
  A total lens obeys the following axioms: \isalink{https://github.com/isabelle-utp/utp-main/blob/d18036c259bbc47cbd4200f2428258934e935699/optics/Lens_Laws.thy\#L174}
  \begin{align*}
    \lget~(\lput~s~v) &= v \tag{PutGet} \label{law:put-get} \\
    \lput~(\lput~s~v')~v &= \lput~s~v \tag{PutPut} \label{law:put-put} \\
    \lput~s~(\lget~s) &= s \tag{GetPut} \label{law:get-put}
  \end{align*}
  We write $\view \ltto \src$ for the set of total lenses with view type $\view$ and state space $\src$, and
  $\ast \ltto \src$ for the set of total lenses with any view type, whose state space is $\src$.
\end{definition}
We mechanise this algebraic structure in Isabelle using locales, following the pattern given by
Ballarin~\cite{Ballarin06}\footnote{We are not using locales to characterise state spaces, like Schirmer and
  Wenzel~\cite{Schirmer2009}, but simply to fix the algebra of lenses.}. Total lenses are usually called ``very
well-behaved'' lenses~\cite{Foster07,Foster09}, but we believe ``total'' is more descriptive, since it is always
possible to meaningfully project a view from a state. Axiom~\ref{law:put-get} states that if a state has been constructed by application of
$\lput~s~v$, then a matching $\lget$ returns the injected value, $v$. Axiom~\ref{law:put-put} states that a later
$\lput$ overrides an earlier one, so that the previously injected value $v'$ is replaced by $v$. Finally,
Axiom~\ref{law:get-put} states that for any state element $s$, if we extract the view element and then update the
original using it, then we get precisely $s$ back.

We now demonstrate that every field of a record yields a total lens.

\begin{lemma} \label{thm:reclens} For any field $f_i$ of a record type $R$, the record lens 
  $\lrec^{R}_{f_i}$ forms a total lens.
\end{lemma}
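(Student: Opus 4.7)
The plan is to verify directly that the triple
$\lrec^{R}_{f_i} = \lquad{\tau_i}{R}{f_i}{\lambda s\, v @ s(f_i := v)}$
satisfies each of the three axioms of \cref{def:total-lenses}. Since both $\lget$ and $\lput$ are defined in terms of the automatically generated record selector $f_i$ and the record update notation $s(f_i := v)$, the proof reduces to citing the standard simplification laws that Isabelle/HOL derives when a record type is declared.

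Concretely, I would unfold the definitions of $\lget_{\lrec^{R}_{f_i}}$ and $\lput_{\lrec^{R}_{f_i}}$ and then discharge each axiom in turn. For \ref{law:put-get}, we must show $f_i(s(f_i := v)) = v$, which is precisely the selector/update law for matching fields. For \ref{law:put-put}, we need $(s(f_i := v'))(f_i := v) = s(f_i := v)$, which follows from the idempotence of updates at the same field (a later update overrides an earlier one). For \ref{law:get-put}, we need $s(f_i := f_i(s)) = s$, which is the standard ``update-with-selector'' law stating that reinjecting the current value of a field leaves the record unchanged. Each goal is a one-line rewrite; once the definitions are unfolded, a single invocation of \textsf{simp} using the generated record lemmas suffices.

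There is essentially no mathematical obstacle; the only subtlety worth remarking on is that the result is stated schematically over an arbitrary record type $R$ and arbitrary field $f_i$. In the Isabelle mechanisation this is not a single theorem but rather a pattern that must be instantiated for each concrete record definition, so the supporting infrastructure would naturally be packaged either as a locale interpretation performed once per record, or as an automated command that emits the three axiom proofs whenever a new record-based state space is introduced (the latter fits the ``command for constructing lens-based state spaces'' mentioned in the introduction). Beyond this bookkeeping, the proof itself is a routine unfolding of definitions followed by the canonical record field laws.
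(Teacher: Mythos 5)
Your proof is correct and follows essentially the same route as the paper's: unfold $\lget$ and $\lput$ for the record lens and discharge \ref{law:put-get}, \ref{law:put-put}, and \ref{law:get-put} in turn via the standard record selector/update simplification laws. Your remark about packaging the per-record instantiation as an automated command matches what the paper does with its \ckey{alphabet} command.
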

\begin{proof}
For illustration, we prove each lens axiom in turn.
\begin{enumerate}
  \item \ref{law:put-get}: $\lget~(\lput~s~v) = f_i~(s(f_i := v)) = v$
  \item \ref{law:put-put}: $\lput~(\lput~s~v')~v = (\lambda s~v @ s(f_i := v))~(s(f_i := v'))~v = (\lambda v @ s(f_i := v')(f_i := v))~v = \lput~s~v$
  \item \ref{law:get-put}: $\lput~s~(\lget~s) = (\lambda s~v @ s(f_i := v))~s~(f_i~s) = s(f_i := f_i~s) = s$
\end{enumerate}
Consequently, the record lens is indeed a total lens.
\end{proof}

We can similarly show that $\lfst$, $\lsnd$, and $\lfun$ are total lenses\footnote{All omitted proofs can be found in
  our Isabelle/HOL mechanisation~\cite{Foster18c-Optics}.}.

\begin{lemma} For any $A$,$B$, and $k \in A$, $\lfst^{A,B}$, $\lsnd^{A,B}$, and $\lfun^{A,B}_k$ are total lenses. \isalink{https://github.com/isabelle-utp/utp-main/blob/d18036c259bbc47cbd4200f2428258934e935699/optics/Lens_Instances.thy\#L20}
\end{lemma}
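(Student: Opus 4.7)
The plan is to verify each of the three total-lens axioms \ref{law:put-get}, \ref{law:put-put}, and \ref{law:get-put} for each of the three lenses in turn, by direct unfolding of the definitions, in exactly the same style as the proof of \cref{thm:reclens}. Since the lenses have quite different shapes (product projection versus function update), I would treat them separately rather than attempt a uniform argument.

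For $\lfst^{A,B}$ the work reduces to elementary calculations on pairs. Given a pair $(x,y) : A \times B$ and a value $z : A$, \ref{law:put-get} holds because $\lget_{\lfst}~(\lput_{\lfst}~(x,y)~z) = \lget_{\lfst}~(z,y) = z$; \ref{law:put-put} because $\lput_{\lfst}~(\lput_{\lfst}~(x,y)~z')~z = \lput_{\lfst}~(z',y)~z = (z,y) = \lput_{\lfst}~(x,y)~z$; and \ref{law:get-put} because $\lput_{\lfst}~(x,y)~(\lget_{\lfst}~(x,y)) = \lput_{\lfst}~(x,y)~x = (x,y)$. The argument for $\lsnd^{A,B}$ is entirely symmetric, replacing the first pair-component with the second.

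For the function lens $\lfun^{A,B}_k$, the axioms also reduce to calculation, but now we must reason under $\lambda$-binders and appeal to functional extensionality. Writing $g \defs \lput_{\lfun_k}~f~v = (\lambda x @ \hifthenelse{x = k}{v}{f(x)})$, \ref{law:put-get} is immediate, since $g(k)$ reduces through the true branch to $v$. For \ref{law:put-put}, set $g' \defs \lput_{\lfun_k}~f~v'$; then $\lput_{\lfun_k}~g'~v$ and $\lput_{\lfun_k}~f~v$ agree at $x = k$ (both give $v$) and at $x \neq k$ (both give $f(x)$, because the outer $\lput$ only alters the value at $k$, and the inner one falls through to $f(x)$ as well), so they are equal by extensionality. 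For \ref{law:get-put}, the function $\lput_{\lfun_k}~f~(f(k))$ sends $k$ to $f(k)$ and every $x \neq k$ to $f(x)$, hence coincides with $f$ pointwise.

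The only step that is not a one-line rewrite is the \ref{law:put-put} case for $\lfun^{A,B}_k$, where a case split on whether $x = k$ is required inside the nested conditional; I expect this to be the main, though still routine, obstacle. In Isabelle, all six subgoals should be dischargeable by \textsf{simp} combined with a case split on $x = k$ and an application of function extensionality.
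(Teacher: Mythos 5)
Your proposal is correct and follows exactly the approach the paper intends: the paper omits this proof (deferring to the mechanisation), but its worked proof of Lemma~\ref{thm:reclens} for the record lens uses precisely the same direct unfold-and-calculate style, and your calculations for all nine axiom instances check out, including the extensionality argument and the case split on $x = k$ for the function lens.
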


\noindent While both \ref{law:put-get} and \ref{law:put-put} are satisfied for most useful state-space models we can
consider, this is not the case for \ref{law:get-put}. For example, if we consider a lens that projects the valuation of
an element $x : A$ from a partial function $f : A \pfun B$, then $\lget$ is only meaningful when $x \in \dom(f)$. Since
$\lget$ is total, it must return a value, but this will be arbitrary and therefore placing it back into $f$ alters its
domain. We do not consider lenses that do not satisfy \ref{law:get-put} in this paper, but simply observe that total
lenses are a useful, though not universal solution for state space modelling.

\subsection{Independence}
\label{sec:lens-indep}

So far we have considered the behaviour of individual lenses, but programs reference several variables and so it is
necessary to compare them. One of the most important relationships between lenses is independence of their corresponding
views, which is illustrated in Figure~\ref{fig:indep}. We formally characterise independence below.
\begin{definition}[Independent Lenses] \label{def:lens-indep}
\noindent Lenses $X : \view_1 \lto \src$ and $Y : \view_2 \lto \src$ are independent, written $X \lindep Y$, provided they
satisfy the following laws: \isalink{https://github.com/isabelle-utp/utp-main/blob/d18036c259bbc47cbd4200f2428258934e935699/optics/Lens_Laws.thy\#L326}
  \vspace{-1ex}
  \begin{align*}
    \lput_X~(\lput_Y~s~v)~u &~=~ \lput_Y~(\lput_X~s~u)~v \tag{LI1} \label{law:PutsComm} \\
    \lget_X~(\lput_Y~s~v) &~=~ \lget_X~s \tag{LI2} \label{law:PutIrr1} \\
    \lget_Y~(\lput_X~s~u) &~=~ \lget_Y~s \tag{LI3} \label{law:PutIrr2}
  \end{align*}
\end{definition}
\noindent Lenses $X$ and $Y$, which share the same state space but not necessarily the same view type, are independent
provided that applications of their respective $\lput$ functions commute (\ref{law:PutsComm}), and their respective
$\lget$ functions are not influenced by the corresponding $\lput$ functions (\ref{law:PutIrr1}, \ref{law:PutIrr2}). In
the encoding of \cref{ex:st-hp}, we have that $x \lindep y$, $x \lindep hp$, and $y \lindep hp$: these are distinct
variables. Nevertheless, lens independence captures a deeper concept, since lenses with different types are not
guaranteed to be independent, as they might represent a different view on the same region.

Independence can, for example, be used capture the condition for commutativity of assignments. If $x$ and $y$ are
variables, then we can characterise the following law
$$(x := e \relsemi y := f) = (y := f \relsemi x := e) \text {\quad provided $x \lindep y$, and $e$ and $f$ are constants}$$
which partly formalises \cref{ex:asn-comm}. Such assignment laws will be explored further in
Section~\ref{sec:isabelle-utp}. We can show that $\lfst \lindep \lsnd$ using the calculation below.

\begin{lemma} $\lfst \lindep \lsnd$ \isalink{https://github.com/isabelle-utp/utp-main/blob/master/optics/Lens_Algebra.thy\#L301} \end{lemma}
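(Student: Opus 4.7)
The plan is a direct unfolding of definitions. Since both $\lfst$ and $\lsnd$ operate by pattern matching on a pair, I would take an arbitrary state $s : S_1 \times S_2$, decompose it as $s = (a, b)$, and verify each of the three laws \ref{law:PutsComm}, \ref{law:PutIrr1}, \ref{law:PutIrr2} by explicit computation.

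After decomposition, the four relevant operations reduce to: $\lget_{\lfst}(a,b) = a$, $\lget_{\lsnd}(a,b) = b$, $\lput_{\lfst}(a,b)\,u = (u, b)$, and $\lput_{\lsnd}(a,b)\,v = (a, v)$. With these in hand, \ref{law:PutsComm} follows because both sides of
\[
\lput_{\lfst}(\lput_{\lsnd}(a,b)\,v)\,u \;=\; \lput_{\lsnd}(\lput_{\lfst}(a,b)\,u)\,v
\]
evaluate to $(u, v)$. Law \ref{law:PutIrr1} follows because $\lput_{\lsnd}$ only touches the second component of the pair, so applying the first projection afterwards yields $a$, which is exactly $\lget_{\lfst}(a,b)$. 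Law \ref{law:PutIrr2} is the symmetric computation: $\lput_{\lfst}$ leaves the second component alone, so the second projection is preserved.

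There is essentially no obstacle here; the only book-keeping concern is handling the pair destructuring cleanly. In Isabelle this would be discharged by unfolding the lens definitions together with a pair-splitting rule (e.g.\ \textsf{case\_prod\_beta} or \textsf{prod.case}), after which each of the three obligations closes by \textsf{simp} or by reflexivity. No induction, case analysis beyond the single pair split, or appeal to lens axioms is required, since independence here is a purely structural fact about the cartesian product.
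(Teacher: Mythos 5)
Your proposal is correct and matches the paper's proof exactly: the paper also decomposes $s = (s_1, s_2)$, verifies \ref{law:PutsComm} and \ref{law:PutIrr1} by direct evaluation (both sides of the commutation law reducing to $(u,v)$), and dispatches \ref{law:PutIrr2} by symmetry. No differences worth noting.
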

\begin{proof} We first prove that $\lput_\lfst$ commutes with $\lput_\lsnd$ (\ref{law:PutsComm}) by evaluation, assuming $s = (s_1, s_2)$:
  \begin{align*} & \lput_\lfst (\lput_\lsnd~s~v)~u ~=~ \lput_\lfst (\lput_\lsnd~(s_1, s_2)~v)~u \\
    =~~& \lput_\lfst~(s_1, v)~u ~=~ (u, v) ~=~ \lput_\lsnd~(u, s_2)~v \\ =~~& \lput_\lsnd~(\lput_\lfst~(s_1, s_2)~u)~v
    ~=~ \lput_\lsnd~(\lput_\lfst~s~u)~v \end{align*} Similarly, we can also prove \ref{law:PutIrr1} by evaluation:
  $$\lget_\lfst~(\lput_\lsnd~s~v) ~=~ \lget_\lfst~(\lput_\lsnd~(s_1, s_2)~v) ~=~ \lget_\lfst~(s_1, v) ~=~ s_1 ~=~ \lget_\lfst~(s_1,
  s_2) ~=~ \lget_\lfst~s$$ Finally, \ref{law:PutIrr2} follows by a symmetric proof.
\end{proof}

A further, illustrative result is the meaning of independence in the total function lens:

\begin{lemma} \label{thm:lfun-indep} $\lfun_a \lindep \lfun_b \iff a \neq b$ \isalink{https://github.com/isabelle-utp/utp-main/blob/d18036c259bbc47cbd4200f2428258934e935699/optics/Lens_Instances.thy\#L25}
\end{lemma}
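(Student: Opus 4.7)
The plan is to prove the biconditional by establishing each direction separately, unfolding both the definition of $\lfun$ and the three independence laws of Definition~\ref{def:lens-indep}, and then exploiting function extensionality on the state space $A \to B$.

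For the forward direction, I would assume $a \ne b$ and verify \ref{law:PutsComm}, \ref{law:PutIrr1}, and \ref{law:PutIrr2} by pointwise calculation. For \ref{law:PutIrr1}, I unfold to obtain
\[
\lget_{\lfun_a}(\lput_{\lfun_b}\,f\,v) \;=\; (\lambda x.\,\mathsf{if}\ x=b\ \mathsf{then}\ v\ \mathsf{else}\ f(x))\,(a) \;=\; f(a) \;=\; \lget_{\lfun_a}\,f,
\]
since $a \ne b$ discharges the conditional; \ref{law:PutIrr2} is symmetric. For \ref{law:PutsComm}, I apply extensionality and do a case split on whether the input $x$ is equal to $a$, equal to $b$, or neither; in the first case both sides evaluate to $u$, in the second to $v$, and otherwise to $f(x)$, so the two $\lput$-compositions agree.

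For the backward direction, I would argue the contrapositive: assume $a = b$, so $\lfun_a$ and $\lfun_b$ are the same lens, and derive that \ref{law:PutIrr1} fails. Choose any $f : A \to B$ and any $v : B$ with $v \ne f(a)$ (the intended reading of the lemma presumes $B$ has at least two distinct elements, which is the typical setting for $\lfun$). Then \ref{law:PutIrr1} would require
\[
\lget_{\lfun_a}(\lput_{\lfun_a}\,f\,v) \;=\; \lget_{\lfun_a}\,f,
\]
but the left-hand side collapses to $v$ by \ref{law:put-get} (Lemma for total lenses), while the right-hand side is $f(a)$, a contradiction.

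The routine work is in the forward direction and amounts to unfolding definitions and a short case analysis. The one subtle point — and the step I would take care to justify explicitly in the Isabelle mechanisation — is the backward direction's reliance on exhibiting distinct values of $B$ to falsify the independence axioms when $a = b$; in a formal setting this is typically handled either by an implicit non-degeneracy assumption on $B$ or by noting that the degenerate single-element case makes the conditional in $\lfun_k$'s $\lput$ vacuous.
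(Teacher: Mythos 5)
Your proposal is correct and matches the paper's (mechanised) argument: the paper gives no in-text proof, and the Isabelle proof likewise establishes the forward direction by unfolding $\lput$ and case-splitting on the function argument, and the backward direction by exhibiting two distinct values of $B$ to falsify one of the independence laws when $a = b$. The subtlety you flag is real and is resolved in the mechanisation exactly as you anticipate: the codomain $B$ of $\lfun$ carries an explicit ``at least two elements'' sort constraint (without which the biconditional fails for singleton $B$), a side condition the paper's statement of the lemma silently omits.
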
 

\noindent Two instances of the total function lens are independent if, and only if, the parametrised inputs $a$ and $b$ are
different. This reflects the intuition of a function -- every input is associated with a distinct output.

We can actually show that axioms \ref{law:PutIrr1} and \ref{law:PutIrr2} in Definition~\ref{def:lens-indep} can be
dispensed with when $X$ and $Y$ are both total lenses. Consequently, we can adopt a simpler definition of independence:

\begin{theorem} \label{thm:indep-total}
  If lenses $X : \view_1 \lto \src$ and $Y : \view_2 \lto \src$ are both total then \isalink{https://github.com/isabelle-utp/utp-main/blob/d18036c259bbc47cbd4200f2428258934e935699/optics/Lens_Laws.thy\#L356}
  $$X \lindep Y \iff \forall (u, v, s) @ \lput_X~(\lput_Y~s~v)~u ~=~ \lput_Y~(\lput_X~s~u)~v$$
\end{theorem}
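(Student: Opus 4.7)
The forward direction is immediate: if $X \lindep Y$ then, in particular, \ref{law:PutsComm} holds, which is exactly the right-hand side. The substantive content lies in the converse, namely that when $X$ and $Y$ are total, \ref{law:PutIrr1} and \ref{law:PutIrr2} are consequences of \ref{law:PutsComm} together with the three total-lens axioms. My plan is to derive \ref{law:PutIrr1} directly from commuting puts, and then obtain \ref{law:PutIrr2} by the obvious symmetry argument (swap the roles of $X$ and $Y$).

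The key idea for deriving \ref{law:PutIrr1}, i.e.\ $\lget_X(\lput_Y~s~v) = \lget_X~s$, is to rewrite $s$ in the right-hand side via \ref{law:get-put} and then use commutativity to move $\lput_X$ past $\lput_Y$ so that \ref{law:put-get} can discharge the outer $\lget_X$. Concretely, let $w \defs \lget_X~s$, so that \ref{law:get-put} gives $\lput_X~s~w = s$. Substituting this identity inside $\lput_Y$ and then applying the commuting-puts hypothesis yields
\begin{align*}
  \lput_Y~s~v \;=\; \lput_Y~(\lput_X~s~w)~v \;=\; \lput_X~(\lput_Y~s~v)~w.
\end{align*}
Now applying $\lget_X$ to both sides and using \ref{law:put-get} on the right produces $\lget_X~(\lput_Y~s~v) = w = \lget_X~s$, which is exactly \ref{law:PutIrr1}. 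Law \ref{law:PutIrr2} then follows by the same calculation with $X$ and $Y$ interchanged, using the fact that the commuting-puts hypothesis is symmetric in $X$ and $Y$.

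I do not anticipate significant obstacles: the proof is essentially a short equational rearrangement, and the only nontrivial step is noticing that \ref{law:get-put} lets us convert $s$ into an expression of the form $\lput_X~s~w$ so that commutativity becomes applicable. Both \ref{law:put-put} is, interestingly, not required, since we only need to absorb a single $\lput_X~s~w$ back into $s$ and then re-expose it through commutation. The proof therefore uses \ref{law:get-put} once and \ref{law:put-get} once per direction, plus a single application of the hypothesis.
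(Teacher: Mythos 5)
Your proof is correct, and it is the natural argument for this result (the paper itself omits the proof, deferring to the Isabelle mechanisation): the forward direction is immediate, and for the converse your chain $\lput_Y~s~v = \lput_Y~(\lput_X~s~(\lget_X~s))~v = \lput_X~(\lput_Y~s~v)~(\lget_X~s)$ via \ref{law:get-put} and the commuting-puts hypothesis, followed by \ref{law:put-get}, correctly yields \ref{law:PutIrr1}, with \ref{law:PutIrr2} by the symmetry of the hypothesis. Your observation that \ref{law:put-put} is never used is also accurate.
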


\noindent However, the weaker definition of independence is still useful for the situation when not all three axioms of
total lenses are satisfied~\cite{Foster20-LocalVars}.

\subsection{Lens Combinators}

Lenses can be independent, but they can also be ordered by containment using the sublens relation $X \lsubseteq Y$,
which orders lenses. The intuition is that $Y$ captures a larger region of the state space than $X$. One interpretation
is a subset operator for relating lens sets. Before we can get to the definition of this $\lsubseteq$, we first need to
define some basic lens combinators.
\begin{definition}[Basic Lenses] $\lzero_S ~\defs~ \lquad{\{\emptyset\}}{S}{\lambda s @ \emptyset}{\lambda s~v @ s} \qquad \lone_S  ~\defs~ \lquad{S}{S}{\lambda s @ s}{\lambda s~v @ v}$ \isalink{https://github.com/isabelle-utp/utp-main/blob/d18036c259bbc47cbd4200f2428258934e935699/optics/Lens_Algebra.thy\#L69}
\end{definition}
\begin{lemma}[Basic Lenses Closure] For any $S$, $\lzero_S$ and $\lone_S$ are total lenses. \isalink{https://github.com/isabelle-utp/utp-main/blob/d18036c259bbc47cbd4200f2428258934e935699/optics/Lens_Algebra.thy\#L126} \end{lemma}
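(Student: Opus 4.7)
The plan is to verify the three total-lens axioms \ref{law:put-get}, \ref{law:put-put}, and \ref{law:get-put} for each of $\lzero_S$ and $\lone_S$ by direct evaluation, following exactly the style of the proof of \cref{thm:reclens}. Both lenses are defined by very simple lambda expressions, so each axiom should reduce to a trivial equation after $\beta$-reduction.

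First I would handle $\lone_S$, whose $\lget$ is the identity $\lambda s @ s$ and whose $\lput$ is the projection $\lambda s~v @ v$. Here \ref{law:put-get} reduces to $\lget(\lput~s~v) = v$; \ref{law:put-put} reduces to $\lput(\lput~s~v')~v = v = \lput~s~v$, since the earlier source is discarded; and \ref{law:get-put} reduces to $\lput~s~(\lget~s) = \lput~s~s = s$. Each line is a single unfold.

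Next I would handle $\lzero_S$, whose view type is the singleton $\{\emptyset\}$, whose $\lget$ is the constant $\lambda s @ \emptyset$, and whose $\lput$ ignores its view argument via $\lambda s~v @ s$. For \ref{law:put-get}, $\lget(\lput~s~v) = \emptyset$, which equals $v$ because the view type is a singleton and therefore $v = \emptyset$ for every such $v$; this is the one place where I would pause to note explicitly that the result relies on the singleton character of the view type. Axiom \ref{law:put-put} is immediate because $\lput$ returns its source argument unchanged: both sides equal $s$. Axiom \ref{law:get-put} is similarly immediate: $\lput~s~(\lget~s) = \lput~s~\emptyset = s$.

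The main (modest) obstacle is purely notational rather than mathematical: ensuring the singleton-view-type observation for $\lzero_S$ is spelled out, because \ref{law:put-get} is the only axiom whose validity is not literally a $\beta$-reduction. Beyond that, the whole proof is a routine case analysis over the two lenses and the three axioms, requiring only unfolding of the definitions, and can be reproduced in a couple of lines per case in Isabelle by \textsf{simp} after unfolding the lens constructors.
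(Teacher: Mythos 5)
Your proof is correct and is essentially the argument the paper intends (the paper omits it, deferring to the Isabelle mechanisation, but it follows exactly the direct-verification style of the worked proof of Lemma~\ref{thm:reclens}). You have also correctly isolated the only non-trivial point, namely that \ref{law:put-get} for $\lzero_S$ depends on the view type being the singleton $\{\emptyset\}$ so that $v = \emptyset$ is forced.
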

\noindent The $\lzero$ lens has a unitary view type, $\{\emptyset\}$. Consequently, for any element of the
state space, it always views the same value $\emptyset$. It cannot be used to either observe or change a state, and it
is therefore entirely ineffectual in nature. It can be interpreted as the empty set of lenses. Conversely, the $\lone$
lens, with type $S \lto S$, views the entirety of the state. It can be interpreted as the set of all lenses in the
state space. We sometimes omit type information from these basic lenses when this can be inferred from the context.

It is useful in several circumstances to chain lenses together, provided that the view of one matches the source of the
other. For this, we adopt the lens composition operator, originally defined by J.~Foster~\cite{Foster09}.
\begin{definition} \label{def:lenscomp} $X \lcomp Y ~~\defs~~ \lquad{\view_X}{\src_Y}{\lget_X \circ \lget_Y}{\lambda s~v @ \lput_Y~s~(\lput_X~(\lget_Y~s)~v)}  ~~ \text{if } \src_X = \view_Y$ \isalink{https://github.com/isabelle-utp/utp-main/blob/d18036c259bbc47cbd4200f2428258934e935699/optics/Lens_Algebra.thy\#L22}
  
\end{definition}
\begin{lemma}[Composition Closure] If $X$ and $Y$ are total lenses, then $X \lcomp Y$ is a total lens. \isalink{https://github.com/isabelle-utp/utp-main/blob/d18036c259bbc47cbd4200f2428258934e935699/optics/Lens_Algebra.thy\#L132}
\end{lemma}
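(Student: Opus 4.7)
The plan is to verify each of the three total-lens axioms (\ref{law:put-get}, \ref{law:put-put}, \ref{law:get-put}) directly for the composite, unfolding the definitions of $\lget_{X \lcomp Y} = \lget_X \circ \lget_Y$ and $\lput_{X \lcomp Y}~s~v = \lput_Y~s~(\lput_X~(\lget_Y~s)~v)$ and then applying the corresponding axioms for $X$ and $Y$ in sequence. Since composition simply nests a $\lput_X$ inside a $\lput_Y$, each composite axiom reduces to an ``inner'' application of the $X$-axiom at the relevant view element followed by an ``outer'' application of the $Y$-axiom.

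First I would dispatch \ref{law:put-get}: after unfolding, the goal becomes $\lget_X~(\lget_Y~(\lput_Y~s~(\lput_X~(\lget_Y~s)~v))) = v$, and a single use of \ref{law:put-get} for $Y$ collapses the innermost $\lget_Y \circ \lput_Y$, leaving $\lget_X~(\lput_X~(\lget_Y~s)~v)$, which reduces to $v$ by \ref{law:put-get} for $X$. Next I would verify \ref{law:get-put} by unfolding to $\lput_Y~s~(\lput_X~(\lget_Y~s)~(\lget_X~(\lget_Y~s))) = s$; here \ref{law:get-put} for $X$ (applied with state $\lget_Y~s$) rewrites the inner expression to $\lget_Y~s$, and then \ref{law:get-put} for $Y$ finishes the job.

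The one step that needs mild care is \ref{law:put-put}. Its left-hand side contains a doubly nested $\lput$, and in the argument of the outer $\lput_Y$ one must first compute $\lget_Y~(\lput_Y~s~(\lput_X~(\lget_Y~s)~v'))$ using \ref{law:put-get} for $Y$ in order to expose an $X$-redex of the form $\lput_X~(\lput_X~(\lget_Y~s)~v')~v$, which then collapses via \ref{law:put-put} for $X$. Finally \ref{law:put-put} for $Y$ absorbs the intermediate $\lput_Y$. Beyond bookkeeping, there is no genuine obstacle: every subterm lines up with exactly one axiom invocation.

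I would also note, for cleanliness, that the well-typedness side condition $\src_X = \view_Y$ in Definition~\ref{def:lenscomp} is what makes the nested $\lput_X~(\lget_Y~s)~v$ meaningful, so each rewriting step is type-correct. The proof is essentially a chain of equational rewrites, and in Isabelle it should be discharged by \textsf{simp} with the lens axioms as rewrite rules once the definitions are unfolded.
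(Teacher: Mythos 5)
Your proof is correct and matches the paper's approach: the paper omits this proof (deferring to the Isabelle mechanisation), but the mechanised argument is exactly this direct verification of \ref{law:put-get}, \ref{law:put-put}, and \ref{law:get-put} by unfolding Definition~\ref{def:lenscomp} and applying the inner $X$-axiom followed by the outer $Y$-axiom, in the same style as the paper's worked proof of Lemma~\ref{thm:reclens}. Your handling of the one subtle step — using \ref{law:put-get} for $Y$ to expose the $\lput_X$-redex inside the \ref{law:put-put} goal — is exactly right.
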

\noindent A lens composition, $X \lcomp Y$, for $X : A \lto B$ and $Y : B \lto C$ chains together two lenses. It is illustrated in
Figure~\ref{fig:lcomp}. When $X$ characterises an $A$-shaped region of $B$, and $Y$ characterises a $B$-shaped region of
$C$, overall lens $X \lcomp Y$ characterises an $A$-shaped region of $C$. It is useful when we have a state space
composed of several individual state components, and we wish to select a variable of an individual component.
\begin{example} \label{ex:objlens} In an object oriented program we may have $m \in \nat$
  objects whose states are characterised by lenses $o_i : O_i \lto \src$, for $i \in \{1..m\}$, where each $O_i$
  characterises the respective object state. An object $o_k$ has $n \in \nat$ attributes, characterised by lenses $x_j
  : \tau_j \lto O_k$ for $j \in \{1..n\}$. We can select one of these attributes from the global state context by the
  composition $x_j \lcomp o_k : \tau_j \lto \src$. \qed
\end{example}
\noindent Composition is also useful for collections, such as the heap array in \cref{ex:st-hp}.
If $hp : (addr \to \num) \lto \src$, that is, a lens which views a function in $\src$, then we can represent a lookup,
$hp[loc]$, by the composition $\lfun^{addr,\num}_{loc} \lcomp hp$\footnote{Here, $loc$ is a constant value, though this
can be relaxed to an expression that can depend on other state variables~\cite{Foster20-LocalVars}.}.

Lens composition obeys a number of useful algebraic properties, as shown below.
\begin{theorem}[Composition Laws] If $X : A \lto B$, $Y : B \lto C$, and $Z : C \lto D$ are total lenses then the following identities hold: \isalink{https://github.com/isabelle-utp/utp-main/blob/d18036c259bbc47cbd4200f2428258934e935699/optics/Lens_Algebra.thy\#L214}
  $$X \lcomp (Y \lcomp Z) ~~=~~ (X \lcomp Y) \lcomp Z \qquad X \lcomp \lone_B ~~=~~ \lone_A \lcomp X = X \qquad \lzero_A \lcomp X ~~=~~ \lzero_B$$
\end{theorem}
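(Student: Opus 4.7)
The plan is to prove each of the three identities by reducing both sides to the same quadruple using the definition of lens composition, function-composition associativity, and, where needed, the total-lens axioms from \cref{def:total-lenses}. Since two lenses are equal iff their view types, state spaces, get functions, and put functions coincide, every step will amount to routine equational rewriting; the work is almost all bookkeeping.

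For associativity, I first check that the side conditions on composition are met: with $X : A \lto B$, $Y : B \lto C$, $Z : C \lto D$, both $X \lcomp (Y \lcomp Z)$ and $(X \lcomp Y) \lcomp Z$ have view type $A$ and state space $D$. The get component of either side unfolds to $\lget_X \circ \lget_Y \circ \lget_Z$, so they agree by associativity of function composition. For the put component, I expand $\lput_{X \lcomp (Y \lcomp Z)}~s~v$ using \cref{def:lenscomp} twice from the outside in, obtaining
\[
\lput_Z~s~\bigl(\lput_Y~(\lget_Z~s)~(\lput_X~(\lget_Y~(\lget_Z~s))~v)\bigr),
\]
and perform the symmetric expansion on the right-hand side, which yields the very same term. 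No axioms are needed here; associativity is just a definitional unfolding.

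For the identities, I observe that $\lget_{\lone_B} = \mathrm{id}_B$ and $\lput_{\lone_B}~s~v = v$, so unfolding $X \lcomp \lone_B$ gives get $\lget_X$ and put $\lambda s~v @ \lput_{\lone_B}~s~(\lput_X~(\lget_{\lone_B}~s)~v) = \lput_X~s~v$, which is exactly $X$. Likewise for $\lone_A \lcomp X$, the $\lone_A$-put absorbs its first argument, leaving $\lambda s~v @ \lput_X~s~v$. No lens axioms are invoked; this is purely by $\beta$-reduction against the definitions of $\lone$ and $\lcomp$.

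The zero law is the most interesting step and the one requiring totality of $X$. Unfolding $\lzero_A \lcomp X$ using $\lput_{\lzero_A}~a~v = a$ and $\lget_{\lzero_A} = \lambda a @ \emptyset$, the get component collapses to the constant $\lambda s @ \emptyset$, matching $\lzero_B$. The put component becomes
\[
\lambda s~v @ \lput_X~s~(\lput_{\lzero_A}~(\lget_X~s)~v) \;=\; \lambda s~v @ \lput_X~s~(\lget_X~s),
\]
and here I apply \ref{law:get-put} to reduce $\lput_X~s~(\lget_X~s)$ to $s$, yielding $\lambda s~v @ s$, which is exactly the put of $\lzero_B$. This use of \ref{law:get-put} is the only place in the whole theorem where the total-lens axioms are essential, and it is the one genuine subtlety: without totality, the composition $\lzero_A \lcomp X$ would fail to be $\lzero_B$ in general, since $\lput_X~s~(\lget_X~s)$ could differ from $s$.
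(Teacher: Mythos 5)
Your proof is correct and matches the argument in the paper's Isabelle mechanisation (the paper itself omits the proof): associativity and the unit laws follow by pure definitional unfolding of $\lcomp$, $\lone$, and $\lzero$, and you rightly identify that \ref{law:get-put} is the single place where totality of $X$ is needed, namely to collapse $\lput_X~s~(\lget_X~s)$ to $s$ in the annihilation law.
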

\noindent Lens composition is associative, since the order in which lenses are composed is irrelevant, and it has $\lone$ as its left and right units. Moreover, $\lzero$ is a left
annihilator, since if the view is reduced to $\{\emptyset\}$ then no further data can be extracted.

While composition can be used to chain lenses in sequence, it is also possible to compose them in parallel, which is the
purpose of the next operator.
\begin{definition}[Lens Sum] \label{def:lens-sum} \isalink{https://github.com/isabelle-utp/utp-main/blob/d18036c259bbc47cbd4200f2428258934e935699/optics/Lens_Algebra.thy\#L38}
  \begin{align*}
    X \lplus Y ~~\defs~~& \lquad{\view_X \times \view_Y}{\src_X}{\lambda s @ (\lget_X~s, \lget_Y~s)}{\lambda s~(v_1, v_2) @ \lput_Y~(\lput_X~s~v_1)~v_2} & \text{if } \src_X = \src_Y
  \end{align*}
\end{definition}
\noindent Lens sum allows us to simultaneously manipulate two regions of $\src$, characterised by
$X : \view_1 \lto \src$ and $Y : \view_2 \lto \src$. Consequently, the view type is the product of the two constituent
views: $\view_1 \times \view_2$. The $\lget$ function applies both constituent $\lget$ functions in parallel. The
$\lput$ function applies the constituent $\lput$ functions, but in sequence since we are in the function domain. We can
prove that total independent lenses are closed under lens sum.

\begin{lemma}[Sum Closure] If $X$ and $Y$ are independent total lenses, then $X \lplus Y$ is a total lens. \isalink{https://github.com/isabelle-utp/utp-main/blob/d18036c259bbc47cbd4200f2428258934e935699/optics/Lens_Algebra.thy\#L157}
\end{lemma}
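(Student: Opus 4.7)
The plan is to verify each of the three total-lens axioms for $X \lplus Y$ directly, reducing each goal to the corresponding axioms of $X$ and $Y$ together with the three independence laws \ref{law:PutsComm}, \ref{law:PutIrr1}, \ref{law:PutIrr2}. The definition of $\lplus$ unfolds with pattern matching on the product view $(v_1,v_2)$, and since $\src_X = \src_Y$, all the intermediate states live in a common space, so the computations compose cleanly.

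For \ref{law:put-get}, I would pick an arbitrary $s$ and $(v_1,v_2)$ and compute $\lget(\lput~s~(v_1,v_2))$, which unfolds to the pair $(\lget_X(\lput_Y(\lput_X s~v_1)~v_2),\ \lget_Y(\lput_Y(\lput_X s~v_1)~v_2))$. The first component collapses to $v_1$ by \ref{law:PutIrr1} followed by \ref{law:put-get} on $X$, and the second collapses to $v_2$ by \ref{law:put-get} on $Y$. For \ref{law:get-put}, the argument is symmetric and short: $\lget~s = (\lget_X s,\lget_Y s)$, and plugging into $\lput$ gives $\lput_Y(\lput_X s~(\lget_X s))~(\lget_Y s)$, which reduces to $s$ by applying \ref{law:get-put} for $X$ and then \ref{law:get-put} for $Y$.

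The main obstacle, and really the only nontrivial step, is \ref{law:put-put}, since the outer $\lput$ of $X\lplus Y$ expands into a four-fold nested application $\lput_Y(\lput_X(\lput_Y(\lput_X s~v_1')~v_2')~v_1)~v_2$, and the target is $\lput_Y(\lput_X s~v_1)~v_2$. My plan is to first use \ref{law:PutsComm} to swap the inner $\lput_X(\lput_Y\cdots)$ into $\lput_Y(\lput_X\cdots)$, collapsing the doubled $\lput_X$ using \ref{law:put-put} for $X$; that leaves a $\lput_Y(\lput_Y\cdots)$ pair on the outside, which collapses by \ref{law:put-put} for $Y$. One has to apply \ref{law:PutsComm} at exactly the right subterm to expose the $\lput_X\lput_X$ redex, so the book-keeping deserves care even though no deep insight is required.

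Putting the three computations together yields that $X \lplus Y$ satisfies \ref{law:put-get}, \ref{law:put-put}, and \ref{law:get-put}, establishing that it is a total lens as required. Note that independence is used in an essential way only for \ref{law:put-get} (via \ref{law:PutIrr1}) and for \ref{law:put-put} (via \ref{law:PutsComm}); \ref{law:PutIrr2} is not strictly needed here but would be required for analogous results about, e.g., symmetry of sum or projections onto $Y$.
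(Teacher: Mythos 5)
Your proposal is correct: the direct verification of \ref{law:put-get}, \ref{law:put-put}, and \ref{law:get-put} for $X \lplus Y$, using \ref{law:PutIrr1} for the first component of \ref{law:put-get} and a single application of \ref{law:PutsComm} to expose the $\lput_X\lput_X$ redex in \ref{law:put-put}, is exactly the argument carried out in the paper's (omitted, mechanised) proof. Your side observation that \ref{law:PutIrr2} is not needed for this particular closure result is also accurate.
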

\noindent We require that $X \lindep Y$, since manipulation of two overlapping regions could have unexpected results,
and then also the order of the $\lput$ functions is irrelevant.

Lens sum can characterise independent concurrent views and updates to
the state space. For example, we can encode a simultaneous update to two variables as $(x + y) := (e, f)$. Moreover,
lens sum can also be used to characterise sets of independent lenses. If we model three variables using lenses $x$, $y$,
and $z$ which share the same source and are all independent, then the set $\{x, y, z\}$ can be represented by
$x + y + z$.

With the help of lens composition, we can now also prove some algebraic laws for lens sum.
\begin{lemma} If $X$ and $Y$ are independent total lenses then the following identities hold:
  \isalink{https://github.com/isabelle-utp/utp-main/blob/d18036c259bbc47cbd4200f2428258934e935699/optics/Lens_Algebra.thy\#L348}
  $$\lfst \lcomp (X \lplus Y) ~=~ X \quad \lsnd \lcomp (X \lplus Y) ~=~ Y \quad (X \lplus Y) \lcomp Z ~=~ (X \lcomp Z) \lplus (Y \lcomp Z)$$
\end{lemma}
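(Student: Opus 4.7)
The plan is to prove each identity by lens extensionality: show that after unfolding Definitions~\ref{def:lenscomp} and~\ref{def:lens-sum}, the $\lget$ and $\lput$ components of the two sides are pointwise equal. Type compatibility is easy to verify in each case (the sources and views line up exactly), so the work is entirely equational.

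For $\lfst \lcomp (X \lplus Y) = X$, the $\lget$ side is trivial: $\lget_\lfst(\lget_X~s,\lget_Y~s) = \lget_X~s$. The interesting part is the $\lput$ side, which after unfolding reduces to $\lput_Y(\lput_X~s~v)(\lget_Y~s)$, and I need this to equal $\lput_X~s~v$. The trick is to rewrite $\lget_Y~s$ as $\lget_Y(\lput_X~s~v)$ using axiom~\ref{law:PutIrr2} from independence, after which \ref{law:get-put} applied to $Y$ at the state $\lput_X~s~v$ collapses the outer $\lput_Y$ to yield $\lput_X~s~v$. The symmetric identity $\lsnd \lcomp (X \lplus Y) = Y$ follows analogously, except that its $\lput$ side reduces to $\lput_Y(\lput_X~s~(\lget_X~s))~v$, which collapses directly by \ref{law:get-put} for $X$ (independence is not even needed for this direction).

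For the distributive identity $(X \lplus Y) \lcomp Z = (X \lcomp Z) \lplus (Y \lcomp Z)$, the $\lget$ sides again match immediately, both producing the pair $(\lget_X(\lget_Z~s), \lget_Y(\lget_Z~s))$. For the $\lput$ sides I would unfold the right-hand side so that $\lput_{X\lcomp Z}~s~v_1$ becomes $\lput_Z~s~(\lput_X~(\lget_Z~s)~v_1)$, and then simplify $\lget_Z$ of this expression using \ref{law:put-get} on $Z$, producing $\lput_X~(\lget_Z~s)~v_1$. The right-hand side thus becomes an iterated $\lput_Z$ over $s$ which collapses to a single $\lput_Z~s~(\lput_Y~(\lput_X~(\lget_Z~s)~v_1)~v_2)$ by \ref{law:put-put} on $Z$, which is exactly the unfolding of the left-hand side.

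The main obstacle is the $\lput$ reduction in the distributivity law: nested $\lput_Z$ and $\lget_Z$ applications have to be simplified in the right order for \ref{law:put-get} and \ref{law:put-put} to fire. No deep reasoning is required — once the lens axioms and independence laws \ref{law:PutsComm}, \ref{law:PutIrr1}, \ref{law:PutIrr2} are supplied as rewrite rules, the Isabelle \textsf{simp} tactic should discharge all three equations automatically. It is worth noting that independence of $X$ and $Y$ is genuinely used only in the first identity; the second follows from \ref{law:get-put} alone, and the third is a purely syntactic consequence of the composition and sum definitions together with \ref{law:put-get} and \ref{law:put-put} for $Z$.
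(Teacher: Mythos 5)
Your proof is correct and follows exactly the route the paper's mechanisation takes: unfold Definitions~\ref{def:lenscomp} and~\ref{def:lens-sum}, check the $\lget$ components trivially, and discharge the $\lput$ components by rewriting with \ref{law:put-get}, \ref{law:put-put}, \ref{law:get-put} and the independence law \ref{law:PutIrr2}. Your side observation is also accurate — independence of $X$ and $Y$ is only needed for the $\lfst$ identity, while the $\lsnd$ identity needs only \ref{law:get-put} for $X$ and the distributivity law needs only the total-lens axioms for $Z$.
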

\noindent The first two identities show that $\lfst$ and $\lsnd$ composed with $\lplus$ yield the left- and right-hand side
lenses, respectively. If we perform a simultaneous update using $X + Y$, but then throw away one them by composing with
$\lfst$ and $\lsnd$, then the result is simply $X$ or $Y$, respectively. The third identity shows that $\lcomp$
distributes from the right through lens sum. It does not in general distribute from the left as such a construction is
not well-formed. We next show some independence properties for the operators introduced so far.
\begin{lemma}[Independence] If $X$, $Y$, and $Z$ are total lenses then the following laws hold: \isalink{https://github.com/isabelle-utp/utp-main/blob/d18036c259bbc47cbd4200f2428258934e935699/optics/Lens_Algebra.thy\#L295}

  \begin{minipage}{.3\linewidth}
  \begin{align*}
    \lzero \lindep X \\
    X \lindep Y ~\iff~&Y \lindep X \\
    Y \lindep Z ~\implies~& (X \lcomp Y) \lindep Z
  \end{align*} 
  \end{minipage}
  \begin{minipage}{.7\linewidth}
  \begin{align*}
    (X \lcomp Z) \lindep (Y \lcomp Z) ~\iff~& X \lindep Y \\
    X \lindep Z \land Y \lindep Z ~\implies~& (X \lplus Y) \lindep Z
  \end{align*} 
  \end{minipage}
\end{lemma}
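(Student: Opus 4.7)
The plan is to verify each of the five identities by unfolding \cref{def:lens-indep} into its three axioms \ref{law:PutsComm}, \ref{law:PutIrr1}, \ref{law:PutIrr2}, and then expanding the relevant combinator ($\lzero$, $\lcomp$, or $\lplus$) on each side and reducing using the total lens axioms \ref{law:put-get}, \ref{law:put-put}, \ref{law:get-put} together with the assumed independence hypotheses. The first two identities are essentially free: $\lzero \lindep X$ holds because $\lput_\lzero~s~u = s$ and $\lget_\lzero~s = \emptyset$, so every instance of the three axioms collapses to a reflexive equality; and $X \lindep Y \iff Y \lindep X$ is evident from the symmetry of the axiom set, where \ref{law:PutsComm} is self-symmetric and \ref{law:PutIrr1}, \ref{law:PutIrr2} exchange roles.

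For the composition law $Y \lindep Z \implies (X \lcomp Y) \lindep Z$, I would expand $\lput_{X \lcomp Y}~s~v$ as $\lput_Y~s~(\lput_X~(\lget_Y~s)~v)$ and observe that \ref{law:PutsComm} for $(Y,Z)$ allows the outer $\lput_Y$ and $\lput_Z$ to commute, while \ref{law:PutIrr1} for $(Y,Z)$ handles the nested occurrence $\lget_Y (\lput_Z~s~v)$. Axioms \ref{law:PutIrr1} and \ref{law:PutIrr2} for the composition then follow by single applications of the corresponding axioms for $(Y,Z)$, exploiting that $\lget_{X \lcomp Y} = \lget_X \circ \lget_Y$. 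The sum law $X \lindep Z \land Y \lindep Z \implies (X \lplus Y) \lindep Z$ proceeds analogously: expanding $\lput_{X \lplus Y}$ as $\lput_Y$ after $\lput_X$ reduces \ref{law:PutsComm} for the sum to two sequential applications of \ref{law:PutsComm} (once for $(X,Z)$, once for $(Y,Z)$), and the get-irrelevance axioms distribute through the pairing of $\lget_X$ and $\lget_Y$. Note that this direction does not need $X \lindep Y$, since the sum's operations are defined purely by nesting.

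The interesting step, and the one I expect to be the main obstacle, is the biconditional $(X \lcomp Z) \lindep (Y \lcomp Z) \iff X \lindep Y$. The $(\Leftarrow)$ direction mirrors the earlier composition argument: after expanding both sides, the outer $\lput_Z \circ \lput_Z$ pairs collapse by \ref{law:put-put}, and the inner $\lget_Z (\lput_Z~s~\cdot)$ occurrences simplify by \ref{law:put-get}, leaving a goal that is exactly \ref{law:PutsComm} (respectively \ref{law:PutIrr1}, \ref{law:PutIrr2}) for $(X,Y)$ instantiated at $\lget_Z~s$. The $(\Rightarrow)$ direction demands a cancellation argument. After the same expansion and simplification, the composed independence axioms take the form $\lput_Z~s~a = \lput_Z~s~b$ for certain terms $a$, $b$ built from $\lput_X$, $\lput_Y$ and $\lget_Z~s$; applying $\lget_Z$ to both sides and using \ref{law:put-get} strips off the outer $\lput_Z$ and yields $a = b$, which is the required $(X,Y)$ axiom at the point $\lget_Z~s$. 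Finally, to universalise from $\lget_Z~s$ to an arbitrary $s' \in \src_X = \view_Z$, I would invoke the surjectivity of $\lget_Z$ that is immediate from \ref{law:put-get}: for any $s_0 \in \src_Z$, we have $\lget_Z(\lput_Z~s_0~s') = s'$, so every $s'$ is realised by some $s$. This reconstruction step is the delicate point, since it genuinely relies on totality of $Z$; without \ref{law:put-get} the converse implication would fail.
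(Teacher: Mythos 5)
Your proof is correct and takes essentially the same route as the paper's (mechanised) proof: each law is verified by unfolding the combinator definitions and reducing with the total-lens axioms, exactly as the paper illustrates for analogous results such as $\lfst \lindep \lsnd$. In particular, your treatment of the only non-routine step --- recovering $X \lindep Y$ from $(X \lcomp Z) \lindep (Y \lcomp Z)$ by cancelling the outer $\lput_Z$ via \ref{law:put-get} and then appealing to surjectivity of $\lget_Z$ (available because $\src_Z$ is non-empty by the definition of a lens) --- is sound and correctly isolates where totality of $Z$ is genuinely needed.
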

\noindent The $\lzero$ lens is independent from any lens, since it views none of the state space. Lens independence is a
symmetric relation, as expected. Lens composition preserves independence: if $Y \lindep Z$ then composing $X$ with $Y$
still yields a lens independent of $Z$. Referring back to \cref{ex:st-hp}, if $x \lindep hp$, then clearly also
$x \lindep hp[loc]$. $X$ and $Y$ composed with a common lens $Z$ are independent if, and only if, $X$ and $Y$ are
themselves independent. Combining this with \cref{thm:lfun-indep}, we can deduce that $hp[l_1]$ and $hp[l_2]$ are
independent if and only $l_1 \neq l_2$. Finally, lens sum also preserves independence: if both $X$ and $Y$ are
independent of $Z$, then also $X \lplus Y$ is independent of $Z$. Thus, if $x \lindep hp$ and $y \lindep hp$, then also
$x + y \lindep hp$.

\subsection{Observational Order and Equivalence} \label{sec:order-eq}
We recall that a lens $X$ can view a larger region than another lens $Y$, with the implication that $Y$ is fully
dependent on $X$. For example, it is clear that in \cref{ex:objlens} each object fully possesses each of its
attributes, and likewise each attribute lens $x_j \lcomp o_k$ is fully dependent on lens $o_k$. We can formalise this
using the lens order, $X \lsubseteq Y$, which we can now finally define.
\begin{definition}[Lens Order] \label{def:sublens} $(X \lsubseteq Y) \defs (\src_X = \src_Y \land (\exists Z : \view_X \ltto \view_Y @ X = Z \lcomp Y))$ \isalink{https://github.com/isabelle-utp/utp-main/blob/d18036c259bbc47cbd4200f2428258934e935699/optics/Lens_Order.thy\#L12}
\end{definition}
\noindent A lens $X : V_1 \lto S$ is narrower than a lens $Y : V_2 \lto S$ provided that they share the same state
space, and there exists a total lens $Z : V_1 \ltto V_2$, such that $X$ is the same as $Z \lcomp Y$. In other words, the
behaviour of $X$ is defined by firstly viewing the state using $Y$, and secondly viewing a subregion of this using
$Z$. An order characterises the size of a lens's aperture: how much of the state space a lens can view. For example, we
can prove that $x_j \lcomp o_k \lsubseteq o_k$, by setting $Z = x_j$ in Definition~\ref{def:sublens}. For the same
reason, we can also show that $hp[loc] \lsubseteq hp$. The lens order relation is a preorder, as demonstrated below.
\begin{theorem} \label{thm:lpreorder} For any $\src$, $(\ast \ltto \src, \lsubseteq)$ forms a preorder, that is, $\lsubseteq$ is reflexive
  and transitive. Furthermore, the least element of $\ast \ltto \src$ is $\lzero_\src$, and the greatest element is
  $\lone_\src$. \isalink{https://github.com/isabelle-utp/utp-main/blob/d18036c259bbc47cbd4200f2428258934e935699/optics/Lens_Order.thy\#L29}
\end{theorem}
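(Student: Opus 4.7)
The plan is to unpack Definition~\ref{def:sublens} and, in each of the four claims (reflexivity, transitivity, least element, greatest element), exhibit an explicit witness $Z$ and verify it is a total lens using the algebraic identities for $\lcomp$ together with the closure lemmas for $\lone$, $\lzero$, and composition.

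First I would prove reflexivity. Given $X : \view_X \lto \src$ in $\ast \ltto \src$, take $Z \defs \lone_{\view_X}$. The Basic Lenses Closure lemma gives that $\lone_{\view_X}$ is a total lens $\view_X \ltto \view_X$, and the Composition Laws give $\lone_{\view_X} \lcomp X = X$, so $X = Z \lcomp X$, hence $X \lsubseteq X$. For transitivity, suppose $X \lsubseteq Y$ and $Y \lsubseteq W$; unfolding the definition, we obtain total lenses $Z_1 : \view_X \ltto \view_Y$ and $Z_2 : \view_Y \ltto \view_W$ with $X = Z_1 \lcomp Y$ and $Y = Z_2 \lcomp W$. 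Set $Z \defs Z_1 \lcomp Z_2$; by the Composition Closure lemma it is a total lens $\view_X \ltto \view_W$, and associativity of $\lcomp$ yields $X = Z_1 \lcomp (Z_2 \lcomp W) = (Z_1 \lcomp Z_2) \lcomp W = Z \lcomp W$. Since all three lenses share state space $\src$, we conclude $X \lsubseteq W$.

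For the extremal elements, the strategy is again to provide a canonical witness for each direction. For the least element, fix any $X : \view_X \lto \src$ in $\ast \ltto \src$; I take $Z \defs \lzero_{\view_X}$, which is total of type $\{\emptyset\} \ltto \view_X$, and invoke the identity $\lzero_{\view_X} \lcomp X = \lzero_\src$ from the Composition Laws to witness $\lzero_\src \lsubseteq X$. For the greatest element, I take $Z \defs X$ itself, which by hypothesis is a total lens $\view_X \ltto \src$, and apply $X \lcomp \lone_\src = X$ from the Composition Laws to witness $X \lsubseteq \lone_\src$.

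No step should present a real obstacle, since all the needed identities and closure properties have already been established. The one small thing that requires care is the totality side-condition in Definition~\ref{def:sublens}: the witness $Z$ must live in $\view_X \ltto \view_Y$, not merely $\view_X \lto \view_Y$. This is precisely why each of the four witnesses above ($\lone$, $\lzero$, $Z_1 \lcomp Z_2$, and the given $X \in \ast \ltto \src$) needs to be justified via the respective closure lemma rather than taken for granted.
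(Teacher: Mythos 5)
Your proof is correct and is essentially the argument used in the mechanisation (the paper itself omits the proof): each claim is witnessed by the canonical $Z$ ($\lone_{\view_X}$ for reflexivity, $Z_1 \lcomp Z_2$ for transitivity, $\lzero_{\view_X}$ and $X$ itself for the extremal elements), justified by the composition laws and the closure lemmas. You are also right to flag the totality side-condition on the witness as the only point needing explicit care.
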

\noindent Clearly, $\lzero$ is the narrowest possible lens since it allows us to view nothing, and $\lone$ is the widest lens,
since it views the entire state. This is consistent with the intuition that $\lzero$ represents the empty set, $\lone$
represents the set of all lenses, and $\lsubseteq$ is a subset-like operator. We can prove the following intuitive
theorem for sublenses.
\begin{lemma} If $X, Y$ are total lenses and $X \lsubseteq Y$, then the following identities hold: \isalink{https://github.com/isabelle-utp/utp-main/blob/d18036c259bbc47cbd4200f2428258934e935699/optics/Lens_Order.thy\#L54}
  \begin{align*}
    \lput_Y~(\lput_X~s~v)~u &= \lput_Y~s~u \tag{LS1} \label{law:sl-put-put} \\
    \lget_X~(\lput_Y~s~v)   &= \lget_X~(\lput_Y~s'~v) \tag{LS2} \label{law:sl-obs-get}
  \end{align*}
\end{lemma}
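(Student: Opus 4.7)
The plan is to unfold the definition of the sublens relation to get a witness $Z : \view_X \ltto \view_Y$ with $X = Z \lcomp Y$, and then reduce each identity to the total lens axioms \ref{law:put-get} and \ref{law:put-put} applied to $Y$.

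First I would extract from $X \lsubseteq Y$ a total lens $Z$ satisfying $\src_X = \src_Y$ and $X = Z \lcomp Y$. Expanding Definition~\ref{def:lenscomp}, this gives the two pointwise equations
\begin{align*}
  \lget_X~s &= \lget_Z~(\lget_Y~s), \\
  \lput_X~s~v &= \lput_Y~s~(\lput_Z~(\lget_Y~s)~v).
\end{align*}
These are the only facts about $X$ I would need; from here on the proof is a straightforward calculation using only the axioms of $Y$ as a total lens.

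For \ref{law:sl-put-put}, I would rewrite $\lput_X~s~v$ using the composition equation and then collapse two consecutive $\lput_Y$'s by \ref{law:put-put}, i.e.
\begin{align*}
  \lput_Y~(\lput_X~s~v)~u
  \;=\; \lput_Y~(\lput_Y~s~(\lput_Z~(\lget_Y~s)~v))~u
  \;=\; \lput_Y~s~u.
\end{align*}
For \ref{law:sl-obs-get}, I would rewrite $\lget_X$ via the composition equation and then use \ref{law:put-get} for $Y$ to absorb the inner $\lput_Y$:
\begin{align*}
  \lget_X~(\lput_Y~s~v)
  \;=\; \lget_Z~(\lget_Y~(\lput_Y~s~v))
  \;=\; \lget_Z~v,
\end{align*}
and symmetrically for $s'$, so both sides reduce to $\lget_Z~v$.

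There is no real obstacle here: the only mildly subtle point is noticing that \ref{law:sl-obs-get} does \emph{not} need \ref{law:get-put} for $Z$ (which is why no hypothesis on $X$ beyond totality is used), and that \ref{law:sl-put-put} uses only \ref{law:put-put} for $Y$, not \ref{law:get-put}. Thus both identities are immediate consequences of the composition definition together with the \ref{law:put-get}/\ref{law:put-put} axioms of the containing lens $Y$.
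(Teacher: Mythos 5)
Your proof is correct and is essentially the intended argument: the paper omits the proof of this lemma (deferring to the Isabelle mechanisation), but unfolding $X \lsubseteq Y$ to obtain the witness $Z$ with $X = Z \lcomp Y$, expanding Definition~\ref{def:lenscomp}, and then discharging \ref{law:sl-put-put} via \ref{law:put-put} for $Y$ and \ref{law:sl-obs-get} via \ref{law:put-get} for $Y$ is exactly the calculation the mechanised proof performs. Your observation that both sides of \ref{law:sl-obs-get} reduce to $\lget_Z~v$, and that no axiom of $Z$ beyond its role as the composition witness is needed, is also accurate.
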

\noindent Law~\ref{law:sl-put-put} is a generalisation of Axiom~\ref{law:put-put}: a later $\lput_Y$ overrides an earlier
$\lput_X$ when $X \lsubseteq Y$. Law~\ref{law:sl-obs-get} states that when viewing an update on $Y$ via a narrower lens $X$
we can ignore the valuation of the original state, since the update replaces all the relevant information. We can now
use these results to prove a number of ordering lemmas for lens compositions.
\begin{lemma}[Lens Order] If $X$, $Y$, and $Z$ are total lenses then they satisfy the following laws: \isalink{https://github.com/isabelle-utp/utp-main/blob/d18036c259bbc47cbd4200f2428258934e935699/optics/Lens_Order.thy\#L204}
  \begin{align*}
    X \lcomp Y &\lsubseteq Y \tag{LO1} \label{law:LO1} \\
    X \lsubseteq Y \land Y \lindep Z &\implies X \lindep Z \tag{LO2} \label{law:LO2} \\
    X \lindep Z \land Y \lsubseteq Z &\implies (X \lplus Y) \lsubseteq (X \lplus Z) \tag{LO3} \label{law:LO3} \\
    X \lindep Y &\implies X \lplus Y \lsubseteq Y \lplus X \tag{LO4} \label{law:LO4} \\
    X \lindep Y \land X \lindep Z \land Y \lindep Z &\implies X \lplus (Y \lplus Z) \lsubseteq (X \lplus Y) \lplus Z \tag{LO5} \label{law:LO5}
  \end{align*}
\end{lemma}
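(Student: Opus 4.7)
The plan is to take each clause in turn, since they are logically independent. In every case, unfolding the definition of $X \lsubseteq Y$ reduces the goal to exhibiting a total lens $W$ with $X = W \lcomp Y$, so the work is (i) produce the witness and (ii) verify the equation using the already-established algebraic laws for $\lcomp$, $\lplus$, $\lfst$, $\lsnd$, together with the lens axioms \ref{law:put-get}, \ref{law:put-put}, \ref{law:get-put}.

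For \ref{law:LO1}, I would simply take $W = X$, so that $X \lcomp Y = X \lcomp Y$ trivially; closure was already established by the Composition Closure lemma. For \ref{law:LO2}, I would unfold $X \lsubseteq Y$ to obtain a total $W$ with $X = W \lcomp Y$, and then apply the independence law $Y \lindep Z \implies (W \lcomp Y) \lindep Z$ proved in the previous subsection; rewriting $W \lcomp Y$ back to $X$ closes the goal. These two cases are essentially one-liners.

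The interesting cases are \ref{law:LO3}, \ref{law:LO4}, \ref{law:LO5}, where the witness is a \emph{reshaping} lens on product view types. For \ref{law:LO3}, from $Y \lsubseteq Z$ I would extract $W'$ with $Y = W' \lcomp Z$ and propose the witness $W \defs \lfst \lplus (W' \lcomp \lsnd) : \view_X \times \view_Y \ltto \view_X \times \view_Z$; I would then check that $W \lcomp (X \lplus Z) = X \lplus (W' \lcomp Z) = X \lplus Y$ by unfolding $\lget$ and $\lput$ of both sides, using $X \lindep Z$ (hence $X \lindep W' \lcomp Z$ by \ref{law:LO2}, already proved) to swap puts where needed. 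For \ref{law:LO4}, the natural witness is the swap lens $\lsnd \lplus \lfst : \view_X \times \view_Y \ltto \view_Y \times \view_X$, whose totality follows from $\lfst \lindep \lsnd$ and Sum Closure; verifying $(\lsnd \lplus \lfst) \lcomp (Y \lplus X) = X \lplus Y$ is a direct computation that uses $X \lindep Y$ to commute the two inner $\lput$s. For \ref{law:LO5}, the witness is the associator $(\lfst \lplus (\lfst \lcomp \lsnd)) \lplus (\lsnd \lcomp \lsnd)$ going from $(\view_X \times \view_Y) \times \view_Z$ to $\view_X \times (\view_Y \times \view_Z)$; its totality relies on the pairwise independence hypotheses, and the verification that it witnesses the sublens relation is again a computation of $\lget$ and $\lput$, rearranging $\lput$s whose commutation is justified by the three independence assumptions.

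The main obstacle is \ref{law:LO5}: the witness lens lives on a triply-nested product, so establishing its totality (three uses of Sum Closure, each needing an independence side-condition derived from the hypotheses and the preservation laws $X \lindep Z \land Y \lindep Z \implies (X \lplus Y) \lindep Z$) and then checking the defining equation requires careful, repeated application of \ref{law:PutsComm} across several layers. The other cases reduce to short calculations once the correct witness is in hand.
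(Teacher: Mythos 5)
Your overall strategy -- unfold Definition~\ref{def:sublens}, exhibit an explicit total witness built from $\lfst$, $\lsnd$, $\lcomp$ and $\lplus$, and discharge the defining equation with the composition/sum identities -- is the right one, and your witnesses for \ref{law:LO1}--\ref{law:LO4} are correct (for \ref{law:LO2} the paper's placement of the lemma after \ref{law:sl-put-put}/\ref{law:sl-obs-get} suggests those can be used directly, but your route through $Y \lindep Z \implies (W \lcomp Y) \lindep Z$ is equally valid). However, your witness for \ref{law:LO5} is oriented the wrong way. The term $(\lfst \lplus (\lfst \lcomp \lsnd)) \lplus (\lsnd \lcomp \lsnd)$ only typechecks with source $\view_X \times (\view_Y \times \view_Z)$ and view $(\view_X \times \view_Y) \times \view_Z$; composed with $X \lplus (Y \lplus Z)$ it yields a lens of view type $(\view_X \times \view_Y) \times \view_Z$, so it witnesses the \emph{converse} inclusion $(X \lplus Y) \lplus Z \lsubseteq X \lplus (Y \lplus Z)$. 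For the stated law you need a witness of type $\view_X \times (\view_Y \times \view_Z) \ltto (\view_X \times \view_Y) \times \view_Z$, namely $(\lfst \lcomp \lfst) \lplus ((\lsnd \lcomp \lfst) \lplus \lsnd)$, whose outer sum structure matches the right-associated view. The slip is easily repaired, and the verification then goes through exactly as you describe, by right-distributing $\lcomp$ over $\lplus$ and using $\lfst \lcomp (A \lplus B) = A$, $\lsnd \lcomp (A \lplus B) = B$ under the independence hypotheses.

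One smaller imprecision: the totality of the associator witness does not depend on the independence hypotheses on $X$, $Y$, $Z$ at all -- it follows from $\lfst \lindep \lsnd$ and the closure lemmas for composition and sum, unconditionally. The hypotheses $X \lindep Y$, $X \lindep Z$, $Y \lindep Z$ are what you need for the sums $X \lplus Y$, $(X \lplus Y) \lplus Z$, etc.\ to be total lenses and for the put-commutations in the final equational check; keeping these two roles separate will make the Isabelle proof cleaner.
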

\noindent As we observed, composition of $X$ and $Y$ yields a narrower lens than $Y$ (\ref{law:LO1}):
$hp[loc] \lsubseteq hp$. Independence is preserved by the ordering, since a subregion of a larger independent region is
also clearly independent (\ref{law:LO2}) -- $hp[loc] \lindep x$ when $hp \lindep x$. Lens sum also preserves the
ordering in its right-hand component (\ref{law:LO3}). Moreover, lens sum is commutative with respect to $\lsubseteq$
(\ref{law:LO4}), and also associative, assuming appropriate independence properties (\ref{law:LO5}). From these laws,
and utilising the preorder theorems, we can prove various useful corollaries, such as
$$X \lindep Y \implies X \lsubseteq (X \lplus Y)$$ which shows that $X$ is narrower than $X \lplus Y$: the latter is an
upper bound. Thus, if we intuitively interpret $\lsubseteq$ as $\subseteq$, then $+$ corresponds to $\cup$, and we can
combine independent lens sets: $\{a, b\} \cup \{c, d\} = (a + b) + (c + d)$. We can also show that
$$X \lsubseteq Z \land Y \lsubseteq Z \land X \lindep Y \implies (X \lplus Y) \lsubseteq Z$$ which shows that sum
provides the least upper bound: $\cup$ preserves $\subseteq$. Finally, we can also induce an equivalence relation on
lenses using the lens order in the usual way:

\begin{definition}[Lens Equivalence] $(X \lequiv Y) \defs (X \lsubseteq Y \land Y \lsubseteq X)$ \isalink{https://github.com/isabelle-utp/utp-main/blob/d18036c259bbc47cbd4200f2428258934e935699/optics/Lens_Order.thy\#L89}
\end{definition}
\noindent We define $\lequiv$ as the cycle of a preorder, and consequently we can prove that it forms an equivalence relation.
\begin{corollary}[Lens Equivalence Relation]
  For any $\src$, $(\ast \ltto \src, \lequiv)$ forms a setoid, that is, $\lequiv$ is an equivalence relation on the set
  $\ast \ltto \src$ -- it is reflexive, symmetric, and transitive. \isalink{https://github.com/isabelle-utp/utp-main/blob/d18036c259bbc47cbd4200f2428258934e935699/optics/Lens_Order.thy\#L103}
\end{corollary}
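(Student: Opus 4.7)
The plan is to derive the corollary directly from Theorem~\ref{thm:lpreorder}, which gives reflexivity and transitivity of $\lsubseteq$ on $\ast \ltto \src$, together with the definition $X \lequiv Y \defs X \lsubseteq Y \land Y \lsubseteq X$. This is a standard construction: the ``equivalence kernel'' of any preorder is automatically an equivalence relation, so no new lens-theoretic content is needed beyond what we already have.

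Concretely, I would dispatch each of the three properties in turn. For reflexivity, note that $X \lsubseteq X$ holds by Theorem~\ref{thm:lpreorder}, so both conjuncts of $X \lequiv X$ are discharged simultaneously. For symmetry, observe that the definition of $\lequiv$ is manifestly symmetric in its two arguments: if $X \lsubseteq Y \land Y \lsubseteq X$ holds, then reordering the conjuncts gives $Y \lsubseteq X \land X \lsubseteq Y$, which is exactly $Y \lequiv X$. For transitivity, assume $X \lequiv Y$ and $Y \lequiv Z$, unfold to obtain $X \lsubseteq Y$, $Y \lsubseteq X$, $Y \lsubseteq Z$, and $Z \lsubseteq Y$; then apply transitivity of $\lsubseteq$ (Theorem~\ref{thm:lpreorder}) twice to get $X \lsubseteq Z$ and $Z \lsubseteq X$, hence $X \lequiv Z$.

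There is essentially no obstacle here; the only subtlety worth flagging is that we are genuinely working on the carrier $\ast \ltto \src$ (total lenses with fixed state space $\src$), so the quantifications in reflexivity, symmetry, and transitivity must be understood as ranging over this set. Since the preorder result of Theorem~\ref{thm:lpreorder} is stated for exactly this carrier, the transfer is immediate, and the corollary is simply a repackaging of the preorder structure. In the mechanisation, this would typically be a one-line appeal to the generic ``setoid from preorder'' lemma in Isabelle's library.
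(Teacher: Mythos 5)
Your proof is correct and follows exactly the paper's argument: reflexivity and transitivity are obtained from the preorder structure of $\lsubseteq$ in Theorem~\ref{thm:lpreorder}, and symmetry is immediate from the symmetric form of the definition of $\lequiv$. The extra detail you give (unfolding conjuncts, applying transitivity twice) is just an expansion of the same one-line argument.
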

\begin{proof}
  Reflexivity and transitivity follow by \cref{thm:lpreorder}, and symmetry follows by definition.
\end{proof}

\noindent Lens equivalence is a heterogeneously typed relation that is different from equality ($X = Y$), since it
requires only that the two state spaces are the same, whilst the view types of $X$ and $Y$ can be
different. Consequently, it can be used to compare lenses of different view types and show that two sets of lenses are
isomorphic. This makes this relation much more useful for evaluating observational equivalence between two lenses that
have apparently differing views, and yet characterise precisely the same region. For example, in general we cannot show
that $x + y + z = z + y + x$, since these constructions have different view types:
$\view_x \times \view_y \times \view_z$ and $\view_z \times \view_y \times \view_x$, respectively, and so the formula is
not even type correct. We can, however, show that $x + y + z \lequiv z + y + x$. This is reflected by the following set of
algebraic laws.

\begin{theorem} If $X$, $Y$, and $Z$ are all total lenses then they satisfy the following identities: \isalink{https://github.com/isabelle-utp/utp-main/blob/d18036c259bbc47cbd4200f2428258934e935699/optics/Lens_Order.thy\#L198}
  \begin{align*}
    X \lplus (Y \lplus Z) ~~\lequiv~~& (X \lplus Y) \lplus Z & \text{if } X \lindep Y, X \lindep Z, Y \lindep Z \\
    X \lplus \lzero ~~\lequiv~~& X \\
    X \lplus Y ~~\lequiv~~& Y \lplus X & \text{if } X \lindep Y
  \end{align*}
\end{theorem}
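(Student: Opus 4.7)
The plan is to reduce each identity to its two subset inclusions via the definition $X \lequiv Y \iff X \lsubseteq Y \land Y \lsubseteq X$, and then discharge these using the already-established laws LO1--LO5 together with the Independence lemma, in particular the symmetry $X \lindep Y \iff Y \lindep X$. Bespoke witness lenses are needed only in one place.

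For commutativity, one direction is exactly LO4: $X \lindep Y \implies X \lplus Y \lsubseteq Y \lplus X$. For the reverse inclusion $Y \lplus X \lsubseteq X \lplus Y$, I would instantiate LO4 with the roles of $X$ and $Y$ exchanged, which is legitimate because lens independence is symmetric. This yields $X \lplus Y \lequiv Y \lplus X$ under $X \lindep Y$.

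For the unit $X \lplus \lzero \lequiv X$, I first note that $X \lindep \lzero$ follows from $\lzero \lindep X$ by symmetry, so the sum is a well-formed total lens by Sum Closure. The direction $X \lsubseteq X \lplus \lzero$ is immediate from the corollary $X \lindep Y \implies X \lsubseteq X \lplus Y$ derived just after the Lens Order lemma. For the reverse direction $X \lplus \lzero \lsubseteq X$, I would construct an explicit witness: the total lens $L : (\view_X \times \{\emptyset\}) \ltto \view_X$ defined by $\lget_L\,s = (s, \emptyset)$ and $\lput_L\,s\,(v, w) = v$, and then verify $X \lplus \lzero = L \lcomp X$ by direct calculation against the definitions of $\lplus$, $\lzero$, and $\lcomp$; the check is routine because $\lput_\lzero$ is the identity in its state argument.

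The associativity is the most delicate part. One direction, $X \lplus (Y \lplus Z) \lsubseteq (X \lplus Y) \lplus Z$, is exactly LO5 under the three pairwise independence hypotheses. The reverse inclusion $(X \lplus Y) \lplus Z \lsubseteq X \lplus (Y \lplus Z)$ is not supplied by any of LO1--LO5 and is the main obstacle. My preferred route is a direct witness: the view types $(\view_X \times \view_Y) \times \view_Z$ and $\view_X \times (\view_Y \times \view_Z)$ are related by the standard associator, so I would define a total lens realising this re-bracketing and verify the composition identity by symbolic calculation, using the pairwise independence to commute nested $\lput$ applications as required. An alternative that avoids nested tuple manipulation is to chain the commutativity law with LO5 three times, first rewriting $(X \lplus Y) \lplus Z \lequiv Z \lplus (X \lplus Y)$ (valid because $(X \lplus Y) \lindep Z$ follows from $X \lindep Z$ and $Y \lindep Z$ via the Independence lemma), re-associating with LO5, and commuting back; this route is shorter but requires that $\lequiv$ be congruent for $\lplus$, which is not explicitly established in the excerpt. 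I would therefore prefer the direct-witness route, accepting the heavier calculation in exchange for a self-contained argument.
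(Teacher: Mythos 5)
Your proposal is correct and follows the route the paper's machinery is set up for: unfold $\lequiv$ into its two $\lsubseteq$ inclusions, discharge one direction of commutativity and associativity with \ref{law:LO4} and \ref{law:LO5} respectively (using symmetry of $\lindep$ for the converse of commutativity), and supply explicit total-lens witnesses where the LO laws do not cover the remaining inclusions. The paper itself states this theorem without proof and defers to the mechanisation, so there is no textual proof to diverge from; your witnesses are the natural ones and they check out. Two small remarks. First, your unit witness $L$ with $\lput_L\,s\,(v,w) = v$ does satisfy \ref{law:put-get} only because the second component lives in the singleton $\{\emptyset\}$, so $w$ is forced to be $\emptyset$ --- worth saying explicitly, since otherwise $L$ would not be total. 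Second, you are over-cautious about the reverse associativity inclusion: writing out $\lput$ for both $(X \lplus Y) \lplus Z$ and the associator composed with $X \lplus (Y \lplus Z)$ yields the \emph{same} nested expression $\lput_Z\,(\lput_Y\,(\lput_X\,s\,a)\,b)\,c$ on both sides, so no commutation of $\lput$s (and hence no use of independence) is needed for the witness equation itself; the independence hypotheses are needed only so that the sums are total lenses and the $\lsubseteq$ statements are well-formed. Your instinct to avoid the commutativity-chaining alternative is also sound: the congruence of $\lequiv$ with respect to $\lplus$ is only established by a later theorem in the paper, so relying on it here would invert the order of development.
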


\noindent Lens summation is associative, has $\lzero$ as a unit, and commutative, modulo $\lequiv$, and assuming independence of
the components, which is consistent with the lens set interpretation. Lenses thus form a partial commutative
monoid~\cite{Dongol2015}, modulo $\lequiv$, also known as a separation algebra~\cite{Calcagno2007}, where $X \lplus Y$
is effectively defined only when $X \bowtie Y$. Independence corresponds with separation algebra's ``separateness''
relation, which means that there is no overlap between two areas of memory. We can also use $\lequiv$ to determine
whether two independent lenses, $X$ and $Y$, partition the entire state space using the identity $X +
Y \lequiv \lone$~\cite{Foster20-LocalVars}. Finally, we can prove the following additional properties of equivalence.

\begin{theorem}If $X_1$, $X_2$, $Y_1$, $Y_2$, and $Y$ are total lenses then the following laws hold: \isalink{https://github.com/isabelle-utp/utp-main/blob/1218a18a366a50b1cd49257523fbe1900adbe10b/optics/Lens_Order.thy\#L291}
  \begin{itemize}
  \item If $X_1 \lindep Y$ and $X_1 \lequiv X_2$ then $X_2 \lindep Y$;
  \item If $X_1 \lequiv X_2$, $Y_1 \lequiv Y_2$, and $X_1 \lindep Y_1$ then $X_1 + Y_1 \lequiv X_2 + Y_2$;
  \item If $\src_{X_1} = \view_Y$ and $X_1 \lequiv X_2$ then $X_1 \lcomp Y \lequiv X_2 \lcomp Y$.
  \end{itemize}
\end{theorem}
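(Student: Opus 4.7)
The plan is to dispatch each of the three claims in turn, relying on the preorder structure of $\lsubseteq$ from Theorem~\ref{thm:lpreorder} together with the algebraic laws \ref{law:LO1}--\ref{law:LO5} and the definition of $\lequiv$ as the associated symmetric closure.

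For the first claim, I would unfold $X_1 \lequiv X_2$ to extract $X_2 \lsubseteq X_1$, and then apply \ref{law:LO2} directly, instantiating its $X, Y, Z$ pattern with $X_2, X_1, Y$ respectively. This immediately yields $X_2 \lindep Y$ with no further work; the symmetric form of $\lindep$ is implicit since $\lindep$ is itself symmetric.

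For the second claim, I would build a chain of $\lsubseteq$ steps bridging $X_1 \lplus Y_1$ and $X_2 \lplus Y_2$. Since $Y_1 \lequiv Y_2$ gives $Y_1 \lsubseteq Y_2$, and the first claim (applied twice) yields $X_1 \lindep Y_2$ and $X_2 \lindep Y_2$, law~\ref{law:LO3} first produces $X_1 \lplus Y_1 \lsubseteq X_1 \lplus Y_2$. Law~\ref{law:LO4} then gives $X_1 \lplus Y_2 \lsubseteq Y_2 \lplus X_1$. A second application of \ref{law:LO3}, now with $Y_2$ fixed on the left and $X_1 \lsubseteq X_2$, yields $Y_2 \lplus X_1 \lsubseteq Y_2 \lplus X_2$, and a final use of \ref{law:LO4} gives $Y_2 \lplus X_2 \lsubseteq X_2 \lplus Y_2$. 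Transitivity of $\lsubseteq$ closes one half; the opposite direction follows by an entirely symmetric chain starting from $Y_2 \lsubseteq Y_1$ and $X_2 \lsubseteq X_1$.

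For the third claim, I would reason directly from Definition~\ref{def:sublens}. Unfolding $X_1 \lsubseteq X_2$ provides a total witness lens $W : \view_{X_1} \ltto \view_{X_2}$ with $X_1 = W \lcomp X_2$. Post-composing with $Y$ and applying associativity of $\lcomp$ gives $X_1 \lcomp Y = W \lcomp (X_2 \lcomp Y)$, which is precisely the witness establishing $X_1 \lcomp Y \lsubseteq X_2 \lcomp Y$; the reverse direction is symmetric. The main obstacle will be the bookkeeping of side-conditions in the second claim: each invocation of \ref{law:LO3} and \ref{law:LO4} carries an independence hypothesis, so I must first derive, using the first claim together with the symmetry of $\lindep$, that every pair $X_i \lindep Y_j$ for $i, j \in \{1, 2\}$ holds from the single assumption $X_1 \lindep Y_1$. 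Without this preliminary step, neither chain type-checks at the level of hypotheses.
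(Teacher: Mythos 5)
Your proposal is correct and matches the route the paper intends: each part follows from the already-established lemmas, namely \ref{law:LO2} for congruence of independence, a transitivity chain through \ref{law:LO3} and \ref{law:LO4} (after first propagating all four independences $X_i \lindep Y_j$, as you rightly note) for the sum congruence, and the witness-plus-associativity argument from Definition~\ref{def:sublens} for left congruence of composition. All side conditions (totality of the witness lens, the independence hypotheses of \ref{law:LO3}/\ref{law:LO4}) are accounted for, so there is nothing to add.
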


\noindent Independence is, as can be expected, preserved by equivalence. Equivalence is a congruence
relation with respect to $+$, provided the summed lenses are independent. It is also a left congruence for lens
composition. We can neither prove that it is a right congruence, nor find a counterexample.

\subsection{Mechanised State Spaces}
\label{sec:mechss}

Lenses allow us to express provisos in laws with side conditions about variables. Manual construction of state spaces
using the lens combinators is tedious and so we have implemented an Isabelle/HOL command for automatically creating a new
state space with the following form:
\isalink{https://github.com/isabelle-utp/utp-main/blob/d18036c259bbc47cbd4200f2428258934e935699/optics/Lens_Record.ML}
$$\ckey{alphabet}~ [\alpha_1, \cdots, \alpha_k]S = ([\beta_1, \cdots, \beta_m]T \,+)^? \,\, x_1 : \tau_1 ~~ \cdots ~~ x_n : \tau_n$$

\noindent We name the command $\ckey{alphabet}$, since it effectively allows the definition of a UTP alphabet (see
\cref{sec:UTP}), which in turn induces a state space. The command creates a new state space type $S$ with $k$ type
parameters ($\alpha_i$, for $1 \le i \le k$), optionally extending the parent state space $T$ with $m$ type parameters
($\beta_i$, for $1 \le i \le m$), and creates a lens for each of the variables $x_i : \tau_i$.  It can be used to
describe a concrete state space for a program. For brevity, we often abbreviate the \ckey{alphabet} command by the
syntax
$$\ssdef[\lbrack\beta_1, \cdots, \beta_m\rbrack T]{[\alpha_1, \cdots, \alpha_k]~S}{x_1 : \tau_1, ~~ \cdots ~~, x_n : \tau_n}$$ when
used in mathematical definitions. Internally, the command performs the following steps:

\begin{enumerate}
  \item generates a record space type $S$ with $n$ fields, which optionally extends a parent state space $T$;
  \item generates a lens $x_i$ for each of the fields using the record lens $\lrec^{R}_{x_i}$;
  \item automatically proves that each lens $x_i$ is a total lens;
  \item automatically proves an independence theorem $x_i \lindep x_j$ for each pair $i, j \in \{1 \cdots n\}$ such that $i \neq j$;
  \item generates lenses $\lbase_S$ and $\lmore_S$ that characterise the ``base part'' and ``extension part'', respectively;
  \item automatically proves a number of independence and equivalence properties.
\end{enumerate}

\noindent We now elaborate on each of these steps in detail.

The new record type $\textit{S}$ yields an auxiliary type $[\alpha_1, \cdots, \alpha_k, \phi]\textit{S-ext}$ with
additional type parameter $\phi$ that characterises future extensions. In particular, the non-extended type
$[\alpha_1, \cdots, \alpha_k]S$ is characterised by $[\alpha_1, \cdots, \alpha_k, \textit{unit}]\textit{S-ext}$, where
\textit{unit} is a distinguished singleton type. This extensible record type is isomorphic to a product of three basic
component types:
$$([\beta_1, \cdots, \beta_m]T \times (\tau_1 \times \cdots \times \tau_n)) \times \phi$$
These characterise, respectively, the part of state space described by $T$, the part described by the $n$ additional
fields, and the extension part $\phi$. In the case that the state space does not extend an existing type, we can set
$[\beta_1, \cdots, \beta_m]T = \textit{unit}$.

For each field, the command generates a lens $x_i : \tau_i \lto [\alpha_1, \cdots, \alpha_k, \phi]\textit{S-ext}$ using
the record lens, and proves total lens and independence theorems. Each of these lenses is polymorphic over $\phi$, so
that they can be applied to the base type and any extension thereof, in the style of inheritance in object oriented data
structures. As we show in \S\ref{sec:utp-thy}, this polymorphism allows us to characterise a hierarchy of UTP
theories.

In addition to the field lenses, we create two special total lenses:
\begin{itemize}
  \item $\lbase_S : [\alpha_1, \cdots, \alpha_k]S \lto [\alpha_1, \cdots, \alpha_k, \phi]\textit{S-ext}$, which characterises
the base part; and 
  \item $\lmore_S : \phi \lto [\alpha_1, \cdots, \alpha_k, \phi]\textit{S-ext}$, which characterises the extension part. 
  \end{itemize}
  The base part consists of only the inherited fields and those added by $S$. We automatically prove a
number of theorems about these special lenses:

\begin{itemize}
  \item $\lbase_S \lindep \lmore_S$: the base and extension parts are independent;
  \item $\lbase_S \lplus \lmore_S \lequiv \lone$: they partition the entire state space;
  \item for $i \in \{1 \cdots n\}$, $x_i \lsubseteq \lbase_S$: each variable lens is part of the base;
  \item $\lbase_S \lequiv \lbase_T \lplus \left(\sum_{i \in \{1 \cdots n\}} x_i\right)$: the base is composed of the
    parent's base and the variable lenses;
  \item $\lmore_T \lequiv \left(\sum_{i \in \{1 \cdots n\}} x_i\right) \lplus \lmore_S$: the parent's extension is composed
    of the variable lenses and the child's extension part.
\end{itemize}

\noindent These theorems can help support the Isabelle/UTP laws of programming, which we elaborate in the next
section. We emphasise, though, that the \ckey{alphabet} command is not the only way to construct a state space with
lenses, and nor do the results that follow depend on the use of this command. We could, for example, axiomatise a
collection of lenses, including independence relations over an abstract state space type, following Schirmer and
Wenzel~\cite{Schirmer2009}. However, the \ckey{alphabet} command is a convenient tool in many circumstances.

\section{Mechanising the UTP Relational Calculus}
\label{sec:isabelle-utp}

In this section we describe the core of Isabelle/UTP, including its expression model, meta-logical operators, predicate
calculus, and relational calculus, building upon our algebraic model of state spaces.  A direct result is an expressive
model of relational programs which can be used in proving fundamental algebraic laws of programming~\cite{Hoare87}, and
for formal verification (\S\ref{sec:verify-rel}).  Moreover, the relational model is foundational to the mechanisation
of UTP theories, and thus advanced computational paradigms, as we consider in \S\ref{sec:utp-thy}. An overview of the
Isabelle/UTP concepts and theories can be found in \cref{fig:UTP-Concepts} at the end of the paper.

\subsection{Expressions}
\label{sec:utp-expr}

Expressions are the basis of all other program and model objects in Isabelle/UTP, in that every such object is a
specialisation of the expression type. An expression language typically includes literals, variables, and function
symbols, all of which are also accounted for here. We model expressions as functions on the observation space:
$[A, \src]\uexpr \cong (\src \to A)$\footnote{We use a \ckey{typedef} to create an isomorphic but distinct type.  This
allows us to have greater control over definition of polymorphic constants and syntax translations, without
unnecessarily constraining these for the function type.}, where $A$ is the return type, which is a standard shallow
embedding approach~\cite{vonWright1994-RefHOL,Back1998,Feliachi2010}. However, lenses allow us to formulate syntax-like
constraints, but without the need for deeply embedded expressions. A major advantage of this model is that we need not
preconceive of all expression constructors, but can add them by definition.

We diverge from the standard shallow embedding approach, because we give explicit constructors for expressions. Usually
shallow embeddings use syntax translations to transparently map between program expressions and the equivalent lifted
expressions, for example, $$e + (f - g) \leadsto \lambda s @ e(s) + (f(s) - g(s))$$ Here, we prefer to have expression
constructors as first-class citizens.

\begin{definition}[Expression Constructors] \label{def:expconstr} Assume types $A$, $B$, $C$, and $\src$. We declare the constants: \isalink{https://github.com/isabelle-utp/utp-main/blob/d18036c259bbc47cbd4200f2428258934e935699/utp/utp_expr.thy}
$$\begin{array}{l}
  \uvare              ~\colon (A \lto \src) \to [A, \src]\uexpr \\
  \uvare~x            \defs \lambda s @ \lget_x~s \\[1ex]
  \ulit               ~\colon A \lto [A, \src]\uexpr \\
  \ulit~k     \defs \lambda s @ k \\[1ex]
  \uecond             ~\colon [\mathbb{B},\src]\uexpr \to [A, \src]\uexpr \to [A, \src]\uexpr \to [A, \src]\uexpr \\
  \uecond~b~u_1~u_2 \defs \lambda s @ \hifthenelse{b(s)}{u_1(s)}{u_2(s)} \\[1ex]
  \uuop               ~\colon (A\!\to\!B) \to [A, \src]\uexpr \to [B, \src]\uexpr \\
  \uuop~f~u   \defs \lambda s @ f(u(s)) \\[1ex]
  \ubop               ~\colon (A\!\to\!B\!\to\!C) \to [A, \src]\uexpr \to [B, \src]\uexpr \to [C, \src]\uexpr \\
  \ubop~g~u~v \defs \lambda s @ g(u(s))~(v(s))
\end{array}$$
where $x : A \lto \src$ is a lens; $k : A$ is a HOL constant; $f : A \to B$ and $g : A \to B \to C$ are functions; and
$b : [\mathbb{B},\src]\uexpr$, $u, u_1, u_2 : [A, \src]\uexpr$, and $v : [B, \src]\uexpr$ are expressions.
\end{definition}

\noindent The mechanisation of the core expression language is shown in \cref{fig:utp-expr}, which uses Isabelle's
lifting package~\cite{Huffman13} to create each of the expression constructors. The operator $\uvare~x$ is a variable
expression, and returns the present value in the state characterised by lens $x$. For convenience, we assume that $x$,
$y$, $z$, and decorations thereof, are lenses, and often use them directly as variable expressions without explicitly
using $\uvare$. We also use $\uv$ to denote the $\lone$ lens; this is effectively a special variable for the entire
state.

The operator $\ulit~k$ represents a literal, or alternatively an arbitrary lifted HOL value, and corresponds to a
constant function expression. We use the notation $\ulite{k}$ to denote a literal $k$. As well as lens-based variables,
which are used to model program variables, expressions can also contain HOL variables, which are orthogonal and constant
with respect to the program variables. HOL variables in literal constructions ($\ulite{\mv{x}}$) correspond to logical
variables~\cite{Mor1996}, also called ``ghost variables'', which are important for
verification~\cite{Hoare69,Mor1996}. We use the notations $\mv{x}$, $\mv{y}$, and $\mv{z}$ to denote logical
variables in expressions.

Operator $\uecond~b~u_1~u_2$ denotes a conditional expression; if $b$ is true then it returns $u_1$, otherwise $u_2$. It
evaluates the boolean expression $b$ under the incoming state, and chooses the expression based on this. We use the
notation $\econd{u_1}{b}{u_2}$ adopted in the UTP as a short hand for it.

Operators $\uuop$ and $\ubop$ lift HOL functions into the expression space by a pointwise lifting. With them we can
transparently use HOL functions as UTP expression functions, for instance the summation $e + f$ is denoted by
$\ubop~(+)~e~f$. Moreover, it is often possible to lift theorems from the underlying operators to the expressions
themselves, which allows us to reuse the large library of HOL algebraic structures in Isabelle/UTP. For instance, if we
know that $(A, +, 0)$ is a monoid, then also we can show that for any $\src$, $([A,\src]\uexpr, \ubop~(+), \ulit~0)$
forms a monoid. For convenience, we therefore often overload mathematically defined functions as expression constructs
without further comment. In particular, we often overload $=$ as both equivalence of two expressions ($e = f$), and an
expression of equality within an expression ($x = 5$).

\begin{figure}
  \begin{center}
  \includegraphics[width=.9\linewidth]{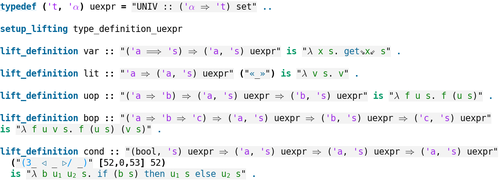}
\end{center}

  \vspace{-3ex}
  \caption{UTP Expression Model in Isabelle/HOL}
  \label{fig:utp-expr}

    \vspace{-2ex}
\end{figure}

This deep expression model allows us to mimick reasoning usually found in a deep embedding: the constructors above are
like datatype constructs, but are really semantic definitions. We can then prove theorems about these constructs that
allow us to reason in an inductive way, which is central to our approach to meta-logical reasoning. At the same time, we
have developed a lifting parser in Isabelle/UTP, which allows automatic translation between HOL expressions and UTP
expressions. We also have a tactic, \textsf{rel-auto}, that quickly and automatically eliminates the expression
structure, resulting in a HOL expression.

The \textsf{rel-auto} tactic performs best when $\src$ is constructed using the \ckey{alphabet} command of
\S\ref{sec:mechss}, because we can enumerate all the field lenses.  Given a state space
$[x_1 : \tau_1, \cdots x_n  \tau_n]$, we can eliminate the $s$ state space variable in a proof goal by replacing it
with a tuple of logical variables ($\mv{x}_1 \cdots \mv{x}_n$). This, in turn, means that we can eliminate each lens and
replace it with a corresponding logical variable, for example:
$$(x + y) - z > 3 \leadsto (\mv{x} + \mv{y}) - \mv{z} > 3$$ The result is a simpler expression containing only logical
variables, though of course with the loss of lens properties. This means that we have both the additional expressivity
and fidelity afforded by lenses, and the proof automation of Isabelle/HOL.

\subsection{Predicate Calculus}
\label{sec:predcalc}

A predicate is an expression with a Boolean return type, $[\src]\upred \defs [\mathbb{B}, \src]\uexpr$, so that
predicates are a subtype of expressions. The majority of predicate calculus operators ($\neg$, $\land$, $\lor$,
$\implies$) are obtained by pointwise lifting of the equivalent operators in HOL. We also define the indexed operators
$\bigwedge_{i \in A}~P(i)$ and $\bigvee_{i \in A}~P(i)$ similarly. The quantifiers are defined below. In order to
notationally distinguish HOL from UTP operators, in the following definitions we subscript them with a \textsf{H}.
\begin{definition}[Predicate Calculus Operators] \label{def:predop} \isalink{https://github.com/isabelle-utp/utp-main/blob/d18036c259bbc47cbd4200f2428258934e935699/utp/utp_pred.thy\#L289}
  \begin{align*}
    \exists x @ P &\defs \lambda s @ \holop{\exists} v : \view_x @ P(\lput_x~s~v) \\
    \bm{\exists} \mv{x} @ P(\mv{x}) &\defs \lambda s @ \holop{\exists} \mv{x} @ P(\mv{x})(s) \\
    [P] &\defs \forall \uv @ P \\
    (P \refinedby Q) &\defs (\holop{\forall} s @ (Q \implies P) s)
  \end{align*}
\end{definition}
\noindent Existential quantification over a lens $x$ quantifies possible values for the lens $v : \view_x$, and updates
the state with this using $\lput$. Universal quantification is obtained by duality. The emboldened existential
quantifier, $\bm{\exists}$, quantifies a logical variable in a parametric predicate $P(\mv{x})$ by a direct lifting of
the corresponding HOL quantifier. We emphasise that $\exists$ and $\bm{\exists}$ are semantically very different: in a
lens quantification, $\exists x @ P$, the lens $x$ must be an existing lens or expression. This lens is not bound by the
quantification, unlike $\bm{\exists} \mv{x} @ P(\mv{x})$ where $\mv{x}$ becomes a logical variable bound in $P$. Lens
quantification is elsewhere called \textit{liberation}~\cite{Dongol19,Colvin2017} since it removes any restrictions on
the valuation of $x$.

The universal closure, $[P]$, universally quantifies every variable in the
alphabet of $P$ using the state variable $\uv$. The refinement relation $P \refinedby Q$ is then defined as a HOL
predicate, requiring that $Q$ implies $P$ in every state $s$.

Since the definitions are by lifting of the underlying HOL operators, we obtain the following theorem.

\begin{theorem} \label{thm:predcba} For any $\src$, $([\src]\upred, \bigvee, \pfalse, \bigwedge, \ptrue, \neg)$ forms a complete Boolean algebra, that is: \isalink{https://github.com/isabelle-utp/utp-main/blob/d18036c259bbc47cbd4200f2428258934e935699/utp/utp_pred_laws.thy}
\begin{itemize}
  \item $([\src]\upred, \lor, \pfalse, \land, \ptrue, \neg)$ is a Boolean algebra, and
  \item $([\src]\upred, \refinedby)$ is a complete lattice with infimum $\bigvee$, supremum $\bigwedge$, top element
     $\pfalse$, and bottom $\ptrue$.
\end{itemize}
\end{theorem}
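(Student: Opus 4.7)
The proof plan is to reduce everything to the fact that $\mathbb{B}$ itself is a complete Boolean algebra, and that pointwise lifting preserves this structure. Since $[\src]\upred \defs [\mathbb{B}, \src]\uexpr$ and, by the \textsf{typedef} of \cref{sec:utp-expr}, the carrier is isomorphic to $\src \to \mathbb{B}$, every predicate operator in \cref{def:predop} and the preceding discussion of \cref{sec:predcalc} is obtained by lifting a corresponding HOL operator pointwise. This is exactly the situation handled by Isabelle's lifting/transfer infrastructure and by the \textsf{rel-auto} tactic introduced in \cref{sec:utp-expr}, so after unfolding the lifting isomorphism each proof obligation becomes a universally quantified statement over $s : \src$ that holds because the corresponding identity already holds in $\mathbb{B}$.

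For the first bullet, I would discharge each Boolean algebra axiom in turn: commutativity, associativity, idempotence, and distributivity of $\land$ and $\lor$; the absorption laws; $\ptrue$ and $\pfalse$ being the identities; and the complement laws $P \land \neg P = \pfalse$ and $P \lor \neg P = \ptrue$. In each case, after applying \textsf{rel-auto} or manually extensionally unfolding via the \textsf{typedef}, the goal becomes a pointwise statement $\forall s. \varphi(s)$ where $\varphi$ is a propositional identity about $\mathbb{B}$, dischargeable by \textsf{auto} or \textsf{blast}.

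For the second bullet, I would first show that $\refinedby$ is a partial order: reflexivity, transitivity, and antisymmetry follow from the same properties of HOL implication, where antisymmetry requires the extensionality of predicates that the \textsf{typedef} gives us. Next, note that $P \refinedby \pfalse \iff \forall s. \pfalse(s) \implies P(s)$, which is trivially true, so $\pfalse$ is the top, and dually $\ptrue$ is the bottom. For $\bigvee S$ as the infimum with respect to $\refinedby$, observe that $\bigvee S \refinedby P$ for each $P \in S$ reduces pointwise to $P(s) \implies (\bigvee_{Q \in S} Q)(s)$, immediate from the universal property of disjunction in $\mathbb{B}$; and if $R \refinedby P$ for every $P \in S$, then $R \refinedby \bigvee S$ by the dual universal property. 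The argument for $\bigwedge S$ as supremum is symmetric.

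The proof is essentially routine algebraic bookkeeping, and I expect no deep obstacle. The only subtlety is making sure that the lifting through the \textsf{typedef} is set up so that each equation and ordering really does reduce to its pointwise counterpart; but this is exactly what the lifting package and \textsf{rel-auto} were engineered to handle in \cref{sec:utp-expr}, so in practice every clause should fall to a single tactic invocation once the underlying Boolean algebra instance on $\mathbb{B}$ is referenced. A minor care-point is the handling of the top/bottom swap: because $P \refinedby Q$ means $[Q \implies P]$, the inequality is the reverse of pointwise implication, and consequently $\bigvee$ gives the infimum while $\bigwedge$ gives the supremum, as stated.
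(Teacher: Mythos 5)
Your proposal is correct and follows essentially the same route as the paper, which justifies the theorem with the single observation that all the predicate operators are pointwise liftings of the corresponding HOL operators on $\mathbb{B}$, so the complete Boolean algebra structure transfers directly. Your extra care about the direction of $\refinedby$ (so that $\pfalse$ is top and $\bigvee$ is the infimum) is exactly the right detail to flag and matches the statement as given.
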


\noindent As usual, via the Knaster-Tarski theorem, for any monotonic function $F : [\src]\upred \to [\src]\upred$ we
can describe the least and greatest fixed points, $\mu F$ and $\nu F$, which in UTP are called the weakest and strongest
fixed points, and obey the usual fixed point laws. We can also algebraically characterise the UTP variable quantifiers
using Cylindric Algebra~\cite{Tarski71}, which axiomatises the quantifiers of first-order logic.
\begin{theorem} For any $\src$, $([\src]\upred, \lor, \land, \neg, \pfalse, \ptrue, \exists, =)$ forms a Cylindric
  Algebra, meaning that the following laws are satisfied for total lenses $x$, $y$, and $z$: \isalink{https://github.com/isabelle-utp/utp-main/blob/d18036c259bbc47cbd4200f2428258934e935699/utp/utp_pred_laws.thy\#L959}
  \begin{align*}
    (\exists x @ \pfalse) &= \pfalse \tag{C1} \label{law:C1} \\[.5ex]
    (\exists x @ P) &\refinedby P  \tag{C2} \label{law:C2} \\[.5ex]
    (\exists x @ (P \land (\exists x @ Q))) &= ((\exists x @ P) \land (\exists x @ Q)) \tag{C3} \label{law:C3} \\[.5ex]
    (\exists x @ \exists y @ P) &= (\exists y @ \exists x @ P) \tag{C4} \label{law:C4} \\[.5ex]
    (x = x) &= \ptrue \tag{C5} \label{law:C5} \\[.5ex]
    (y = z) &= (\exists x @ y = x \land x = z) & \textrm{if}~x \lindep y, x \lindep z \tag{C6} \label{law:C6} \\[.5ex]
    \pfalse &= \left(\begin{array}{l}
      (\exists x @ x = y \land P) \land \\
      (\exists x @ x = y \land \neg P)
    \end{array} \right) & \textrm{if}~x \lindep y \tag{C7} \label{law:C7}
  \end{align*}
\end{theorem}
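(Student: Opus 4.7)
The strategy is to reduce each axiom to a pointwise identity on $\src \to \mathbb{B}$, by unfolding the definitions from \cref{def:predop} together with the pointwise lifts of $\land$, $\lor$, $\neg$ and the definition of expression equality, $(x = y)\,s = (\lget_x\,s = \lget_y\,s)$. Once unfolded, each law becomes a first-order claim about states, discharged by the total-lens axioms \ref{law:put-get}--\ref{law:get-put} together with the independence axioms \ref{law:PutsComm}--\ref{law:PutIrr2} where the side-conditions supply them. This is exactly the kind of goal one expects the forthcoming \textsf{rel-auto} tactic to dispatch, but I will sketch the critical equational steps by hand.

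Axioms \ref{law:C1} and \ref{law:C5} are immediate: quantifying $\pfalse$ still yields $\pfalse$, and $\lget_x\,s = \lget_x\,s$ is reflexive. For \ref{law:C2}, the witness at state $s$ is $v = \lget_x\,s$: by \ref{law:get-put} we have $\lput_x\,s\,v = s$, so $P(\lput_x\,s\,v) = P(s)$. For \ref{law:C3}, the essential observation is that $(\exists x @ Q)(\lput_x\,s\,u)$ collapses by \ref{law:put-put} to $(\exists x @ Q)(s)$ (the inner $\lput_x$ overrides the outer), after which both sides reduce to $(\exists x @ P)(s) \land (\exists x @ Q)(s)$. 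For \ref{law:C6}, first apply \ref{law:put-get} to rewrite $(x = y)\,(\lput_x\,s\,v)$ as $v = \lget_y\,(\lput_x\,s\,v)$, then use \ref{law:PutIrr1} from $x \lindep y$ and $x \lindep z$ to strip $\lput_x$ off $\lget_y$ and $\lget_z$; the one-point rule eliminates $v$ and leaves the required $\lget_y\,s = \lget_z\,s$. For \ref{law:C7}, the same simplifications pin both existentials to $v = \lget_y\,s$, so the conjunction asks for $P$ and $\neg P$ at the same state $\lput_x\,s\,(\lget_y\,s)$, which is propositionally $\pfalse$.

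The main obstacle is \ref{law:C4}. Unfolding reduces it to equality of the two iterated images $\{\lput_y\,(\lput_x\,s\,u)\,v \mid u, v\}$ and $\{\lput_x\,(\lput_y\,s\,v)\,u \mid v, u\}$. In the common case that $x \lindep y$, a single application of \ref{law:PutsComm} swaps the two puts and settles the equality; when one lens is a sublens of the other, \ref{law:sl-put-put} collapses one of the iterated puts to a single-quantifier form that matches both sides. Without such a hypothesis, equality of these two images is not automatic, and this is the step I would concentrate effort on --- either by examining whether an implicit independence side-condition is needed, or by appealing to a finer structural argument about the way $x$ and $y$ jointly partition updates to $\src$. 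Once the seven laws are in place, instantiating the cylindric-algebra locale is routine.
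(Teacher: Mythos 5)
Your reduction to pointwise identities over $\src$, discharged by the total-lens and independence axioms, is exactly how the paper's proof proceeds (the paper offers no textual argument; the Isabelle mechanisation unfolds Definition~\ref{def:predop} and closes each goal with the lens laws), and your treatments of \ref{law:C1}--\ref{law:C3} and \ref{law:C5}--\ref{law:C7} are correct as sketched.

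Your hesitation over \ref{law:C4} is also well placed: as printed, with no side condition, that law is false for arbitrary total lenses, so do not try to prove it in that generality. Unfolding, the two sides quantify $P$ over the images $(R_x ; R_y)(s)$ and $(R_y ; R_x)(s)$, where $R_x(s) = \{\lput_x~s~u \mid u \in \view_x\}$ is the ``same $x$-complement'' equivalence class of $s$; equality for every $P$ forces these composites of equivalence relations to coincide, and two uniform partitions need not permute. Concretely, take $\src = \{1,\dots,6\}$ and total lenses $x$ and $y$ with Boolean views whose put-reachability blocks are $\{1,2\},\{3,4\},\{5,6\}$ and $\{2,3\},\{4,5\},\{6,1\}$ respectively; each is a total lens, being a labelled uniform partition ($\src \cong \mathbb{B} \times \{1,2,3\}$ in two different ways). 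From state $1$, putting $x$ then $y$ reaches $\{1,2,3,6\}$, whereas putting $y$ then $x$ reaches $\{1,2,5,6\}$, so with $P(s) \defs (s = 3)$ the two sides of \ref{law:C4} disagree at $1$. The mechanisation accordingly proves \ref{law:C4} only under a hypothesis, as two separate lemmas: one assuming $x \lindep y$ (your \ref{law:PutsComm} argument) and one assuming $x \lequiv y$ (which covers $x = y$, where \ref{law:put-put} collapses both sides to a single quantifier). These suffice for the intended instances, namely identical or distinct variables of an alphabet, which are pairwise independent by construction. So your plan is sound provided you prove \ref{law:C4} under one of these side conditions rather than for arbitrary pairs of total lenses.
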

\noindent From this algebra, the usual laws of quantification can be derived~\cite{Tarski71}. These laws illustrate the
difference in expressive power between HOL and UTP variables. For the former, we cannot pose meta-logical questions like
whether two variable names $x$ and $y$ refer to the same region, such as may be the case if they are aliased. For this
kind of property, we can use lens independence $x \lindep y$, as required by laws \ref{law:C6} and \ref{law:C7}. We
also prove the following laws for quantification.

\begin{theorem} If $A$ and $B$ are total lenses, then the following identities hold: \isalink{https://github.com/isabelle-utp/utp-main/blob/d18036c259bbc47cbd4200f2428258934e935699/utp/utp_pred_laws.thy\#L656}
\begin{align*}
  (\exists A \lplus B @ P) &= (\exists A @ \exists B @ P) & \textrm{if}~A \lindep B \tag{Ex1} \label{law:ExPlus} \\
  (\exists B @ \exists A @ P) &= (\exists A @ P) & \textrm{if}~B \lsubseteq A \tag{Ex2} \label{law:ExSubsume} \\
  (\exists A @ P) &= (\exists B @ P) & \textrm{if}~A \lequiv B \tag{Ex3} \label{law:ExEquiv}
\end{align*}
\end{theorem}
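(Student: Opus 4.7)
The plan is to unfold the lens existential from \cref{def:predop} in each case and then appeal to the corresponding sublens-order law from \S\ref{sec:order-eq}. For \ref{law:ExPlus}, expanding the left-hand side via \cref{def:lens-sum} gives $\lput_{A \lplus B}~s~(v_1, v_2) = \lput_B~(\lput_A~s~v_1)~v_2$, with the hypothesis $A \lindep B$ consumed by the Sum Closure lemma to ensure $A \lplus B$ is a total lens; the right-hand side unfolds symmetrically to a nested pair of existentials over the same body, and the equality follows from the standard HOL isomorphism $(\exists (x,y) : X \times Y @ Q\,x\,y) \iff (\exists x @ \exists y @ Q\,x\,y)$. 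For \ref{law:ExSubsume}, unfolding yields $(\exists B @ \exists A @ P)(s) = \exists v_B : \view_B @ \exists v_A : \view_A @ P(\lput_A~(\lput_B~s~v_B)~v_A)$; applying Law~\ref{law:sl-put-put}, which is available because $B \lsubseteq A$, rewrites the inner double-put to $\lput_A~s~v_A$. The body then no longer mentions $v_B$, and since $\view_B$ is non-empty by the lens definition, the outer existential collapses, leaving $(\exists A @ P)(s)$.

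For \ref{law:ExEquiv} the natural algebraic route is unsatisfying: two applications of \ref{law:ExSubsume} only give $\exists A @ P = \exists B @ \exists A @ P$ and $\exists B @ P = \exists A @ \exists B @ P$, and closing the gap would require a commutation of the inner lens quantifiers, which normally needs $A \lindep B$ and is incompatible with $A \lequiv B$ in any non-trivial case. I would therefore reason about images directly. From $A \lequiv B$, unpack \cref{def:sublens} in both directions to obtain total lenses $Z_1 : \view_A \ltto \view_B$ with $A = Z_1 \lcomp B$ and $Z_2 : \view_B \ltto \view_A$ with $B = Z_2 \lcomp A$. For any state $s$, the composition rule for $\lput$ (Definition~\ref{def:lenscomp}) yields the two translations $\lput_A~s~v = \lput_B~s~(\lput_{Z_1}~(\lget_B~s)~v)$ and $\lput_B~s~u = \lput_A~s~(\lput_{Z_2}~(\lget_A~s)~u)$. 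These exhibit a mutual inclusion between $\{\lput_A~s~v : v \in \view_A\}$ and $\{\lput_B~s~u : u \in \view_B\}$, from which $\exists v @ P(\lput_A~s~v) \iff \exists u @ P(\lput_B~s~u)$ is immediate.

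The main obstacle is \ref{law:ExEquiv}: the first two identities reduce essentially to unfold-and-rewrite and should yield to \textsf{rel-auto} with the relevant algebraic facts in scope, whereas \ref{law:ExEquiv} resists purely algebraic treatment and forces us to step outside the calculus and reason about the concrete image each lens carves out of a state. The subtlety is that a single sublens witness only gives one direction of the inclusion; only by threading both directions of $\lequiv$ through the $\lget$-dependent translation do we obtain surjectivity of the induced map on views, which is what makes the two existentials coincide.
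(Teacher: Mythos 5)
Your proposal is correct, and since the paper defers all proofs of this theorem to the Isabelle mechanisation, your unfold-the-definitions route through the lens algebra (closure of $\lplus$ for \ref{law:ExPlus}, Law~\ref{law:sl-put-put} plus non-emptiness of $\view_B$ for \ref{law:ExSubsume}, and the two sublens witnesses giving equal $\lput$-images for \ref{law:ExEquiv}) is exactly the argument the mechanised proof carries out. Your diagnosis of \ref{law:ExEquiv} is also sound: under $A \lequiv B$ the two instances of \ref{law:ExSubsume} reduce the needed commutation $(\exists A @ \exists B @ P) = (\exists B @ \exists A @ P)$ to the very identity being proved, so the direct image-equality argument is genuinely required rather than merely convenient.
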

\noindent Here, lenses $A$ and $B$ can be interpreted as variable sets. \ref{law:ExPlus} shows that quantifying over two
disjoint sets of variables equates to quantification over both. Disjointness of variable sets is modelled by requiring
that the corresponding lenses are independent. \ref{law:ExSubsume} shows that quantification over a larger lens subsumes
a smaller lens. Finally \ref{law:ExEquiv} shows that if we quantify over two lenses that identify the same subregion
then those two quantifications are equal. We now have a complete set of operators and laws for the predicate calculus.

\subsection{Meta-Logic}
\label{sec:meta-logic}

Lenses treat variables as semantic objects that can be checked for independence, ordered, and composed in various
ways. As we have noted, we can consider such manipulations as meta-logical with respect to the predicate. We add further
specialised meta-logical queries for expressions.

Often we want to check which regions of the state space an expression depends on, for example to support laws of
programming and verification calculi, like Example~\ref{ex:asn-comm}. In a deep embedding, this is characterised
syntactically using free variables. However, lenses allow us to characterise a corresponding semantic notion called
``unrestriction'', which is originally due to Oliveira et al.~\cite{Oliveira07}.

\begin{definition}[Unrestriction] \label{def:unrest} \isalink{https://github.com/isabelle-utp/utp-main/blob/d18036c259bbc47cbd4200f2428258934e935699/utp/utp_unrest.thy\#L43}
  \begin{align*}
    \text{-} \unrest \text{-} ~~:~~~& (A \lto S) \to [B, S]\uexpr \to \mathbb{B} \\
    (x \unrest e) ~\defs~~&\left(\forall (s, k) @ e(\lput_x~s~k) = e(s)\right)
  \end{align*}
\end{definition}
\noindent Intuitively, lens $x$ is unrestricted in expression $e$, written $x \unrest e$, provided that $e$'s valuation
does not depend on $x$. Specifically, the effect of $e$ evaluated under state $s$ is the same if we change the value of
$x$. For example, $x\,\unrest\,(y + 2 > y)$ is true, provided that $x \lindep y$, since the truth value of $y + 2 > y$
is unaffected by changing $x$. Unrestriction is a weaker notion than free variables: if $x$ is not free in $e$ then
$x \unrest e$, but not the inverse. For example, $x \unrest\, (x = \ptrue \lor x = \pfalse)$ for $x : \bool$ is true,
since this expression is always true no matter the valuation of $x$. As we shall see, unrestriction is a sufficient
notion to formalise the provisos for the laws of programming. Below are some key laws for establishing whether an
expression is unrestricted by a variable.
\begin{lemma}[Unrestriction Laws] If $x$ and $y$ are total lenses, then the following laws hold:
 
  \begin{center}
  \begin{tabular}{ccccccc}
  \AxiomC{---\vphantom{$P$}}
  \UnaryInfC{$x \unrest \ulit~k$}
  \DisplayProof 
  &
  \AxiomC{$x \lindep y$}
  \UnaryInfC{$x \unrest \uvare~y$}
  \DisplayProof
  &
  \AxiomC{$x \unrest u$}
  \UnaryInfC{$x \unrest \uuop~f~u$}
  \DisplayProof
  &
  \AxiomC{$x \unrest u$}
  \AxiomC{$x \unrest v$}
  \BinaryInfC{$x \unrest \ubop~f~u~v$}
  \DisplayProof
  &
  \AxiomC{---\vphantom{$P$}}
  \UnaryInfC{$\lzero \unrest u$}
  \DisplayProof
  &
  \AxiomC{$x \lsubseteq y$}
  \AxiomC{$y \unrest u$}
  \BinaryInfC{$x \unrest u$}
  \DisplayProof
  &
  \AxiomC{$x \lindep y$}   
  \AxiomC{$x \unrest u$}
  \AxiomC{$y \unrest u$}
  \TrinaryInfC{$(x \lplus y) \unrest u$}
  \DisplayProof
  \end{tabular}

  \vspace{2ex}

  \begin{tabular}{ccccccc}
    \AxiomC{$x \lsubseteq y$}
    \UnaryInfC{$x \unrest (\exists y @ P)$}
    \DisplayProof
    &
    \AxiomC{$x \lindep y$}
    \AxiomC{$x \unrest P$}
    \BinaryInfC{$x \unrest (\exists y @ P)$}
    \DisplayProof
    &
    \AxiomC{$x \lsubseteq y$}
    \UnaryInfC{$x \unrest (\forall y @ P)$}
    \DisplayProof
    &
    \AxiomC{$x \lindep y$}
    \AxiomC{$x \unrest P$}
    \BinaryInfC{$x \unrest (\forall y @ P)$}
    \DisplayProof
  \end{tabular}
\end{center}

\end{lemma}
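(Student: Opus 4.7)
The plan is to discharge each rule by unfolding the definition $x \unrest e \iff (\forall s, k @ e(\lput_x~s~k) = e(s))$ together with the constructor definitions in Definition~\ref{def:expconstr}. Each judgment then reduces to a direct application of a lens law already proved in \S\ref{sec:lens-indep} and \S\ref{sec:order-eq}.

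For the base expression rules, the literal case is immediate because $(\ulit~k)(s) = k$ is independent of $s$. The variable case $x \unrest \uvare~y$ under $x \lindep y$ unfolds to $\lget_y(\lput_x~s~k) = \lget_y~s$, which is exactly axiom~\ref{law:PutIrr2} of Definition~\ref{def:lens-indep}. The $\uuop$ and $\ubop$ rules propagate the hypothesised equalities through pointwise applications of $f$, and $\lzero \unrest u$ is immediate since $\lput_\lzero~s~k = s$ by the definition of $\lzero$.

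For the combinator rules, the sublens case $x \lsubseteq y,\, y \unrest u \vdash x \unrest u$ uses Definition~\ref{def:sublens} to factor $x = Z \lcomp y$, so that $\lput_x~s~k = \lput_y~s~(\lput_Z~(\lget_y~s)~k)$ by Definition~\ref{def:lenscomp}; applying $y \unrest u$ to the outer $\lput_y$ then rewrites $u(\lput_x~s~k)$ to $u(s)$. The sum rule unfolds $\lput_{x \lplus y}~s~(v_1, v_2) = \lput_y~(\lput_x~s~v_1)~v_2$ via Definition~\ref{def:lens-sum}, then peels off the outer $\lput_y$ using $y \unrest u$ and the inner $\lput_x$ using $x \unrest u$; independence $x \lindep y$ is needed only to ensure $x \lplus y$ is a total lens.

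For the quantifier cases we use that $(\exists y @ P)(s) = (\holop{\exists} v @ P(\lput_y~s~v))$. When $x \lsubseteq y$, Law~\ref{law:sl-put-put} gives $\lput_y~(\lput_x~s~k)~v = \lput_y~s~v$, so the two existential bodies coincide. When instead $x \lindep y$ and $x \unrest P$, we commute $\lput_y~(\lput_x~s~k)~v = \lput_x~(\lput_y~s~v)~k$ by \ref{law:PutsComm}, and then discharge the remaining $\lput_x$ using $x \unrest P$. The universal cases follow by the dual argument through $\neg$. There is no deep obstacle here; the only subtlety is the sublens cases, where one must explicitly expose the witnessing total lens $Z$ of Definition~\ref{def:sublens} (or use the derived Law~\ref{law:sl-put-put}) rather than reasoning from $\lsubseteq$ abstractly, which is precisely what the preparatory development of \S\ref{sec:order-eq} affords us.
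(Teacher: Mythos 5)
Your proposal is correct and follows the same route as the paper's (mechanised) proofs: unfold Definition~\ref{def:unrest} and the constructor definitions, then discharge each case with the appropriate lens law (\ref{law:PutIrr2} for variables, \ref{law:sl-put-put} and \ref{law:PutsComm} for the sublens and independence quantifier cases). Your observation that independence is not needed for the equational argument in the sum rule, only for closure of $x \lplus y$ under totality, is also accurate.
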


\noindent These laws are formulated in the style of an inductive definition, but in reality they are a set of lemmas in
Isabelle over our deep expression model. Expression $\ulit~k$ does not depend on the state since it always returns $k$:
any lens $x$ is unrestricted. Any lens $y$ that is independent of $x$ is unrestricted $\uvare~x$. The laws for $\uuop$
and $\ubop$ require simply that the component expressions have lens $x$ unrestricted. The $\lzero$ lens is unrestricted
in any expression $u$, since it characterises none of the state space.

Unrestriction is preserved by the lens order $\lsubseteq$. The summation of lenses $x$ and $y$ is unrestricted in $u$
provided that $x$ and $y$ are independent, and both are unrestricted in $u$. Again, we note that $\lplus$ can
effectively be used to group lenses in order to characterise a set of variables. Following a lens quantification over
$y$, any sublens $x$ of $y$ becomes unrestricted. On the other hand, the restriction of any lens independent of $x$
is unchanged. The final two laws are the dual case for the universal lens quantification.

We can also prove the following correspondence between predicate unrestriction and quantification.

\begin{lemma} \label{thm:ex-unrest} If $x$ is a total lens then $(x \unrest P) \iff (\exists x @ P) = P$. \isalink{https://github.com/isabelle-utp/utp-main/blob/d18036c259bbc47cbd4200f2428258934e935699/utp/utp_pred_laws.thy\#L724} \end{lemma}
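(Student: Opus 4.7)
The plan is to unfold both sides to their definitions as HOL predicates and reason directly, making essential use of the total lens axioms \ref{law:put-put} and \ref{law:get-put} from Definition~\ref{def:total-lenses}. Recall that $x \unrest P$ means $\forall s, k @ P(\lput_x~s~k) = P(s)$, and $(\exists x @ P)~s$ means $\exists v : \view_x @ P(\lput_x~s~v)$. Since $P : [\src]\upred$ has Boolean return type, the semantic equality $(\exists x @ P) = P$ amounts to a pointwise logical equivalence $\forall s @ P(s) \iff \exists v @ P(\lput_x~s~v)$.

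For the forward direction, I would assume $x \unrest P$ and fix a state $s$. Unrestriction gives $P(\lput_x~s~v) = P(s)$ for every $v$, so $\exists v @ P(\lput_x~s~v)$ collapses to $P(s)$, provided $\view_x$ is inhabited — which it is, since lens view types are non-empty by definition. (Equivalently, one may witness the existential with $v = \lget_x~s$ and recover $P(s)$ by \ref{law:get-put}.)

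For the backward direction, assume $(\exists x @ P) = P$, i.e., $P(s') \iff \exists v @ P(\lput_x~s'~v)$ for every $s'$. Fix $s$ and $k$, and apply this equivalence at the state $s' = \lput_x~s~k$. The right-hand side then becomes $\exists v @ P(\lput_x~(\lput_x~s~k)~v)$, which by axiom \ref{law:put-put} simplifies to $\exists v @ P(\lput_x~s~v)$. Applying the assumed equivalence again at $s$ itself reduces this to $P(s)$. Chaining these two applications yields $P(\lput_x~s~k) \iff P(s)$, i.e., $x \unrest P$.

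The only subtle point is the backward direction, where it would be tempting to instantiate the existential witness with $k$ directly; the cleaner argument is to invoke the assumption at the shifted state $\lput_x~s~k$ and use \ref{law:put-put} to absorb the nested put, which is the genuine content of the lemma. No other lens axiom is needed, and the forward direction is essentially immediate from \ref{law:get-put} and non-emptiness of $\view_x$.
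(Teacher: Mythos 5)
Your proof is correct, and it is essentially the argument the mechanisation uses: the paper itself omits the proof (deferring to the Isabelle sources), but the standard route is exactly yours — the forward direction from unrestriction plus non-emptiness of $\view_x$ (or \ref{law:get-put}), and the backward direction by instantiating the fixed-point equation at the shifted state $\lput_x~s~k$ and absorbing the nested put with \ref{law:put-put}. Your remark that naively witnessing the existential with $k$ only yields one implication correctly identifies where the real content of the lemma lies.
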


\noindent We can alternatively characterise unrestriction of $x$ by showing that quantifying over $x$ in $e$ has no effect (it is a fixed point), which is a well-known property
from Cylindric Algebra~\cite{Tarski71} and also Liberation Algebra~\cite{Dongol19,Colvin2017}. Specifically, quantification
liberates the lens $x$ so that it is free to take any value. Nevertheless, \cref{thm:ex-unrest} cannot replace
Definition~\ref{def:unrest}, as it applies only if $P$ is a predicate, and not for an arbitrary expression. We can prove a
number of useful corollaries for quantification.

\begin{corollary} If $x \unrest P$ then $(\exists x @ P) = P$ and $(\exists x @ P \land Q) = (P \land (\exists x @ Q))$. \isalink{https://github.com/isabelle-utp/utp-main/blob/d18036c259bbc47cbd4200f2428258934e935699/utp/utp_pred_laws.thy\#L740}
\end{corollary}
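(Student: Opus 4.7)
The first equation is essentially a restatement of one direction of Lemma~\ref{thm:ex-unrest}: the biconditional $(x \unrest P) \iff (\exists x @ P) = P$ holds for total lenses, so the assumption $x \unrest P$ immediately yields $(\exists x @ P) = P$. No further work is required for this half, apart from observing that we are implicitly assuming $x$ is a total lens, as in the surrounding development.

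For the second equation, the plan is to derive it algebraically from law~\ref{law:C3} together with the first part. Instantiating~\ref{law:C3} with $P$ and $Q$ swapped, I obtain
\[
(\exists x @ Q \land (\exists x @ P)) ~=~ (\exists x @ Q) \land (\exists x @ P).
\]
Now I use the first part of this corollary: since $x \unrest P$, we have $(\exists x @ P) = P$. Substituting this on both sides gives $(\exists x @ Q \land P) = (\exists x @ Q) \land P$, and commutativity of conjunction inside and outside the quantifier rearranges this into the desired form $(\exists x @ P \land Q) = (P \land (\exists x @ Q))$.

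The main obstacle is negligible; the only thing one has to be careful about is that~\ref{law:C3} as stated distributes the quantifier over the outer $P$, whereas here we want to pull the unrestricted predicate out from under the quantifier, so the $P$/$Q$ swap and the subsequent use of commutativity are what align the laws correctly. Alternatively, one could bypass~\ref{law:C3} altogether and reason semantically: unfolding $\exists$ via $\lput_x$ and applying Definition~\ref{def:unrest} to pull $P(\lput_x\,s\,v) = P(s)$ outside the HOL existential yields the result in a single calculation, but the algebraic route above keeps the proof at the level of the derived laws of the predicate calculus and is therefore the natural choice given how the section is structured.
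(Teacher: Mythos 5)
Your proof is correct and follows the route the paper intends: the first identity is one direction of Lemma~\ref{thm:ex-unrest}, and the second is obtained by instantiating the cylindric-algebra law \ref{law:C3} and eliminating $(\exists x @ P)$ via the first part, exactly as the placement of the corollary after those two results suggests. The implicit totality assumption on $x$ that you flag is indeed required and is consistent with the surrounding development.
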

\noindent The cases for universal quantification ($\forall$) also hold by duality.

Aside from checking for use of variables, UTP theories often require that the state space of a predicate can be extended
with additional variables. An example of this is the variable block operator that adds a new local variable. In
Isabelle/UTP, alphabets are implicitly characterised by state-space types, rather than explicitly as sets of
variables. Consequently, we perform alphabet extensions by manipulation of the underlying state space using lenses. We
define the following operator to extend a state space.

\begin{definition}[Alphabet Extrusion] \isalink{https://github.com/isabelle-utp/utp-main/blob/d18036c259bbc47cbd4200f2428258934e935699/utp/utp_alphabet.thy\#L36}
  \begin{align*}
    \text{-} \!\aext \text{-} ~~:~~~& [A, S_1]\uexpr \to (S_1 \lto S_2) \to [A, S_2]\uexpr \\
    e \aext a       ~~\defs~~& \left(\lambda s @ e(\lget_a~s)\right) 
  \end{align*}
\end{definition}

\noindent Alphabet extrusion, $e \aext a$, uses the lens $a : S_1 \lto S_2$ to alter the state space of $e : [A,
  S_1]\uexpr$ to become $S_2$. The lens effectively describes how one state space can be embedded into another. The
operator can be used either to extend a state space, or coerce it to an isomorphic one. For example, if we assume we
have a state space composed of two sub-regions: $\src \defs A \times B$, then, a predicate $P : [A]\upred$, which acts on
state space $A$, can be coerced to one acting on $\src$ by an alphabet extrusion: $P \oplus \lfst :
[\src]\upred$. Naturally, we know that the resulting predicate is unrestricted by the $B$ region.

\begin{lemma}[Alphabet Extrusion Laws] \isalink{https://github.com/isabelle-utp/utp-main/blob/d18036c259bbc47cbd4200f2428258934e935699/utp/utp_alphabet.thy\#L58}
  $$\begin{array}{rclrcl}
    \ulit~k     \aext a &=& \ulit~k  & \quad \uvare~x    \aext a &=& \uvare~(x \lcomp a) \\
    \uuop~f~u   \aext a &=& \uuop~f~(u \aext a) & \quad \ubop~g~u~v \aext a &=& \ubop~g~(u \aext a)~(v \aext a)
  \end{array}$$
\end{lemma}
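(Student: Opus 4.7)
The plan is to prove each of the four equations by a direct computation that unfolds $\aext$, the relevant expression constructor, and (for the variable case) the definition of lens composition. Since $e \aext a \defs \lambda s @ e(\lget_a~s)$ and each expression constructor is defined pointwise via Isabelle's lifting package, the whole lemma reduces to pointwise equalities of functions on the state space, which can be discharged by function extensionality followed by $\beta$-reduction.

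Concretely, I would take the cases in the order they appear. First, for the literal, $\ulit~k \aext a = \lambda s @ (\ulit~k)(\lget_a~s) = \lambda s @ k = \ulit~k$; the extrusion discards the state, so the lens $a$ plays no role. Second, for the variable, the crucial step is Definition~\ref{def:lenscomp}, which gives $\lget_{x \lcomp a} = \lget_x \circ \lget_a$. Then
\begin{align*}
\uvare~x \aext a = \lambda s @ \lget_x(\lget_a~s) = \lambda s @ \lget_{x \lcomp a}~s = \uvare~(x \lcomp a).
\end{align*}
Third and fourth, the $\uuop$ and $\ubop$ cases are pure computations: unfolding extrusion through the outer constructor exposes subexpressions $u(\lget_a~s)$ and $v(\lget_a~s)$, which are exactly $(u \aext a)(s)$ and $(v \aext a)(s)$ by definition, so refolding yields $\uuop~f~(u \aext a)$ and $\ubop~g~(u \aext a)~(v \aext a)$ respectively.

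There is no real obstacle here; the lemma is essentially a homomorphism statement for $(-)\aext a$ over the expression constructors of Definition~\ref{def:expconstr}. The only step that is not pure unfolding is the variable case, where one invokes the composition identity $\lget_x \circ \lget_a = \lget_{x \lcomp a}$. In Isabelle, each equation is proved by \textsf{transfer} followed by a one-line computation, and the whole result can be bundled as a simp-set that rewrites extrusions over arbitrary expression structure; this is what makes it useful as a building block for coercing predicates between state spaces in later sections.
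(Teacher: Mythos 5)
Your proof is correct and is essentially the proof underlying the paper's mechanisation: the paper omits the argument, but each equation follows by unfolding $\aext$ and the constructors of Definition~\ref{def:expconstr} pointwise, with the variable case additionally using $\lget_{x \lcomp a} = \lget_x \circ \lget_a$ from Definition~\ref{def:lenscomp}. You also correctly note that no lens axioms (totality, independence) are needed, since everything is definitional.
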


\noindent Alphabet extrusion has no effect on a literal expression, other than to change its type, because it refers to no
lenses. A variable constructor has its lens augmented by composing it with the alphabet lens $a$. The extrusion simply
distributes through unary and binary expressions.

\subsection{Substitutions}
\label{sec:subst}

Substitution is an operator for replacing a variable in an expression with another expression: $e[f/x]$. Like free
variables, it is often considered as a meta-logical operator~\cite{Gordon1989-VerifyHOL}. However, shallow embeddings
can support a similar operator at the semantic level~\cite{Gordon1989-VerifyHOL,Back2005-Procedures}, for which the
substitution laws can be proved as theorems. Here, we generalise this using lenses, and treat substitutions as first-class
citizens that can contain multiple variable mappings, and also conditional substitutions. This allows us to unify
substitution and multiple variable assignment, which we demonstrate in \cref{thm:proglaws}~\eqref{law:LP4} and
\cref{thm:asnlaws}.

Substitutions can be modelled as total functions $\sigma, \rho : \src \to \src$ that transform an initial state to a
final state. However, it is more intuitive to consider a substitution as a set of mappings from variables to
expressions: $[x \smapsto e, y \smapsto f]$. The simplest substitution, $id \defs (\lambda x @ x)$, leaves the state
unchanged. We define operators for querying and updating our semantic substitution objects below.
\begin{definition}[Substitution Query and Update] \isalink{https://github.com/isabelle-utp/utp-main/blob/d18036c259bbc47cbd4200f2428258934e935699/utp/utp_subst.thy\#L72}
\begin{align*}
  \usublk{\sigma}{x} ~\defs~ (\lambda s @ \lget_x ~ (\sigma(s))) \qquad
  \sigma(x \smapsto e) ~\defs~ (\lambda s @ \lput_x ~ (\sigma(s)) ~ (e(s)))
\end{align*}
\end{definition}
\noindent Substitution query $\usublk{\sigma}{x} : [A, \src]\uexpr$ returns the expression associated with $x : A \lto
\src$ in $\sigma : \src \to \src$ by composition of the latter with $\lget_x$. Substitution update assigns the
expression $e : [A, \src]\uexpr$ to the lens $x$ in $\sigma$. The definition constructs a function of type
$\src \to \src$ that inputs the state $s$, evaluates $e$ with respect to $s$, calculates the state space updated by
$\sigma$, and then uses $\lput$ to update the value of $x$ in $\sigma(s)$ with the evaluated expression. We can then
introduce the short-hands
\begin{align*}
  \sigma(x_1 \smapsto e_1 , \cdots, x_n \smapsto e_n) &\defs (x_1 \smapsto e_1)\cdots(x_n \smapsto e_n) \\
  [x_1 \smapsto e_1 , \cdots, x_n \smapsto e_n] &\defs id(x_1 \smapsto e_1 , \cdots, x_n \smapsto e_n)
\end{align*}
which, respectively, update a
substitution $\sigma$ in $n$ variables by assigning $e_i$ to each variable $x_i$, and construct a new substitution from
a set of maplets. If we have $x_i \lindep x_j$ for each such variable, then these updates are effectively concurrent,
regardless of the expressions. Otherwise, a syntactically later assignment ($x_j$) can override an earlier one ($x_i$
with $i < j$). This allows us to model multiple variable assignment, and also evaluation contexts for programs and
expressions.

We also prove a number of laws about substitutions constructed from maplets.

\begin{lemma} \label{thm:subst} If $x$ and $y$ are total lenses, then the following identities hold: \isalink{https://github.com/isabelle-utp/utp-main/blob/bee1f650ce9a5f2f6faf7e5d76da99d4265cd5f0/utp/utp_subst.thy}
  \begin{align*}
    \usublk{\sigma(x \smapsto e)}{x} &= e \tag{SB1} \label{law:SB1} \\
    \sigma(x \smapsto \usublk{\sigma}{x}) &= \sigma \tag{SB2} \label{law:SB2} \\
    [x \smapsto x] &= id \tag{SB3} \label{law:SB3} \\
    \sigma(x \smapsto e, y \smapsto f) &= \sigma(y \smapsto f, x \smapsto e) & \textnormal{if } x \lindep y \tag{SB4} \label{law:SB4} \\
    \sigma(x \smapsto e, y \smapsto f) &= \sigma(y \smapsto f) & \textnormal{if } x \lsubseteq y \tag{SB5} \label{law:SB5}
  \end{align*}
\end{lemma}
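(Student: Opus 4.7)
The plan is to treat each of the five identities as a pointwise statement about functions $\src \to \src$, and reduce it to one of the lens axioms already established earlier in the paper. In each case I unfold the definitions of $\usublk{\cdot}{\cdot}$ and $\sigma(x \smapsto e)$, push the outer $\lambda s$ aside, and apply a suitable lens law to the argument. Extensionality of functions then closes the goal.

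For \ref{law:SB1}, unfolding gives $\usublk{\sigma(x \smapsto e)}{x}(s) = \lget_x(\lput_x(\sigma(s))(e(s)))$, which collapses to $e(s)$ by \ref{law:put-get}. For \ref{law:SB2}, the right-hand side of the update is $\usublk{\sigma}{x}(s) = \lget_x(\sigma(s))$, so $\sigma(x \smapsto \usublk{\sigma}{x})(s) = \lput_x(\sigma(s))(\lget_x(\sigma(s)))$, which reduces to $\sigma(s)$ by \ref{law:get-put}. For \ref{law:SB3}, I instantiate \ref{law:SB2} with $\sigma := id$; it suffices to note that $\usublk{id}{x}(s) = \lget_x(s) = (\uvare~x)(s)$, so the displayed substitution $[x \smapsto x]$ is literally $id(x \smapsto \usublk{id}{x})$, and \ref{law:SB2} gives the result.

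The two remaining laws use the lens relations rather than the raw axioms. For \ref{law:SB4}, unfolding both sides yields
\begin{align*}
\sigma(x \smapsto e, y \smapsto f)(s) &= \lput_y(\lput_x(\sigma(s))(e(s)))(f(s)), \\
\sigma(y \smapsto f, x \smapsto e)(s) &= \lput_x(\lput_y(\sigma(s))(f(s)))(e(s)),
\end{align*}
and these coincide by \ref{law:PutsComm} applied to the state $\sigma(s)$ with values $e(s)$ and $f(s)$, using the hypothesis $x \lindep y$. For \ref{law:SB5}, the two sides unfold to $\lput_y(\lput_x(\sigma(s))(e(s)))(f(s))$ and $\lput_y(\sigma(s))(f(s))$, and equality follows directly from the sublens overriding law \ref{law:sl-put-put}, applied under the assumption $x \lsubseteq y$.

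I do not expect any serious obstacle: each clause is a one-line calculation once the substitution operators are unfolded, and the real content has already been packaged into the earlier lens laws. The only mildly delicate point is \ref{law:SB3}, where one must recognise the implicit identification of the meta-level variable symbol $x$ on the right of the maplet with the expression $\uvare~x$, but this is discharged by the computation $\usublk{id}{x} = \uvare~x$ noted above, after which \ref{law:SB2} applies verbatim.
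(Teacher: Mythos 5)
Your proposal is correct and follows essentially the same route the paper intends: each identity is discharged pointwise by unfolding the query/update definitions and applying the corresponding lens law --- (PutGet) for SB1, (GetPut) for SB2 and SB3, (LI1) for SB4, and (LS1) for SB5 --- exactly as the paper's commentary after the lemma indicates. Your derivation of SB3 as an instance of SB2 via $\usublk{id}{x} = \uvare~x$ is a harmless minor variation on applying (GetPut) directly.
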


\noindent Law~\ref{law:SB1} states that looking up $x$ in a constructed substitution returns its associated expression,
$e$. It essentially follows from lens axiom \ref{law:put-get}. Law~\ref{law:SB2} is an $\eta$-conversion principle,
which follows from \ref{law:get-put}. Law~\ref{law:SB3} states that updating a variable to itself has no
effect. Law~\ref{law:SB4} states that two substitution maplets, $x \smapsto e$ and $y \smapsto f$, commute provided that
$x$ and $y$ are independent. Similarly, \ref{law:SB5} states that a later assignment for $y$ overrides an earlier
assignment for $x$ when $y$ is a wider lens than $x$. This is, of course, true in particular when $x = y$, which reduces
to lens axiom \ref{law:put-put}.

Substitutions can be composed in sequence using function composition. Conditional substitutions can be expressed using
the following construct.

\begin{definition}[Conditional Substitution] $\scond{\sigma}{b}{\rho} ~\defs~ (\lambda s @ \hifthenelse{b(s)}{\sigma(s)}{\rho(s)})$. \isalink{https://github.com/isabelle-utp/utp-main/blob/d18036c259bbc47cbd4200f2428258934e935699/utp/utp_expr.thy\#L92}
\end{definition}

\noindent A conditional substitution is equivalent to $\sigma$ when $b$ is true, and $\rho$ otherwise. The definition
evaluates $b$ under the incoming state $s$, and then chooses which substitution to apply based on this. Substitutions
can be applied to an expression using the following operator.

\begin{definition}[Substitution Application] \isalink{https://github.com/isabelle-utp/utp-main/blob/d18036c259bbc47cbd4200f2428258934e935699/utp/utp_subst.thy\#L35}
  \begin{align*}
    -\!\usubapp\!- &~\colon (\src \to \src) \to [A, \src]\uexpr \to [A, \src]\uexpr \\
    \sigma \usubapp e &\triangleq (\lambda s @ e(\sigma(s)))
  \end{align*}
\end{definition}
\noindent Application of a substitution $\sigma$ to an expression $e$ simply evaluates $e$ in the context of state
$\sigma(s)$. We can also model the classical syntax for substitution, $P[v/x] \defs [x \smapsto v] \usubapp P$, and
prove the substitution laws~\cite{Back2005-Procedures}.
\begin{lemma}[Substitution Application Laws] \label{thm:substapp} \isalink{https://github.com/isabelle-utp/utp-main/blob/bee1f650ce9a5f2f6faf7e5d76da99d4265cd5f0/utp/utp_subst.thy}
  \begin{align*}
    \sigma \usubapp \uvare~x &= \usublk{\sigma}{x} \tag{SA1} \label{law:SA1} \\
    \usubupd{\sigma}{x}{e} \usubapp u &= \sigma \usubapp u & \textnormal{if } x \unrest u \tag{SA2}\label{law:SA2} \\
    \sigma \usubapp \uuop~f~v &= \uuop~f~(\sigma \usubapp v) \tag{SA3} \label{law:SA3} \\
    \sigma \usubapp \ubop~f~u~v &= \ubop~f~(\sigma \usubapp u)~(\sigma \usubapp v) \tag{SA4} \label{law:SA4} \\
    (\exists y @ P)[e/x] & = (\exists y @ P[e/x]) &\textrm{if}~x \lindep y, y \unrest e \tag{SA5}\label{law:SA5} \\
    id \usubapp e &= e \tag{SA6} \label{law:SA6} \\
    \sigma \usubapp \rho \usubapp e &= (\rho \circ \sigma) \usubapp e \tag{SA7} \label{law:SA7} \\
    \rho(x \smapsto e) \circ \sigma &= (\rho \circ \sigma)(x \smapsto \sigma \usubapp e) \tag{SA8} \label{law:SA8} \\
    \scond{\sigma(x \smapsto e)}{b}{\rho(x \smapsto f)} &= (\scond{\sigma}{b}{\rho})(x \smapsto (\econd{e}{b}{f})) \tag{SA9} \label{law:SA9}
  \end{align*}
\end{lemma}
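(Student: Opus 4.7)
The plan is to prove each of the nine equations by unfolding the semantic definitions of $\usubapp$, the maplet update $\smapsto$, and the relevant expression constructors from \cref{def:expconstr}, and then appealing to the lens axioms (\ref{law:put-get}, \ref{law:put-put}, \ref{law:get-put}), the independence law \ref{law:PutsComm}, and, where needed, the definition of unrestriction. Each law ultimately reduces to a pointwise identity on functions $\src \to A$.

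I would first dispatch the routine cases \ref{law:SA1}, \ref{law:SA3}, \ref{law:SA4}, \ref{law:SA6}, \ref{law:SA7}, \ref{law:SA8}, and \ref{law:SA9} by expanding $\sigma \usubapp e = \lambda s @ e(\sigma(s))$ at an arbitrary state $s$. Then \ref{law:SA1} reduces to $\lget_x(\sigma(s)) = \lget_x(\sigma(s))$ via the definition of $\uvare$; \ref{law:SA3} and \ref{law:SA4} simply push the application of $\sigma(s)$ through the pointwise-lifted unary and binary combinators; \ref{law:SA6} is immediate from $id(s) = s$; and \ref{law:SA7} evaluates both sides to $e(\rho(\sigma(s)))$, noting that by type considerations $\sigma \usubapp \rho \usubapp e$ must parse as $\sigma \usubapp (\rho \usubapp e)$. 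For \ref{law:SA8}, both sides evaluated at $s$ yield $\lput_x~(\rho(\sigma(s)))~(e(\sigma(s)))$, and \ref{law:SA9} follows by case analysis on $b(s)$. Among the remaining cases, \ref{law:SA2} is mildly substantive: unfolding $\usubupd{\sigma}{x}{e} \usubapp u$ at $s$ yields $u(\lput_x~(\sigma(s))~(e(s)))$, and the hypothesis $x \unrest u$ supplies $u(\lput_x~s'~k) = u(s')$ for all $s', k$, which when instantiated with $s' := \sigma(s)$ and $k := e(s)$ collapses the term to $u(\sigma(s)) = (\sigma \usubapp u)(s)$.

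The main obstacle will be \ref{law:SA5}, the capture-avoiding substitution law under a quantifier, which is the only case requiring both provisos simultaneously. Unfolding $P[e/x] = [x \smapsto e] \usubapp P$ together with $\exists y @ P = \lambda s @ \holop{\exists} v : \view_y @ P(\lput_y~s~v)$, the goal at an arbitrary $s$ reduces to
\[ \holop{\exists} v @ P\bigl(\lput_y~(\lput_x~s~(e(s)))~v\bigr) \;=\; \holop{\exists} v @ P\bigl(\lput_x~(\lput_y~s~v)~(e(\lput_y~s~v))\bigr). \]
Working on the right-hand side inside the existential, I would first use $y \unrest e$ to rewrite $e(\lput_y~s~v)$ as $e(s)$, thereby eliminating the apparent dependency of the substitute on the bound variable $v$. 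Then I would apply $x \lindep y$ via \ref{law:PutsComm} to swap the two nested $\lput$ operations, which matches the left-hand side exactly. Both provisos are genuinely needed here: without $y \unrest e$ the substitute would be captured by the binder, and without $x \lindep y$ the two updates would not commute. Once this case is settled, all nine laws are established.
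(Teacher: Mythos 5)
Your proposal is correct and follows essentially the same route as the paper's (mechanised) proof: unfold $\usubapp$, the maplet update, and the expression constructors pointwise, then close \ref{law:SA2} with the definition of unrestriction and \ref{law:SA5} by first using $y \unrest e$ to make the substitute independent of the bound value and then commuting the two $\lput$s via \ref{law:PutsComm}. Your calculations for each of the nine cases, including the parse of \ref{law:SA7} and the two provisos of \ref{law:SA5}, check out against the definitions in \S\ref{sec:utp-expr}--\S\ref{sec:subst}.
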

\noindent Application of $\sigma$ to a variable $x$ is the valuation of $x$ in $\sigma$ (\ref{law:SA1}). A substitution maplet for
an unrestricted variable can be removed (\ref{law:SA2}). Substitutions distribute through both unary (\ref{law:SA3}) and
binary (\ref{law:SA4}) operators. A singleton substitution for variable $x$ can pass through an existential
quantification over $y$ provided that $x$ and $y$ are independent, and $e$ is unrestricted by $y$ (\ref{law:SA5}), which
prevents variable capture. Application of $id$ has no effect (\ref{law:SA6}), and application of two substitutions can
be expressed by their composition (\ref{law:SA7}). \ref{law:SA8} shows that when $\sigma$ is composed with another
substitution composed of maplets ($x \smapsto e$), it is simply applied to the expression $e$ of every such
maplet. \ref{law:SA9} shows how two substitutions with matching maplets can be conditionally composed, by distributing
the conditional.

We can also use substitutition and unrestriction to prove the one-point law~\cite{Hehner1990} from predicate calculus,
as employed in Hehner's classic textbook on predicative programming~\cite{Hehner93}.

\begin{theorem} If $x$ is a total lens and $x \unrest e$ then $(\exists x @ P \land x = e) = P[e/x]$ \isalink{https://github.com/isabelle-utp/utp-main/blob/d18036c259bbc47cbd4200f2428258934e935699/utp/utp_pred_laws.thy\#L658}
\end{theorem}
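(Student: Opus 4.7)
The plan is to unfold every definition on both sides and reduce to a HOL one-point law, using \ref{law:put-get} and the unrestriction assumption as the two key simplifying steps.

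First I would expand the left-hand side using Definitions~\ref{def:predop} and~\ref{def:expconstr} (overloading $=$ as an expression equality, which unfolds via $\ubop$ applied to $\uvare~x$ and $e$). This gives a pointwise function
\[
  (\exists x @ P \land x = e) ~=~ \lambda s @\, \holop{\exists} v : \view_x @\, P(\lput_x~s~v) \land \lget_x(\lput_x~s~v) = e(\lput_x~s~v).
\]
Next I would apply \ref{law:put-get} to reduce $\lget_x(\lput_x~s~v)$ to $v$, and apply the unrestriction hypothesis $x \unrest e$, which by Definition~\ref{def:unrest} rewrites $e(\lput_x~s~v)$ to $e(s)$. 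The body of the HOL existential then becomes $P(\lput_x~s~v) \land v = e(s)$.

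Now the HOL one-point rule eliminates the bound variable: since $v$ is constrained to equal $e(s)$, the existential collapses to $P(\lput_x~s~(e(s)))$. Finally, I would unfold the right-hand side using the abbreviation $P[e/x] \defs [x \smapsto e] \usubapp P$ from \S\ref{sec:subst}, together with the definition of substitution application and the maplet update on $id$, obtaining exactly $\lambda s @ P(\lput_x~s~(e(s)))$, so both sides agree.

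The only nontrivial step is the passage $e(\lput_x~s~v) = e(s)$, which is precisely the role of the unrestriction hypothesis; without it the one-point elimination would be unsound because the witnessing expression could itself depend on the bound value. In the Isabelle/UTP mechanisation I expect this proof to discharge almost automatically via \textsf{rel-auto} after rewriting with $x \unrest e$, since everything else reduces to pointwise HOL reasoning on lens axioms.
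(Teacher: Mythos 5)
Your proof is correct and follows exactly the route the paper intends (the paper states this theorem without an explicit proof, noting only that it follows from substitution and unrestriction, and in the mechanisation it is discharged by pointwise reasoning of the kind you describe): unfold the lens quantifier and expression equality, use \ref{law:put-get} and $x \unrest e$ to reduce the body to $P(\lput_x~s~v) \land v = e(s)$, apply the HOL one-point rule, and recognise the result as $[x \smapsto e] \usubapp P$. You also correctly identify the unrestriction hypothesis as the essential side condition preventing the witnessing expression from depending on the bound value.
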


\noindent As for expressions, we define an operator to extend the alphabet of a substitution.

\begin{definition}[Substitution Alphabet Extrusion] \label{def:saext} $\sigma \saext a ~\defs~ (\lambda s @ \lput_a ~ s ~ (\sigma(\lget_a ~ s)))$ \isalink{https://github.com/isabelle-utp/utp-main/blob/d18036c259bbc47cbd4200f2428258934e935699/utp/utp_alphabet.thy\#L330}
\end{definition}

\noindent We use the lens $a : S_1 \lto S_2$ to coerce $\sigma : S_1 \to S_1$ to the state space $S_2$. The resulting substitution
first obtains an element of $S_1$ from the incoming state $s : S_2$ using $\lget_a$, applies the substitution to this,
and then places the updated state back into $s$ using $\lput_a$. This is the essence of a framed computation; the parts
of $s$ outside of the view of $a$ are unchanged.

\subsection{Relational Programs}
\label{sec:relcalc}

A relation is a predicate on a product space $\src_1 \times \src_2$, specifically, where $\src_1$ and $\src_2$ are the
state spaces before and after execution, respectively. All laws that have been proved for expressions and predicates
therefore hold for relations. We define types for both heterogeneous relations,
$[\src_1,\src_2]\urel \defs [\src_1 \times \src_2]\upred$, and homogeneous relations
$[\src]\hrel \defs [\src,\src]\urel$. Operators $\ptrue$ and $\pfalse$ can be specialised in this relational setting,
and stand for the most and least non-deterministic relations. Due to their special role, we use the notation
$\true$ and $\false$ to explicitly refer to these relational counterparts.

In common with formal languages like Z~\cite{Spivey89} and B~\cite{Abrial1996a}, UTP~\cite{Hoare&98} uses the notational
convention for variables that $x$ is the initial value and $x'$ is its final value. The former can be denoted by the
lens composition $x \lcomp \lfst$, and the latter by $x \lcomp \lsnd$, where $x$ is actually the name of a lens of type
$\tau \lto \src$. In this presentation we define the operators below for lifting variables, expressions, and
substitutions into the product space.

\begin{definition}[Pre- and Postcondition Lifting] \isalink{https://github.com/isabelle-utp/utp-main/blob/d18036c259bbc47cbd4200f2428258934e935699/utp/utp_lift.thy}
  $$\upre{x} \defs x \lcomp \lfst \qquad \upost{x} \defs x \lcomp \lsnd \qquad \upre{e} \defs e \aext \lfst \qquad \upost{e} \defs e \aext \lsnd \qquad \upre{\sigma} \defs \sigma \saext \lfst \qquad \upost{\sigma} \defs \sigma \saext \lsnd$$
\end{definition}

\noindent The $\upre{{}}$ and $\upost{{}}$ lift a lens, expression, or substitution, into the first and second
components of a product state space $\src_1 \times \src_2$. We deviate from the standard notation to avoid the ambiguity
between ``$x$'' meaning an expression variable or an initial relational variable. Operator $\upre{e}$ lifts an
expression $e : [A, \src_1]\uexpr$ to an expression on the product state space $\src_1 \times \src_2$, for any given
$\src_2$. If $e$ is a predicate on the state variables, then $\upre{e}$ is a predicate on the initial state, that is a
precondition. Similarly, $\upost{e}$ constructs a postcondition with state space $\src_1 \times \src_2$ from
$e : [A, \src_2]\uexpr$. The analogous operators $\upre{\sigma}$ and $\upost{\sigma}$ lift a substitution to the product
space.

We can now define the main programming operators of the relational calculus.
\begin{definition}[Programming Operators] \label{def:progop} $ $ \isalink{https://github.com/isabelle-utp/utp-main/blob/master/utp/utp_rel.thy} \newline %

  \vspace{-3ex}

  \noindent\parbox{.5\linewidth}{
    \begin{align*}
      (P \relsemi Q)        &~\defs~ (\bm{\exists} \mv{v}_0 @ P[\mv{v}_0/\upost{\uv}] \land Q[\mv{v}_0/\upre{\uv}]) \\
      \II                   &~\defs~ (\upost{\uv} = \upre{\uv}) \\
      \uassigns{\sigma}     &~\defs~ (\upost{\uv} = \sigma(\upre{\uv}))
    \end{align*}}
  \parbox{.5\linewidth}{
    \begin{align*}
      \conditional{P}{b}{Q} &~\defs~ (\upre{b} \land P) \lor (\neg \upre{b} \land Q) \\
      \while{b}{Q}          &~\defs~ \nu X @ \conditional{P \relsemi X}{b}{\II}
    \end{align*}}

  \vspace{-2ex}

\end{definition}

\noindent These broadly follow the common definitions given in relational calculus and the UTP book~\cite{Hoare&98}, but
with subtle differences due to our use of lenses. Relational composition $P \relsemi Q$ existentiality quantifies a
logical variable $\mv{v}_0$ that stands for the intermediate state between $P$ and $Q$. It is substituted as the final
state of $P$ (using $\upost{\uv}$), and the initial state of $Q$; the resulting predicates are then conjoined. This
definition yields a homogeneous composition operator, though in Isabelle/UTP composition is heterogeneous and has type
$[\src_1, \src_2]\urel \to [\src_2, \src_3]\urel \to [\src_1, \src_3]\urel$\footnote{Technically, in Isabelle/UTP we
  obtain $\relsemi$ by lifting the HOL relational composition operator, and so the equation in \cref{def:progop} is
  actually a special case theorem. Nevertheless, it gives adequate intuition for this paper, and avoids the introduction
  of a further layer of abstraction.}.

Relational identity ($\II$), or skip, equates the initial state with the final state, leaving all variables
unchanged. $\uassigns{\sigma}$ is a generalised assignment operator, originally proposed by Back and von
Wright~\cite[page~2]{Back1998} using a substitution $\sigma : \src \to \src$. Its definition states that the final state
is equal to the initial state with $\sigma$ applied. The substitution $\sigma$ can be constructed as a set of maplets,
so that a singleton assignment $x := v$ can be expressed as $\uassigns{x \smapsto v}$, and a multiple variable
assignment as $\uassigns{x_1 \smapsto v_1, \cdots, x_n \smapsto v_n}$. Since $x$ can be any lens in an assignment, we
can use it to assign any part of the state hierarchy that can be so characterised, for example an element of the heap
(\cref{ex:st-hp}) or the attribute of an object (\cref{ex:objlens}).

Conditional $\conditional{P}{b}{Q}$ states that if $b$ is true then behave like $P$, otherwise behave like $Q$. In the
UTP book~\cite{Hoare&98} the fact that $b$ acts on initial variables is a syntactic convention, whereas here this
wellformedness condition is imposed by construction using $\upre{b}$. For completeness, we use this operator, and the
fact that all predicates, including relations, form a complete lattice, to define the while loop operator
$\while{b}{P}$. We use the strongest fixed-point, $\nu$, since we use it for partial correctness verification in
\S\ref{sec:verify-rel}.

We now have all the operators of a simple imperative programming language, and can prove the laws of
programming~\cite{Hoare87,Hoare&98}, but with a generalised presentation.

\begin{theorem}[Generalised Laws of Programming] \label{thm:proglaws} $ $ \isalink{https://github.com/isabelle-utp/utp-main/blob/d18036c259bbc47cbd4200f2428258934e935699/utp/utp_rel_laws.thy} \newline %
  \vspace{-3ex}

  \noindent
  \parbox{.45\linewidth}{
  \begin{align*}
    (P \relsemi Q) \relsemi R &= P \relsemi (Q \relsemi R) \tag{LP1} \\
    \II \relsemi P &= P \relsemi \II = P \tag{LP2} \\
    \false \relsemi P &= P \relsemi \false = \false \tag{LP3} \\
    \uassigns{\sigma} \relsemi P &= \upre{\sigma} \usubapp P \tag{LP4} \label{law:LP4} \\
    \conditional{\uassigns{\sigma}}{b}{\uassigns{\rho}} &= \uassigns{\scond{\sigma}{b}{\rho}} \tag{LP5} \label{law:LP5}
  \end{align*}
  }
  \parbox{.55\linewidth}{
  \begin{align*}
    \conditional{(P \relsemi (\while{b}{P}))}{b}{\II}\, &= \while{b}{P} \tag{LP6} \\
    \conditional{P}{\ptrue}{Q} &= P \tag{LP7} \\
    \conditional{P}{\pfalse}{Q} &= Q \tag{LP8} \\
    \conditional{P}{\neg b}{Q} &= \conditional{Q}{b}{P} \tag{LP9} \\
    (\conditional{P}{b}{Q}) \relsemi R &= \conditional{(P \relsemi R)}{b}{(P \relsemi R)} \tag{LP10} \label{law:LP10}
  \end{align*}}

  \vspace{-2ex}
\end{theorem}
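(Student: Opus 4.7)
The overall plan is to dispatch each law by unfolding the relational operators from \cref{def:progop} down to their underlying predicate/state-function form, and then discharging the resulting goal with the predicate calculus laws of \cref{thm:predcba}, the cylindric algebra laws, and the substitution/unrestriction lemmas from \S\ref{sec:subst}. In spirit, this is exactly what the \textsf{rel-auto} tactic of \S\ref{sec:utp-expr} automates: reduce the state variable $s$ (actually the product $(s,s')$ in the relational setting) to fresh logical variables and let the HOL simplifier close the goal. My plan is to first describe, for each law, which semantic unfolding and which auxiliary lemma drives the proof, so that the use of \textsf{rel-auto} is principled rather than magical.

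The routine laws follow immediately. Associativity (LP1), unitality of $\II$ (LP2) and annihilation by $\false$ (LP3) are obtained by unfolding $\relsemi$ to its one-point-quantified form and eliminating the intermediate variable $\mv{v}_0$ via the one-point law (using that $\upost{\uv}$ is $\uv \lcomp \lsnd$ and dually for $\upre{\uv}$, so both are total lenses satisfying $\unrest$-properties inherited from $\lfst$ and $\lsnd$). The conditional laws LP7--LP9 are direct consequences of unfolding $\conditional{P}{b}{Q}$ to $(\upre{b} \land P) \lor (\neg \upre{b} \land Q)$ and using Boolean algebra on $\upre{\ptrue}$, $\upre{\pfalse}$, $\neg \upre{b}$, and commutativity of $\lor$. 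LP6 is the fixed-point unrolling: since $F(X) \defs \conditional{P \relsemi X}{b}{\II}$ is monotone in $X$, by Knaster--Tarski $\nu F$ is a fixed point, and LP6 is precisely $F(\nu F) = \nu F$ rewritten using the definition of \textbf{while}.

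The substantive cases are LP4, LP5 and LP10. For LP4, I would unfold $\uassigns{\sigma} \relsemi P$ to
\[
\bm{\exists} \mv{v}_0 @ (\upost{\uv} = \sigma(\upre{\uv}))[\mv{v}_0/\upost{\uv}] \land P[\mv{v}_0/\upre{\uv}],
\]
then apply the one-point law (\cref{thm:ex-unrest} and its corollaries) to eliminate $\mv{v}_0$ with the witness $\sigma(\upre{\uv})$, leaving $P[\sigma(\upre{\uv})/\upre{\uv}]$. I then show this equals $\upre{\sigma} \usubapp P$ by unfolding $\upre{\sigma} = \sigma \saext \lfst$ (\cref{def:saext}) and applying the substitution application laws \ref{law:SA1}--\ref{law:SA4}, \ref{law:SA7} inductively on the structure of $P$. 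LP5 then reduces, after unfolding $\uassigns{\sigma}$ on both sides and distributing $\upre{b}$ through, to an instance of the conditional substitution law \ref{law:SA9} lifted to the precondition space. LP10 is the usual left distributivity of sequential composition through the conditional; unfolding both sides gives, after routine manipulation,
\[
\bm{\exists} \mv{v}_0 @ ((\upre{b} \land P) \lor (\neg \upre{b} \land Q))[\mv{v}_0/\upost{\uv}] \land R[\mv{v}_0/\upre{\uv}],
\]
and since $\upre{b}$ depends only on $\upre{\uv}$ (i.e.\ $\upost{\uv} \unrest \upre{b}$), the substitution does not touch it, and we can pull the $\upre{b} \land \cdots$ disjunction outside the existential by the distribution of $\exists$ over $\lor$.

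The main obstacle I anticipate is LP4, which is where the full machinery of lenses, unrestriction, and semantic substitution has to be made to behave as a deep-style syntactic substitution. The delicate point is that $\upre{\sigma} \usubapp P$ must genuinely equal the result of the one-point elimination, and this requires that the $\upre{{}}/\upost{{}}$ liftings commute correctly with $\saext$ and with $\usubapp$. I would prove a small auxiliary lemma $\upre{\sigma} \usubapp P = P[\sigma(\upre{\uv})/\upre{\uv}]$ first, driven by \ref{law:SA1}--\ref{law:SA4} and the lens composition laws for $\lfst$, and then LP4 reduces to a clean application of the one-point law. Once LP4 is established, LP5 inherits from it together with \ref{law:SA9}, and the remaining laws are straightforward.
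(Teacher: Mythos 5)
Your overall strategy---unfold the operators of Definition~\ref{def:progop} to their underlying semantic form and discharge the resulting goals with Boolean algebra, the one-point law, and the substitution lemmas of \S\ref{sec:subst}---is essentially the paper's own route: the text gives no written proof, and all ten laws are established in the mechanisation by definitional unfolding and \textsf{rel-auto}, with \textsf{LP6} additionally obtained by fixed-point unrolling exactly as you describe. Your handling of LP1--LP3 and LP5--LP10 is sound.

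The one step that would fail as literally written is in your proof of \ref{law:LP4}: you propose to establish the auxiliary identity $\upre{\sigma} \usubapp P = P[\sigma(\upre{\uv})/\upre{\uv}]$ by ``applying \ref{law:SA1}--\ref{law:SA4} and \ref{law:SA7} inductively on the structure of $P$''. In this shallow embedding $P$ is an arbitrary element of $[\src_1,\src_2]\urel$, i.e.\ a semantic function on $\src_1 \times \src_2$; it has no syntax tree to induct over, and the constructors of Definition~\ref{def:expconstr} are not exhaustive for arbitrary relations. Fortunately no induction is needed: both sides evaluate pointwise to $\lambda (s,s') @ P(\sigma(s), s')$, since by Definition~\ref{def:saext} and the definition of $\lfst$ the lifting $\upre{\sigma} = \sigma \saext \lfst$ sends $(s,s')$ to $(\sigma(s), s')$, and the substitution $[\upre{\uv} \smapsto \sigma(\upre{\uv})]$ denotes the same state transformer. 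Replacing the induction by this direct unfolding, your one-point argument for \ref{law:LP4} goes through, and \ref{law:LP5} follows as you say. One further remark: the right-hand side of \ref{law:LP10} as printed in the statement repeats $P \relsemi R$ in both branches; the law you actually prove, with $Q \relsemi R$ in the else-branch, is the intended (and correct) one.
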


\noindent The majority of these are standard, and therefore we select only a few for commentary. \ref{law:LP4} is a
generalisation of the forward assignment law: an assignment by $\sigma$ followed by relation $P$ is equivalent to
$\sigma$ applied to the initial variables of $P$. The more traditional formulation~\cite{Hoare87,Hoare&98},
$x := e \relsemi P = P[e/x]$, is an instance of this law. Similarly, \ref{law:LP5} is a generalised conditional
assignment law which combines $\sigma$ and $\rho$ into a single conditional substitution. All the other assignment laws
can be proved, but we need some additional properties for relational substitutions, which are shown
below. \ref{law:LP10} would normally require as a proviso that $b$ is an expression in initial variables only, but in
our setting this fact follows by construction.

\begin{lemma}[Relational Substitutions and Assignment] \label{thm:relsubst} $ $ \isalink{https://github.com/isabelle-utp/utp-main/blob/d18036c259bbc47cbd4200f2428258934e935699/utp/utp_rel.thy\#L555} \newline
  \vspace{-4ex}

  \noindent
  \parbox{.35\linewidth}{
  \begin{align*}
    \upre{\sigma} \usubapp (P \relsemi Q) &= (\upre{\sigma} \usubapp P) \relsemi Q \tag{RS1} \label{law:RS1} \\
    \upost{\sigma} \usubapp (P \relsemi Q) &= P \relsemi (\upost{\sigma} \usubapp Q) \tag{RS2} \label{law:RS2}
  \end{align*}}
  \parbox{.65\linewidth}{
  \begin{align*}
    \upre{\sigma} \usubapp  \uassigns{\rho} &= \uassigns{\rho \circ \sigma} \tag{RS3} \label{law:RS3} \\
    \upre{\sigma} \usubapp (\conditional{P}{b}{Q}) &= \conditional{(\upre{\sigma} \usubapp P)}{\,(\sigma \usubapp b)\,}{(\upre{\sigma} \usubapp Q)} \tag{RS4} \label{law:RS4}
  \end{align*}}
  
  \vspace{-2ex}

\end{lemma}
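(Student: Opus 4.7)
The plan is to reduce all four identities to pointwise equational reasoning on the defining functions, using two preparatory observations. First, by unfolding \cref{def:saext} and applying the $\lget/\lput$ laws for $\lfst$ (and dually $\lsnd$), one shows $\upre{\sigma}$, viewed as a function on the product state, satisfies $\upre{\sigma}(s, s') = (\sigma(s), s')$, and similarly $\upost{\sigma}(s, s') = (s, \sigma(s'))$. Second, substitution application distributes through the propositional connectives and quantifiers by \ref{law:SA3}, \ref{law:SA4}, and the corresponding law for $\bm{\exists}$ (since its binder ranges over a fresh logical variable, so no capture occurs).

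For \ref{law:RS1}, I would expand $P \relsemi Q$ to its predicative-composition form $(s, s') \mapsto (\exists s_0 \mathbin{@} P(s, s_0) \land Q(s_0, s'))$, which is equivalent to \cref{def:progop} modulo the encoding of the middle state as $\mv{v}_0$. Applying $\upre{\sigma} \usubapp$ then pushes the pre-state update through the existential (the bound $s_0$ is unaffected) and replaces $P(s, s_0)$ by $P(\sigma(s), s_0)$, leaving $Q$ untouched. Rewrapping as relational composition yields $(\upre{\sigma} \usubapp P) \relsemi Q$. \ref{law:RS2} is entirely symmetric, with $\upost{\sigma}$ acting only on the second component and hence only on $Q$. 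For \ref{law:RS3}, unfold $\uassigns{\rho}$ to the defining equation $s' = \rho(s)$; applying $\upre{\sigma}$ replaces $s$ by $\sigma(s)$, yielding $s' = \rho(\sigma(s)) = (\rho \circ \sigma)(s)$, which is precisely $\uassigns{\rho \circ \sigma}$.

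For \ref{law:RS4}, first distribute $\upre{\sigma} \usubapp$ through the definition $(\upre{b} \land P) \lor (\neg \upre{b} \land Q)$ of the conditional using the lifted analogues of \ref{law:SA3} and \ref{law:SA4}. This leaves an obligation for the test: $\upre{\sigma} \usubapp \upre{b} = \upre{\sigma \usubapp b}$. This is a pointwise check, since $(\upre{\sigma} \usubapp \upre{b})(s, s') = b(\sigma(s)) = (\sigma \usubapp b)(s) = \upre{\sigma \usubapp b}(s, s')$. Reassembling the disjuncts recovers $\conditional{(\upre{\sigma} \usubapp P)}{(\sigma \usubapp b)}{(\upre{\sigma} \usubapp Q)}$.

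The main potential obstacle is the mismatch between the syntactic substitutions $P[\mv{v}_0/\upost{\uv}]$ in \cref{def:progop} and the semantic application $\upre{\sigma} \usubapp$: if one tries to commute them symbolically, the argument becomes cluttered. The remedy, used in all four cases, is to drop down to the pointwise characterisation $\upre{\sigma}(s, s') = (\sigma(s), s')$ obtained from the lens laws, after which each identity is a short calculation. The only other delicate point is the use of the $\bm{\exists}$-distribution in \ref{law:RS1} and \ref{law:RS2}: this is justified because $\mv{v}_0$ is a logical variable disjoint from every lens-valued state component, so \ref{law:SA5} applies trivially.
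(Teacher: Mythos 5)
Your proposal is correct and matches the approach taken in the paper's mechanisation: the paper gives no in-text proof for this lemma, and the corresponding Isabelle proofs discharge each identity by exactly the pointwise reduction you describe (unfolding $\upre{\sigma}(s,s') = (\sigma(s), s')$ and $\upost{\sigma}(s,s') = (s,\sigma(s'))$ via the product-lens laws and then calculating, which is what the \textsf{rel-auto} tactic automates). Your observation that the $\mv{v}_0$-based presentation of $\relsemi$ is best bypassed in favour of the underlying relational-composition form is also consistent with the paper's own footnote to Definition~\ref{def:progop}.
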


\noindent \ref{law:RS1} shows that a precondition substitution applies only to the first element of a sequential composition, and
\ref{law:RS2} is its dual. \ref{law:RS3} shows that precondition substitution applied to an assignment can be expressed
as a composite assignment. \ref{law:RS4} shows that $\upre{\sigma}$ applied to a conditional distributes through all
three arguments. The precondition annotation is removed when applied to $b$ since this is a precondition expression
already. Combining \cref{thm:proglaws} with \cref{thm:relsubst} we can prove the following corollaries of generalised
assignment.

\begin{corollary} \label{thm:gen-asn-laws} $\uassigns{id} = \II \qquad \uassigns{\sigma} \relsemi \uassigns{\rho} = \uassigns{\rho \circ \sigma} \qquad \uassigns{\sigma} \relsemi (\conditional{P}{b}{Q}) = \conditional{(\uassigns{\sigma} \relsemi P)}{\sigma \usubapp b}{(\uassigns{\sigma} \relsemi Q)}$
\end{corollary}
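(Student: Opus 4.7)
My plan is to derive each of the three corollaries as direct consequences of \cref{thm:proglaws} (in particular \ref{law:LP4}) combined with the corresponding law from \cref{thm:relsubst}, so that no fresh semantic reasoning over states is needed.

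For the first identity, $\uassigns{id} = \II$, I would simply unfold the definition of generalised assignment: $\uassigns{id} \defs (\upost{\uv} = id(\upre{\uv})) = (\upost{\uv} = \upre{\uv}) = \II$. This is an immediate consequence of the fact that $id$ is the identity substitution (\ref{law:SB3}) and the definition of $\II$.

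For the second identity, $\uassigns{\sigma} \relsemi \uassigns{\rho} = \uassigns{\rho \circ \sigma}$, the plan is to rewrite the left-hand side using \ref{law:LP4} with $P \mapsto \uassigns{\rho}$, giving $\upre{\sigma} \usubapp \uassigns{\rho}$, and then apply \ref{law:RS3} to collapse this to $\uassigns{\rho \circ \sigma}$. The chain is therefore purely a matter of composing two previously established rewrite rules.

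For the third identity, I would apply \ref{law:LP4} to the left-hand side with $P \mapsto (\conditional{P}{b}{Q})$, obtaining $\upre{\sigma} \usubapp (\conditional{P}{b}{Q})$. Distributing the precondition substitution through the conditional using \ref{law:RS4} yields $\conditional{(\upre{\sigma} \usubapp P)}{\,(\sigma \usubapp b)\,}{(\upre{\sigma} \usubapp Q)}$. Finally, I would apply \ref{law:LP4} in the reverse direction on each branch to replace $\upre{\sigma} \usubapp P$ by $\uassigns{\sigma} \relsemi P$, and similarly for $Q$, producing the right-hand side.

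The main obstacle I anticipate is not mathematical depth but making the manipulations of the precondition decoration $\upre{-}$ rigorous, particularly keeping clear which occurrences of $\sigma$ in \ref{law:RS4} require the decoration and which do not — the middle occurrence appearing as $\sigma \usubapp b$ lacks a precondition annotation because $b$ is already a precondition expression, and this asymmetry must be tracked carefully when quoting \ref{law:LP4} in both directions. Once that bookkeeping is done, each corollary reduces to a short chain of equational rewrites with no residual verification conditions.
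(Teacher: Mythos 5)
Your proposal is correct and matches the paper's intended derivation exactly: the paper proves this corollary by "combining" \cref{thm:proglaws} with \cref{thm:relsubst}, which is precisely your chain of \ref{law:LP4} followed by \ref{law:RS3} (second identity) and \ref{law:LP4}, \ref{law:RS4}, then \ref{law:LP4} in reverse on each branch (third identity), with the first identity following by unfolding the definition of generalised assignment with the identity substitution. The only minor quibble is that the fact you need for the first identity is $id = (\lambda x @ x)$ itself rather than \ref{law:SB3}, but that does not affect the argument.
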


\noindent Moreover, with \cref{thm:substapp}, \cref{thm:proglaws}, and \cref{thm:relsubst} we can prove
the classical assignment laws~\cite{Hoare87}.

\begin{corollary}[Assignment Laws] \label{thm:asnlaws} \isalink{https://github.com/isabelle-utp/utp-main/blob/d18036c259bbc47cbd4200f2428258934e935699/utp/utp_rel_laws.thy\#L322}
 $ $ \newline
  \vspace{-4ex}

  \noindent
  \parbox{.4\linewidth}{
  \begin{align*}    
  x := e \relsemi y := f &~=~ x, y := e, f[e/x] \\
  x := x &~=~ \II \\
  x, y := e, y &~=~ x := e
  \end{align*}}
  \parbox{.6\linewidth}{
  \begin{align*}
  x, y, z := e, f, g &~=~ y, x, z := f, e, g \\
    x := e \relsemi (\conditional{P}{b}{Q}) &~=~ \conditional{(x := e \relsemi P)}{b[e/x]}{(x := e \relsemi Q)} \\
  \conditional{x := e}{b}{x := f} &~=~ x := (\econd{e}{b}{f}) 
  \end{align*}}

  \vspace{-3ex}
\end{corollary}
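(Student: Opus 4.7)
The plan is to reduce each assignment equation to manipulations on the underlying generalised assignment $\uassigns{\sigma}$, by unfolding $x := e$ as $\uassigns{[x \smapsto e]}$ and $x_1, \ldots, x_n := e_1, \ldots, e_n$ as $\uassigns{[x_1 \smapsto e_1, \ldots, x_n \smapsto e_n]}$. With this in hand, each law follows mechanically from the substitution algebra (\cref{thm:subst}, \cref{thm:substapp}), the generalised programming laws (\cref{thm:proglaws}), and the corollaries already gathered in \cref{thm:gen-asn-laws}; the work lies in pairing each equation with the right sequence of rewrites rather than in any new reasoning.

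First I would handle the forward assignment law $x := e \relsemi y := f = x, y := e, f[e/x]$. Applying the composition corollary $\uassigns{\sigma} \relsemi \uassigns{\rho} = \uassigns{\rho \circ \sigma}$ from \cref{thm:gen-asn-laws} reduces the left-hand side to $\uassigns{[y \smapsto f] \circ [x \smapsto e]}$. Using \ref{law:SA8} to push the inner substitution into the maplet, and unfolding the defined shorthand $P[e/x] \defs [x \smapsto e] \usubapp P$, this rewrites to $\uassigns{[x \smapsto e, y \smapsto f[e/x]]}$, which is the desired right-hand side. The law $x := x = \II$ then follows from \ref{law:SB3} together with $\uassigns{id} = \II$; the overwriting law $x, y := e, y = x := e$ is a further application of \ref{law:SB3}-style cancellation (with $x \lindep y$ ensuring that the $y$-maplet collapses into $id$); and the swap $x, y, z := e, f, g = y, x, z := f, e, g$ is a direct use of commutativity \ref{law:SB4} on the $x$- and $y$-maplets.

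For $x := e \relsemi (\conditional{P}{b}{Q}) = \conditional{(x := e \relsemi P)}{b[e/x]}{(x := e \relsemi Q)}$, I would specialise the third clause of \cref{thm:gen-asn-laws} to $\sigma = [x \smapsto e]$ and unfold $\sigma \usubapp b$ to $b[e/x]$. Finally, the conditional-of-assignments law $\conditional{x := e}{b}{x := f} = x := (\econd{e}{b}{f})$ is obtained by first applying \ref{law:LP5} to merge the two branches into a single assignment over the conditional substitution $\scond{[x \smapsto e]}{b}{[x \smapsto f]}$, then using \ref{law:SA9} with $\sigma = \rho = id$ to collapse this into the single maplet $[x \smapsto (\econd{e}{b}{f})]$.

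The main obstacle I expect is not any single derivation but the bookkeeping around the composed-substitution law \ref{law:SA8}: in particular, the reorientation between function composition $\rho \circ \sigma$ (which applies $\sigma$ first) and the left-to-right reading of sequential assignment is easy to get the wrong way round, and it is this direction that determines whether $f[e/x]$ or the nonsensical $e[f/y]$ appears in the forward assignment law. The side conditions on lens independence (e.g.\ $x \lindep y$ in the swap and overwriting laws) are required but follow immediately from the fact that the classical notation $x, y := \cdots$ implicitly assumes distinct variables, so they impose no serious difficulty.
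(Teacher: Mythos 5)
Your proposal is correct and takes essentially the same route as the paper, which derives these laws from \cref{thm:substapp}, \cref{thm:proglaws}, and \cref{thm:relsubst} (via \cref{thm:gen-asn-laws}) exactly as you describe: unfold each assignment as a generalised assignment $\uassigns{\sigma}$ and rewrite with the substitution algebra, with \ref{law:SA8} orienting the forward-assignment law and \ref{law:LP5} plus \ref{law:SA9} handling the conditional. The only minor imprecision is that the overwriting law $x, y := e, y = x := e$ rests on \ref{law:SB4} followed by \ref{law:SB3} (or directly on \ref{law:SB2}) rather than on \ref{law:SB3} alone, but your appeal to $x \lindep y$ captures the essential point.
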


\noindent With these laws we can collapse any sequential and conditional composition of assignments into a single
assignment~\cite{Hoare87}. We illustrate this with the calculation below.

\begin{example}[Assignment Calculation] Assume $x$, $y$, and $z$ are independent total lenses, then:
  \begin{align*}
    x := 3 \relsemi y := x^2 + 4 \relsemi z := z \cdot y + x
    &~=~ \uassigns{x \smapsto 3} \relsemi \uassigns{y \smapsto x^2 + 4} \relsemi \uassigns{z \smapsto z \cdot y + x} & \text{[Definition]} \\
    &~=~ \uassigns{[y \smapsto x^2 + 4] \circ [x \smapsto 3]} \relsemi \uassigns{z \smapsto z \cdot y + x} & \text{[\ref{thm:gen-asn-laws}]} \\
    &~=~ \uassigns{(id \circ [x \smapsto 3])(y \smapsto [x \smapsto 3] \usubapp x^2 + 4} \relsemi \uassigns{z \smapsto z \cdot y + x} & \text{[\ref{law:SA8}]} \\
    &~=~ \uassigns{x \smapsto 3, y \smapsto 3^2 + 4} \relsemi \uassigns{z \smapsto z \cdot y + x} & \text{[\ref{thm:substapp}]} \\
    &~=~ \uassigns{[z \smapsto z \cdot y + x] \circ [x \smapsto 3, y \smapsto 9 + 4]} & \text{[\ref{thm:gen-asn-laws}]} \\
    &~=~ \uassigns{x \smapsto 3, y \smapsto 13, z \mapsto z \cdot 13 + 3} & \qed
  \end{align*}
\end{example}

\noindent The result is a simultaneous assignment to the three variables, $x$, $y$, and $z$. The value of $z$ depends on
its initial value, which is unknown, and so the variable is retained.

We have shown in this section how lenses and our mechanisation of UTP support a relational program model that satisfies the
laws of programming. In the next section we use them to derive operational semantics and verification calculi in
Isabelle/UTP.

\section{Automating Verification Calculi}
\label{sec:verify-rel}
In this section we demonstrate how the foundations established in \S\ref{sec:lenses} and \S\ref{sec:isabelle-utp} can be
applied to automated program analysis. We show how concrete programs can be modelled, symbolically executed, and
automatically verified using mechanically validated operational and axiomatic semantics, including Hoare logic and
refinement calculus. Though the mechanisation of such calculi has been achieved several times
before~\cite{Gordon1989-VerifyHOL,vonWright1994-RefHOL,Nipkow2014-ConcreteSemantics,Armstrong2015}, the novelty of our
work is that lenses allow us to (1) express the meta-logical provisos of variables for the various calculi, (2) unify a
variety of state space models, and (3) reason about aliasing of variables using independence and containment. We
illustrate this with a number of standard examples, and a larger verification of a find-and-replace algorithm that
utilises lenses in modelling arrays.

We consider the encoding of programs in Isabelle/UTP in \S\ref{sec:encprog}, symbolic execution in \S\ref{sec:symexec},
Hoare logic verification in \S\ref{sec:hoarelogic}, and finally refinement calculus in \S\ref{sec:ref-calc}.

\subsection{Encoding Programs}
\label{sec:encprog}

Imperative programs can be encoded using the operators given in \S\ref{sec:isabelle-utp}, and a concrete state space with the
required variables. We can describe the factorial algorithm as shown in the following example.

\begin{example}[Factorial Program] \label{def:factorial} We define a state space, $\ssdef{sfact}{x : \nat, y : \nat}$,
  consisting of two lenses $x$ and $y$. We can then define a program for computing factorials as follows: \isalink{https://github.com/isabelle-utp/utp-main/blob/c3e0b2e6fdf2e07985e9b1c887c2fdc44f7f208b/utp/examples/factorial.thy\#L9}
  \begin{align*}
    pfact &\,: \nat \to [sfact]\hrel \\
    pfact(\mv{X}) &\defs \left(\begin{array}{l}
                                 x := \mv{X} \relsemi y := 1 \relsemi \\
                                 \ckey{while}~ x > 1 ~\ckey{do} \\
                                 \quad y := y * x \relsemi x := x - 1 \\
                                 \ckey{od}
                               \end{array}\right)
  \end{align*}
  Constant \textit{pfact} is a function taking a natural number $\mv{X}$ as input and producing a program of type
  $[sfact]\hrel$ that computes the factorial. The given value is assigned to UTP variable $x$, and $1$ is assigned to
  $y$. The program iteratively multiplies $y$ by $x$, and decrements $x$. In the final state, $y$ has the
  factorial. \qed
\end{example}

This program can be entered into Isabelle/UTP using almost the same syntax as shown above. This is illustrated in
Figure~\ref{fig:pfact}, where we use the \ckey{alphabet} command to create the state space, and then define the
algorithm over this space using the \ckey{definition} command. Our parser automatically distinguishes whether a variable name
is a lens, such as $x$, or logical variable, such as $\mv{X}$, using type inference.

Although in this case we have manually constructed a state space, we can also pass the program variables as parameters
to the program. Specifically, we can instead define
$$pfact(\mv{X} : \nat, x : \nat \lto \src, y : \nat \lto \src) \defs \cdots$$ which takes $x$ and $y$ as parameters,
mimicking the pass-by-reference mechanism, and hence have a program of type $[\src]\hrel$ that is polymorphic in the
state space $\src$. This is possible because lenses model variables as first-class citizens with the specific
implementation abstracted by the axioms. This is in contrast to other mechanisations where variables either must have
explicit names~\cite{Gordon1989-VerifyHOL,vonWright1994-RefHOL,Oliveira07} or are unstructured
entities~\cite{Feliachi2010,Armstrong2015}.

As usual, the algorithm includes an invariant annotation~\cite{Armstrong2015} that will be explained shortly. We
use this as a running example for the Isabelle/UTP symbolic execution and verification components.

\begin{figure}
  \begin{center}
    \includegraphics[width=.45\linewidth]{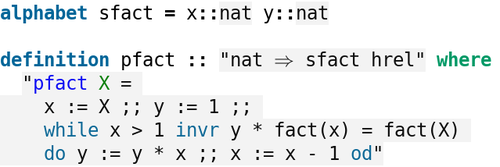}
  \end{center}

  \vspace{-3ex}

  \caption{Factorial Program in Isabelle/UTP}
  \label{fig:pfact}

    \vspace{-2ex}
\end{figure}

\subsection{Operational Semantics}
\label{sec:symexec}

Operational semantics are commonly assigned to a programming language by means of an inductive set of transition
rules~\cite{Nipkow2014-ConcreteSemantics}. UTP~\cite[chapter~10]{Hoare&98}, however, takes a different approach. As
usual it characterises a small-step operational semantics for imperative programs using a step relation,
$(s, P) \to (t, Q)$, meaning that program $P$, when started in state $s$, can transition to program $Q$, with state
$t$. However, instead of characterising this relation using an inductive set, it gives a denotational semantics to the
transition relation, and then proves transition rules as theorems. This has the benefit that linking operational and
axiomatic semantics is straightforward.

Here, we adopt a similar idea to describe a reduction semantics for relational programs, and use it to perform symbolic
execution, following the pattern given by Gordon and Collavizza~\cite{GordonC2010}. We begin by describing the execution
of a program $P$ started in a state context $\Gamma$.

\begin{definition}[Program Evaluation] $\opsem{\Gamma}{P} ~\defs~ \uassigns{\Gamma} \relsemi P$ where $P : [\src]\hrel$
  and $\Gamma : \src \to \src$. \isalink{https://github.com/isabelle-utp/utp-main/blob/c3e0b2e6fdf2e07985e9b1c887c2fdc44f7f208b/utp/utp_sym_eval.thy\#L10}
\end{definition}

\noindent The meaning of $\opsem{\Gamma}{P}$ is that program $P$ is executed in the variable context $\Gamma$, which
gives assignments to a selection of variables in $\src$, for example $[x \smapsto 5, y \smapsto \ptrue]$. The expression
$\opsem{\Gamma}{\II}$ denotes a program that has terminated in state $\Gamma$. The definition above of this operator is
very simple, because of our encoding of variable contexts as substitutions: we apply $\Gamma$ as an assignment, and
compose this with the program $P$. Using this definition we can prove a set of reduction theorems.

\begin{theorem}[Operational Reduction Rules] \label{thm:opsem} $ $ \isalink{https://github.com/isabelle-utp/utp-main/blob/c3e0b2e6fdf2e07985e9b1c887c2fdc44f7f208b/utp/utp_sym_eval.thy\#L15} \newline %

  \begin{center}
  \begin{tabular}{ccccccc}
  \AxiomC{---\vphantom{$P$}}
  \UnaryInfC{$\opsem{\Gamma}{P \relsemi Q} \rightarrow \opsem{\Gamma}{P} \relsemi Q$}
  \DisplayProof
  &
  \AxiomC{---\vphantom{$P$}}
  \UnaryInfC{$\opsem{\Gamma}{\II\,} \relsemi P \rightarrow \opsem{\Gamma}{P}$}
  \DisplayProof
  &
  \AxiomC{---\vphantom{$P$}}
  \UnaryInfC{$\opsem{\Gamma}{\uassigns{\sigma}} \rightarrow \opsem{\sigma \circ \Gamma}{\II\,}$}
  \DisplayProof \\[4ex]
  \end{tabular}
  \begin{tabular}{ccccccc}
  \AxiomC{$(\Gamma \usubapp b) \rightarrow \ptrue$}    
  \UnaryInfC{$\opsem{\Gamma}{\conditional{P}{b}{Q}} \rightarrow \opsem{\Gamma}{P}$}
  \DisplayProof
  &
  \AxiomC{$(\Gamma \usubapp b) \rightarrow \pfalse$}    
  \UnaryInfC{$\opsem{\Gamma}{\conditional{P}{b}{Q}} \rightarrow \opsem{\Gamma}{Q}$}
  \DisplayProof
  \\[4ex]
  \AxiomC{$(\Gamma \usubapp b) \rightarrow \ptrue$}    
  \UnaryInfC{$\opsem{\Gamma}{\while{b}{P}} \rightarrow \opsem{\Gamma}{P \relsemi (\while{b}{P})}$}
  \DisplayProof
  &
  \AxiomC{$(\Gamma \usubapp b) \rightarrow \pfalse$}    
  \UnaryInfC{$\opsem{\Gamma}{\while{b}{P}} \rightarrow \opsem{\Gamma}{\II\,}$}
  \DisplayProof
  \end{tabular}
  \end{center}  

\end{theorem}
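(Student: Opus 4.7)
The plan is to prove all seven rules equationally by unfolding the definition $\opsem{\Gamma}{P} = \uassigns{\Gamma} \relsemi P$ and appealing to the laws of programming in \cref{thm:proglaws} together with \cref{thm:gen-asn-laws}. I read the reduction arrow $\rightarrow$ as equality of relations, since $\opsem{\cdot}{\cdot}$ is a denotational construction and each rule is stated as a theorem rather than a clause of an inductively defined relation; this is the same methodological choice made in the UTP book's treatment of operational semantics.

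The first three rules are immediate. The sequential rule is just associativity: $\uassigns{\Gamma} \relsemi (P \relsemi Q) = (\uassigns{\Gamma} \relsemi P) \relsemi Q$ (LP1). The skip rule combines associativity with the unit law, so that $(\uassigns{\Gamma} \relsemi \II) \relsemi P = \uassigns{\Gamma} \relsemi P$ (LP1, LP2). The assignment rule uses \cref{thm:gen-asn-laws} to rewrite $\uassigns{\Gamma} \relsemi \uassigns{\sigma}$ as $\uassigns{\sigma \circ \Gamma}$, after which LP2 is invoked in reverse to introduce $\II$ and match the shape of $\opsem{\sigma \circ \Gamma}{\II}$.

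For the two conditional rules, the key ingredient is the distribution identity $\uassigns{\Gamma} \relsemi (\conditional{P}{b}{Q}) = \conditional{(\uassigns{\Gamma} \relsemi P)}{\Gamma \usubapp b}{(\uassigns{\Gamma} \relsemi Q)}$ from \cref{thm:gen-asn-laws}. The hypothesis $\Gamma \usubapp b \rightarrow \ptrue$ (respectively $\pfalse$), read as an equality, then collapses the conditional to the appropriate branch via LP7 (respectively LP8), yielding $\opsem{\Gamma}{P}$ (respectively $\opsem{\Gamma}{Q}$).

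The two loop rules follow by first unfolding $\while{b}{P}$ as $\conditional{(P \relsemi \while{b}{P})}{b}{\II}$ via LP6, and then applying the conditional rules just established. The true branch produces $\opsem{\Gamma}{P \relsemi \while{b}{P}}$ directly, while the false branch produces $\opsem{\Gamma}{\II}$. There is no substantive obstacle: no induction on program structure is required, and the premises about $\Gamma \usubapp b$ feed straight into LP7 or LP8. The only minor care needed is to confirm that, in Isabelle/UTP, the symbolic-execution rewrite relation $\rightarrow$ is actually equality on relations, so that the hypothesis $\Gamma \usubapp b \rightarrow \ptrue$ and conclusions like $\opsem{\Gamma}{P \relsemi Q} \rightarrow \opsem{\Gamma}{P} \relsemi Q$ can be chained by ordinary equational rewriting. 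Granting this, the entire theorem is a short corollary of \cref{thm:proglaws} and \cref{thm:gen-asn-laws}.
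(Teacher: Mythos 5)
Your proposal is correct and matches the intended derivation: the paper explicitly states that $\rightarrow$ denotes equality and that the rules are proved from the definition $\opsem{\Gamma}{P} = \uassigns{\Gamma} \relsemi P$, so unfolding and appealing to \ref{thm:proglaws} (LP1, LP2, LP6--LP8) and \ref{thm:gen-asn-laws} is exactly the argument.
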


\noindent The arrow $P \rightarrow Q$ actually denotes an equality predicate ($P = Q$); we use the arrow notation to aid
in comprehension, and to emphasise the left-to-right nature of semantic evaluations. The first two rules on the top line
handle sequential composition. The first pushes an evaluation into the first argument of a composition $P \relsemi
Q$. The second rule states that when the first argument of a sequential composition has terminated ($\II$), execution
moves on to the second argument ($P$). The third rule handles assignments, creating a new context by precomposing the
assignment $\sigma$ with the current context.

The second and third lines deal with conditional and while-loop iteration, respectively. In all these rules, the context
$\Gamma$ is applied to the condition $b$ as a substitution. If the result is $\ptrue$, the conditional chooses the first
branch $P$, and the while loop makes a copy of the loop body. If the result is $\pfalse$, the conditional chooses the
second branch $Q$, and the while loop terminates.

Using these theorems, we can perform symbolic execution of programs in Isabelle/UTP. To achieve this we load the
Isabelle simplifier with the equations of both \cref{thm:opsem} and \cref{thm:substapp}, the latter of which
allows us to apply substitutions. We also utilise the pointwise-lifted semantics given in \S\ref{sec:utp-expr} to
evaluate expressions using the builtin HOL functional definitions and simplification laws.

In addition, we need the equations of \cref{thm:subst} to evaluate and normalise substitutions. However,
Law~\ref{law:SB4}, which allows reordering of subsitution maplets, is symmetric and therefore cannot be directly used as
it would cause the simplifier to loop. The issue is that lenses do not \emph{apriori} have a total order that can be used
to reorder them. Nevertheless, Law~\ref{law:SB4} is important to enable a canonical representation of concrete
substitutions. Consequently, we extend the simplifier with a ``simproc''~\cite{Isabelle}, a specialised meta-logical
simplification procedure that sorts substitution maplets lexicographically using the syntactic lens names. Thus, during
symbolic evaluation the variable context will always order the variables lexicographically.

Applying the Isabelle simplifier with these laws, we can symbolically execute programs. Below, we give an example
execution of the factorial program from Example~\ref{def:factorial}. We do this by using the simplifier to evaluate the
term $\opsem{id}{pfact(4)}$, where $id$ encodes an arbitrary initial assignment for all the variables.

\begin{example}[Factorial Symbolic Execution] \isalink{https://github.com/isabelle-utp/utp-main/blob/c3e0b2e6fdf2e07985e9b1c887c2fdc44f7f208b/utp/examples/factorial.thy\#L15}
\begin{align*}
  & \opsem{id}{pfact(4)} \\[1ex]
  =~~ & \opsem{id}{x := 4 \relsemi y := 1 \relsemi \while{(x > 1)}{(y := y * x \relsemi x := x - 1)}} \\[1ex]
  =~~& \opsem{[x \smapsto 4]}{y := 1 \relsemi \while{(x > 1)}{(y := y * x \relsemi x := x - 1)}} \\[1ex]
  =~~& \opsem{[x \smapsto 4, y \smapsto 1]}{\while{(x > 1)}{(y := y * x \relsemi x := x - 1)}} \\[1ex]
  =~~& \opsem{[x \smapsto 4, y \smapsto 1]}{(y := y * x \relsemi x := x - 1) \relsemi \while{(x > 1)}{(y := y * x \relsemi x := x - 1)}} \\[1ex]
  =~~& \opsem{[x \smapsto 4, y \smapsto 4]}{x := x - 1 \relsemi \while{(x > 1)}{(y := y * x \relsemi x := x - 1)}} \\[1ex]
  =~~& \opsem{[x \smapsto 3, y \smapsto 4]}{\while{(x > 1)}{(y := y * x \relsemi x := x - 1)}} \\[1ex]
  =~~& \opsem{[x \smapsto 2, y \smapsto 12]}{\while{(x > 1)}{(y := y * x \relsemi x := x - 1)}} \\[1ex]
  =~~& \opsem{[x \smapsto 1, y \smapsto 24]}{\while{(x > 1)}{(y := y * x \relsemi x := x - 1)}} \\[1ex]
  =~~& \opsem{[x \smapsto 1, y \smapsto 24]}{\II\,} & \qed
\end{align*}
\end{example}

\noindent In this case, the program terminates in final state $[x \smapsto 1, y \smapsto 24]$, but in the case of a
non-terminating program the simplifier will loop. Such a symbolic execution engine is a useful tool for simulation of
programs.

Next, we show how we can mechanise the rules of Hoare logic and apply them to verification.

\subsection{Hoare Logic}
\label{sec:hoarelogic}

A Hoare triple $\hoaretriple{b}{P}{c}$ is an assertion that, if program $P$ is started in a state satisfying predicate
$b$, then all final states satisfy predicate $c$. The UTP book~\cite[chapter~2]{Hoare&98} shows how this assertion can be encoded
using a refinement statement. Their definition is mechanised in Isabelle/UTP as given below.

\begin{definition}[Hoare Triple] \label{def:hoaretriple} $\hoaretriple{b}{P}{c} \defs ((\upre{b} \implies \upost{c}) \refinedby P)$ \isalink{https://github.com/isabelle-utp/utp-main/blob/c3e0b2e6fdf2e07985e9b1c887c2fdc44f7f208b/utp/utp_hoare.thy\#L11}
\end{definition}

\noindent This states that the Hoare triple is valid when $P$ is a refinement of the specification
$\upre{b} \implies \upost{c}$. The specification states that when $b$ is satisfied initially, then $c$ is satisfied
finally. This is a statement of partial correctness since $P$ could satisfy this condition by not terminating, for
example if $P = \false$, since $Q \refinedby \false$ for any $Q$. \cref{def:hoaretriple} is equivalent to the standard
definition for partial correctness~\cite{Gordon1989-VerifyHOL} illustrated in \cref{sec:verify-sem}:

\begin{theorem} $\hoaretriple{b}{P}{c} \iff \left(\forall (s, s') @ b(s) \land P(s, s') \implies c(s')\right)$ \isalink{https://github.com/isabelle-utp/utp-main/blob/c3e0b2e6fdf2e07985e9b1c887c2fdc44f7f208b/utp/utp_hoare.thy\#L32}
\end{theorem}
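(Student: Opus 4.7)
The plan is to prove the equivalence by unfolding definitions on the left-hand side and reducing to the propositional form on the right. The key ingredients are Definition~\ref{def:hoaretriple}, the definition of refinement from Definition~\ref{def:predop} ($(P \refinedby Q) \defs (\holop{\forall} s @ (Q \implies P) s)$), and the semantics of the pre- and postcondition lifting operators built from alphabet extrusion together with the $\lfst$ and $\lsnd$ lenses.

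First I would unfold $\hoaretriple{b}{P}{c}$ using Definition~\ref{def:hoaretriple} to obtain $((\upre{b} \implies \upost{c}) \refinedby P)$, and then unfold refinement to get
\[
  \forall w @ P(w) \implies (\upre{b} \implies \upost{c})(w),
\]
where $w$ ranges over the product state space $\src \times \src$. Next I would specialise $w$ to a pair $(s, s')$ and compute the pointwise values of the lifted expressions. Since $\upre{b} = b \aext \lfst$ and $\upost{c} = c \aext \lsnd$, the defining equation $e \aext a = \lambda s @ e(\lget_a~s)$ together with $\lget_\lfst (s, s') = s$ and $\lget_\lsnd (s, s') = s'$ immediately gives $\upre{b}(s, s') = b(s)$ and $\upost{c}(s, s') = c(s')$. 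The pointwise lifted implication then evaluates to $b(s) \implies c(s')$.

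Combining these, the refinement is equivalent to $\forall (s, s') @ P(s, s') \implies (b(s) \implies c(s'))$, which is propositionally the same as $\forall (s, s') @ b(s) \land P(s, s') \implies c(s')$, yielding the stated equivalence. In Isabelle this should discharge via \textsf{rel-auto}, which handles the product-state decomposition, the alphabet-extrusion unfolding, and the propositional rearrangement automatically.

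There is no real obstacle: the proof is entirely a matter of unfolding the semantic definitions and performing propositional manipulation. The only point worth being careful about is ensuring that the $w$ in the refinement genuinely ranges over pairs, so that the split into $(s, s')$ is sound; this is immediate because relations are typed $[\src, \src]\urel \defs [\src \times \src]\upred$ and the pre-/post-lifting is defined precisely to project these two components.
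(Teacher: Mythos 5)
Your proposal is correct and matches the paper's approach: the theorem is discharged purely by unfolding Definition~\ref{def:hoaretriple}, the definition of $\refinedby$, and the $\upre{{}}$/$\upost{{}}$ liftings over the product state space, followed by propositional rearrangement — exactly what the mechanisation's \textsf{rel-auto} tactic performs. No gaps.
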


\noindent An equivalent characterisation is given in terms of our reduction relation below.

\begin{theorem} If $\hoaretriple{b}{P}{c}$ then $\forall (\Gamma_1, \Gamma_2) @ [(\Gamma_1 \usubapp b) \land \opsem{\Gamma_1}{P} \rightarrow \opsem{\Gamma_2}{\II\,} \implies (\Gamma_2 \usubapp c)]$ \isalink{https://github.com/isabelle-utp/utp-main/blob/c3e0b2e6fdf2e07985e9b1c887c2fdc44f7f208b/utp/utp_sym_eval.thy\#L55}
\end{theorem}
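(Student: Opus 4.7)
The plan is to reduce the statement to an application of the previous theorem characterising Hoare triples in the pointwise form $\forall (s, s') @ b(s) \land P(s, s') \implies c(s')$. The arrow $\rightarrow$ in the antecedent is equality on UTP relations, so $\opsem{\Gamma_1}{P} \rightarrow \opsem{\Gamma_2}{\II}$ is $\uassigns{\Gamma_1} \relsemi P = \uassigns{\Gamma_2} \relsemi \II$. I would first normalise the right-hand side using \ref{law:LP4} (the skip law for $\II$) and rewrite the left-hand side using the generalised forward-assignment law \ref{law:LP4}, which collapses $\uassigns{\Gamma_1} \relsemi P$ to $\upre{\Gamma_1} \usubapp P$. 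Thus the operational assumption reduces to the semantic equation $\upre{\Gamma_1} \usubapp P = \uassigns{\Gamma_2}$.

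Next I would unfold the two sides as binary relations. By Definition~\ref{def:progop}, $\uassigns{\Gamma_2}$ is the relation $\{(s, s') \mid s' = \Gamma_2(s)\}$, and $\upre{\Gamma_1} \usubapp P$ is the relation $\{(s, s') \mid P(\Gamma_1(s), s')\}$, using the semantics of pre-lifted substitution (\cref{def:saext} composed with $\lfst$). The operational hypothesis therefore yields, for every $s$ and every $s'$, the equivalence $P(\Gamma_1(s), s') \iff s' = \Gamma_2(s)$. Instantiating with $s' \defs \Gamma_2(s)$ gives $P(\Gamma_1(s), \Gamma_2(s))$ for every initial state $s$.

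From the pointwise Hoare triple, instantiated at the pair $(\Gamma_1(s), \Gamma_2(s))$, we obtain
\begin{equation*}
  b(\Gamma_1(s)) \land P(\Gamma_1(s), \Gamma_2(s)) \implies c(\Gamma_2(s)).
\end{equation*}
The left conjunct is exactly the assumption $(\Gamma_1 \usubapp b)$ evaluated at $s$ (using the semantics of $\usubapp$ from Definition~\ref{def:subst}), the right conjunct is the consequence of the operational equation derived above, and the conclusion $c(\Gamma_2(s))$ is, again by the semantics of $\usubapp$, exactly $(\Gamma_2 \usubapp c)$ at $s$. Since $s$ is arbitrary, the universal closure in the statement holds.

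The main obstacle is not the logical flow, which is essentially unfolding of definitions, but rather keeping the two layers of substitution clean: the meta-level substitutions $\Gamma_i : \src \to \src$ that appear in the operational context, and their lifting $\upre{\Gamma_i}$ into the product state space of a relation. Getting the rewrite $\uassigns{\Gamma_1} \relsemi P = \upre{\Gamma_1} \usubapp P$ to discharge cleanly is the crux, and once that step is in hand the remainder is a direct application of \ref{law:SA1}, \ref{law:SA6}, and the pointwise characterisation of the Hoare triple.
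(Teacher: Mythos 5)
Your proof is correct and follows the route the paper itself implies (the paper omits an explicit proof, deferring to the mechanisation): unfold $\opsem{\Gamma}{P}$, normalise the operational hypothesis with the assignment laws to get $P(\Gamma_1(s),\Gamma_2(s))$, and discharge via the pointwise characterisation $\forall (s,s') @ b(s) \land P(s,s') \implies c(s')$ of the Hoare triple. One cosmetic slip: the simplification of $\uassigns{\Gamma_2} \relsemi \II$ to $\uassigns{\Gamma_2}$ is law LP2 (the unit law for $\II$), not \ref{law:LP4}, though this does not affect the argument.
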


\noindent This is an example of a UTP linking theorem~\cite{Hoare&98} that relates two semantic presentations, in this
case how the Hoare triple is related to the operational semantics. The satisfaction of $\hoaretriple{b}{P}{c}$ implies
that, for any initial state assignment $\Gamma_1$ satisfying precondition $b$, if $P$ terminates with final state
assignment $\Gamma_2$, then $\Gamma_2$ satisfies $c$. Thus we have formally related the operational and axiomatic
semantics for relational programs. From Definition~\ref{def:hoaretriple} we can also prove, as theorems, the following
Hoare calculus laws:

\begin{theorem}[Hoare Calculus Laws] \label{thm:hoarecalc} $ $ \isalink{https://github.com/isabelle-utp/utp-main/blob/c3e0b2e6fdf2e07985e9b1c887c2fdc44f7f208b/utp/utp_hoare.thy\#L30}
  \vspace{-2ex}

  \begin{center}
  \begin{tabular}{ccccccc}
  \AxiomC{\vphantom{$\hoaretriple{b}{P}{c}$}$[p \implies \sigma \usubapp q]$}
  \UnaryInfC{$\hoaretriple{p}{\uassigns{\sigma}}{q}$}
  \DisplayProof
  &
  \AxiomC{$\hoaretriple{p}{Q_1}{s}$}
  \AxiomC{$\hoaretriple{s}{Q_2}{r}$}
  \BinaryInfC{$\hoaretriple{p}{Q_1 \relsemi Q_2}{r}$}
  \DisplayProof
  &
  \AxiomC{$\hoaretriple{b \land p}{S}{q}$}
  \AxiomC{$\hoaretriple{\neg b \land p}{T}{q}$}
  \BinaryInfC{$\hoaretriple{p}{\conditional{S}{b}{T}}{q}$}
  \DisplayProof
  &
  \AxiomC{$\hoaretriple{p \land b}{S}{p}$}
  \UnaryInfC{$\hoaretriple{p}{\while{b}{P}}{\neg b \land p}$} 
  \DisplayProof
  \end{tabular}
\end{center}
\end{theorem}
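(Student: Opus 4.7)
The plan is to unfold $\hoaretriple{p}{P}{q}$ via \cref{def:hoaretriple} into the refinement $(\upre{p} \implies \upost{q}) \refinedby P$, and then discharge each of the four rules using the laws of programming (\cref{thm:proglaws}), the relational substitution laws (\cref{thm:relsubst}), and the substitution application laws (\cref{thm:substapp}). Since $A \refinedby B$ unfolds by definition to the pointwise entailment $B \implies A$ at every state pair, every obligation reduces to a predicate-calculus fact on $\src \times \src$.

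For the assignment rule, I expand $\uassigns{\sigma}$ by \cref{def:progop} into $(\upost{\uv} = \sigma(\upre{\uv}))$ and substitute this equation into $\upost{q}$, collapsing the refinement to exactly the hypothesis $[p \implies \sigma \usubapp q]$. The sequence rule follows from monotonicity of $\relsemi$ with respect to $\refinedby$: the two premises give $(\upre{p} \implies \upost{s}) \relsemi (\upre{s} \implies \upost{r}) \refinedby Q_1 \relsemi Q_2$, and the remaining refinement $(\upre{p} \implies \upost{r}) \refinedby (\upre{p} \implies \upost{s}) \relsemi (\upre{s} \implies \upost{r})$ drops out after eliminating the existentially quantified intermediate variable $\mv{v}_0$ of \cref{def:progop} and chaining the two implications through the pivot $s(\mv{v}_0)$. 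The conditional rule splits $\conditional{S}{b}{T}$ into the disjuncts $\upre{b} \land S$ and $\neg \upre{b} \land T$, and each disjunct matches one of the two premises once the extra $\upre{b}$ or $\neg \upre{b}$ is absorbed into the precondition.

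The main obstacle is the while rule, since $\while{b}{P}$ is defined as the fixed point $\nu X @ F(X)$ with $F(X) \defs \conditional{P \relsemi X}{b}{\II}$, and no direct calculation is available. I plan to invoke the Knaster-Tarski coinduction principle in the $\refinedby$-ordering: since $\nu F$ is the greatest fixed point of $F$ with respect to $\refinedby$, it suffices to show that the specification $S \defs (\upre{p} \implies \upost{(\neg b \land p)})$ is a pre-fixed point, i.e.\ $S \refinedby F(S)$. This refinement is itself the Hoare triple $\hoaretriple{p}{\conditional{P \relsemi S}{b}{\II}}{\neg b \land p}$, which I discharge by the conditional rule just established. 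The $b$-branch reduces to $\hoaretriple{b \land p}{P \relsemi S}{\neg b \land p}$, which follows from the sequence rule applied to the premise $\hoaretriple{p \land b}{P}{p}$ and the reflexivity instance $\hoaretriple{p}{S}{\neg b \land p}$ (unfolding $S$ gives $S \refinedby S$). The $\neg b$-branch reduces to $\hoaretriple{\neg b \land p}{\II}{\neg b \land p}$, which is immediate because $\II$ fixes the state and so preserves both $\neg b$ and $p$.
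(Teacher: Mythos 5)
Your proposal is correct and follows essentially the route the paper's mechanisation takes: unfold \cref{def:hoaretriple} into a refinement, discharge the assignment, sequence, and conditional rules by pointwise predicate calculus over $\src \times \src$, and handle the loop by Knaster--Tarski coinduction on the strongest fixed point, showing the specification $\upre{p} \implies \upost{(\neg b \land p)}$ satisfies $S \refinedby F(S)$ via the already-established conditional and sequence rules. The only point worth making explicit is that the coinduction step requires monotonicity of $F(X) = \conditional{P \relsemi X}{b}{\II}$ with respect to $\refinedby$, which is immediate since $\relsemi$, $\land$, $\lor$, and $\exists$ are all monotone.
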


\noindent The majority of these laws are standard~\cite{Hoare69}. However, the formulation of the assignment law is more
general than the standard law, $\hoaretriple{p[e/x]}{x := e}{p}$. It avoids the aliasing problem of classical Hoare
logic since substitution depends on semantic independence of lenses, rather than syntactic inequality of variable
names. We can model aliased names by giving a lens two (meta-logical) names in Isabelle, $x$ and $y$. We can also
consider the situation when $x$ and $y$ are different constructions and yet not independent, such as $hp$ and $hp[0]$,
from \cref{ex:st-hp}. We illustrate this with the following example.

\begin{example}[Aliasing] We consider the triple $\hoaretriple{y = 3}{x := 2}{y = 3}$. If we attempt its proof using the
  rules above, and the substitution laws (\cref{thm:substapp}), the result is the predicate
  $(y = 3) \implies (y[2/x] = 3)$. If $x$ and $y$ are identical ($x = y$) then this reduces to
  $(y = 3) \implies \pfalse$, which is not provable. If $x \neq y$, and yet either $x \lequiv y$, $x \lsubseteq y$, or
  $y \lsubseteq x$, then the conjecture is also not provable; indeed we can use
  \textsf{nitpick}~\cite{Blanchette2010-Nitpick,Blanchette2011} to find a counterexample. If, however, $x \lindep y$,
  then we obtain $(y = 3) \implies (y = 3)$, which is true. \qed
\end{example}

\noindent Similarly, from \cref{def:hoaretriple} we can also represent derived laws, like the forward assignment law, by
making use of unrestriction.
\begin{lemma} $\hoaretriple{p}{x := e}{x = e \land p} \quad \text{if } x \unrest e, x \unrest p$ \isalink{https://github.com/isabelle-utp/utp-main/blob/c3e0b2e6fdf2e07985e9b1c887c2fdc44f7f208b/utp/utp_hoare.thy\#L112}
\end{lemma}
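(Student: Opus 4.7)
The plan is to reduce the goal to the standard (backward) assignment rule from \cref{thm:hoarecalc}, namely $[p \implies \sigma \usubapp q] \implies \hoaretriple{p}{\uassigns{\sigma}}{q}$, instantiated with $\sigma = [x \smapsto e]$ and $q = (x = e \land p)$. It then suffices to show the side condition $[p \implies [x \smapsto e] \usubapp (x = e \land p)]$, which I propose to prove by calculation on the substitution application.

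First I would push the substitution through the conjunction and the equality using \ref{law:SA4} of \cref{thm:substapp}, obtaining
\[
[x \smapsto e] \usubapp (x = e \land p) ~=~ \bigl(([x \smapsto e] \usubapp x) = ([x \smapsto e] \usubapp e)\bigr) \,\land\, ([x \smapsto e] \usubapp p).
\]
Then I would evaluate the three remaining substitution applications separately. For the left-hand side of the equality, \ref{law:SA1} combined with \ref{law:SB1} of \cref{thm:subst} gives $[x \smapsto e] \usubapp x = \usublk{[x \smapsto e]}{x} = e$. For the right-hand side of the equality, the hypothesis $x \unrest e$ allows me to invoke \ref{law:SA2} to strip the maplet, yielding $[x \smapsto e] \usubapp e = id \usubapp e$, which collapses to $e$ by \ref{law:SA6}. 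Analogously, the hypothesis $x \unrest p$ together with \ref{law:SA2} and \ref{law:SA6} gives $[x \smapsto e] \usubapp p = p$.

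Combining these three reductions, the side condition becomes $[p \implies (e = e) \land p]$, which is propositionally trivial (by reflexivity of $=$ and conjunction introduction in the predicate calculus of \cref{thm:predcba}). Applying the assignment rule from \cref{thm:hoarecalc} then discharges the Hoare triple. I do not anticipate a serious obstacle: the only subtle point is that both uses of \ref{law:SA2} require exactly the two unrestriction hypotheses assumed in the statement, so the role of $x \unrest e$ and $x \unrest p$ is to make the substitution on the postcondition behave syntactically as the identity on the unchanged subexpressions, which is precisely what the classical forward assignment law relies on semantically.
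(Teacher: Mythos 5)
Your proof is correct and follows the route the paper intends: the lemma is stated without an explicit proof, but the surrounding text ("from \cref{def:hoaretriple} we can also represent derived laws, like the forward assignment law, by making use of unrestriction") points exactly at your derivation via the assignment rule of \cref{thm:hoarecalc} and the substitution laws \ref{law:SA1}--\ref{law:SA6} and \ref{law:SB1}, with the two unrestriction hypotheses discharging the maplet on $e$ and $p$. Nothing is missing.
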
 

\begin{figure}
  \begin{center}
    \includegraphics[width=.8\linewidth]{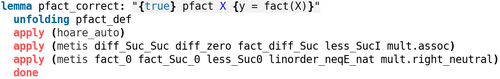}
  \end{center}

  \vspace{-4ex}

  \caption{Verification of Factorial Algorithm}
  \label{fig:verify-fact}
  \vspace{-2ex}
\end{figure}

\noindent We can use these laws to automatically verify our factorial program, following the pattern of previous
works~\cite{Alkassar2008,Nipkow2014-ConcreteSemantics,Armstrong2015}, though using our lens-based Hoare logic laws. We
require, as usual, that the program is annotated with loop invariants. This is shown in Figure~\ref{fig:pfact}, with
loop invariant $y * x! = X!$, which is syntactic sugar to help the proof strategy~\cite{Armstrong2015}. The proof
strategy, implemented in the Isabelle/UTP tactic
\textsf{hoare-auto}, executes the following steps:

\begin{enumerate}
  \item Combine all composed assignments using \cref{thm:asnlaws}.
  \item Apply the Hoare logic laws of \cref{thm:hoarecalc} as deduction rules.
  \item Perform any substitutions in the resulting predicates using \cref{thm:subst} and \cref{thm:substapp}.
  \item Apply the \textsf{rel-auto} tactic of \S\ref{sec:utp-expr} to each resulting predicate.
\end{enumerate}

\noindent The result is a set of HOL predicates that characterise the verification conditions for the program. We exemplify this
with the factorial program below.

\begin{example} $\hoaretriple{\ptrue}{pfact(\mv{X})}{y = X!}$ \isalink{https://github.com/isabelle-utp/utp-main/blob/c3e0b2e6fdf2e07985e9b1c887c2fdc44f7f208b/utp/examples/factorial.thy\#L19}
\end{example}

\begin{proof} Application of \textsf{hoare-auto} yields the following verification conditions:
  \begin{enumerate}
    \item $\mv{x} > 1 \land \mv{y} * \mv{x}! = X! \implies \mv{y} * \mv{x} * (\mv{x} - 1)! = \mv{X}!$
    \item $\mv{x} \le 1 \land \mv{y} * \mv{x}! = X! \implies \mv{y} = \mv{X}!$
  \end{enumerate}
  In this case, both proof obligations can be discharged using \textsf{sledgehammer}~\cite{Blanchette2011}. This is illustrated in
  Figure~\ref{fig:verify-fact}, with calls to \textsf{metis} for each proof obligation.
\end{proof}

\noindent As a further example we consider a program for the find and replace functionality in a text editor.

\begin{example}[Find and Replace] \isalink{https://github.com/isabelle-utp/utp-main/blob/c3e0b2e6fdf2e07985e9b1c887c2fdc44f7f208b/utp/examples/find_replace.thy\#L11}
  $$\begin{array}{l}
  \ckey{function}~\textit{find-replace}(arr : [string]array, len : nat, mtc : string, rpl: string): nat \defs \\
  \quad \ckey{var}~i, occ : nat @ \\
  \quad i := 0 \relsemi occ := 0 \relsemi \\    
  \quad \ckey{while}~i < len ~\ckey{do} \\
  \qquad \ckey{if}~ arr[i] = mtc ~\ckey{then}~ arr[i] := rpl \relsemi occ := occ + 1 ~\ckey{else}~ arr[i] := arr[i] \ckey{fi} \relsemi \\
  \qquad i := i + 1 \\
  \quad \ckey{od} \relsemi \\
  \quad \ckey{return}~occ
  \end{array}
  $$
\end{example}
\noindent In this program, we have an array of strings that represents a sequence of words in a text buffer, and we wish
to replace every occurence of a particular string, $mtc$, with another string $rpl$. The specified function takes four
parameters: an array of strings, $arr$, the length of the array $len$, and the two strings $mtc$ and $rpl$. It returns
the number of occurences for which it has made a replacement.

We first create local variables $i$ and $occ$, using the $\ckey{var}$ construct, and initialise them, for
iterating over the list and recording occurrences. We then have a while loop that iterates over the length of the
array. At each step, if the $i$th element of the array equals $mtc$, then this element is replaced with $rpl$, and the
number of occurences is incremented. Otherwise, the element of the array is left unchanged. Finally, $i$ is incremented
and the loop continues. Once the loop exits, the value of $occ$ is returned. Variables $occ$ and $i$ are modelled in
Isabelle/UTP using record lenses, and the array $arr$ using a total function lens.

The algorithm is mechanised in Isabelle/UTP as shown in Figure~\ref{fig:find-repl}. We first create the state space $ss$
with variables $arr$, $occ$, and $i$. We represent the variables $len$, $mtc$, and $rpl$ as logical variables, since
they are not changed by the algorithm. Moreover, we include an explicit parameter $A$ that represents the initial state
of the array. This is need to represent the invariant annotations, which we explain below.

There are at least two desirable properties for this program, both of which are postconditions: (1) $arr$ should
have all occurrences of $mtc$ replaced with $rpl$; and (2) the value of $occ$ returned should be number of occurrences
in $arr$ initially. The first can be represented by the following triple:
$$
  \hoaretriple{arr = \mv{A}}{\textit{find-replace}(arr, len, mtc, rpl)}{\forall i < len @ \mv{A}[i] = (\conditional{mtc}{arr[i] = rpl}{\mv{A}[i]})}
$$
This states that if the array is initially $\mv{A}$, then following execution of \textit{find-replace}, for every index
$i$ in $\mv{A}$ that has $mtc$, the array contains $rpl$, and all other elements remain the same. This can be proved
using the invariants shown in Figure~\ref{fig:find-repl}. It has two parts corresponding to properties (1) and (2). The
first invariant states that at the $i$th iteration, $arr$ is identical to $A$, except that every instance of $mtc$ up to
index $i$ is replaced with $rpl$. The second invariant states that the number of occurrences is equal to the number of
instances of $mtc$ filtered out of the function before index $i$. The third invariant is that $i \le len$. With these
three invariants, we can verify the two properties; this is illustrated in Figure~\ref{fig:find-repl-correct}.

Thus we have shown how Isabelle/UTP can be applied to practical verification of relational programs, which can contain
complex state space structures afforded by lenses. We can unify different variants of assignment, such as to multiple
variables and collections, using the generalised assignment law. Moreover, the verification infrastructure is semantic
and thus efficient in nature, without the need for deep syntax, and yet can utilise meta-logical concepts like
substitution and unrestriction, which are not present in other shallow embeddings.

\begin{figure}
  \begin{center}
  \includegraphics[width=.62\linewidth]{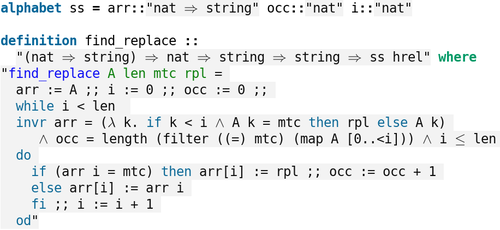}
  \end{center}

  \vspace{-3.5ex}
  
  \caption{Find and Replace in Isabelle/UTP}
  \label{fig:find-repl}
\end{figure}

\begin{figure}
  \begin{center}
  \includegraphics[width=.85\linewidth]{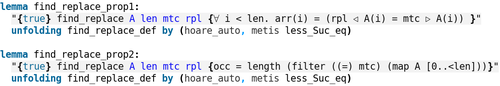}
  \end{center}

  \vspace{-3.5ex}
  
  \caption{Find and Replace verified in Isabelle/UTP}
  \label{fig:find-repl-correct}

  \vspace{-2ex}

\end{figure}

In the next section we consider the mechanisation of Morgan's refinement calculus.

\subsection{Refinement Calculus}
\label{sec:ref-calc}

In this section, we show how lenses can support a Morgan-style refinement calculus~\cite{Mor1996}, extending our
previous results on frames in UTP~\cite{Foster18b}. The refinement calculus is a technique for stepwise development of
programs, by refining abstract specifications in increasingly concrete programs. The specification statement,
$w{:}[pre, post]$, is a contract asserting that (1) only the variables in $w$ may change, and (2) the postcondition
($post$) should be established when the precondition ($pre$) holds. We use lenses and our relational program model to
mechanise a partial correctness specification statement, and prove several laws relating to frames. Ordinarily, Morgan's
calculus specified total correctness, but here we focus on the laws relating to frames, rather than termination of
programs. A total correctness calculus can be obtained straightforwardly by combining our results here and in
\cref{sec:utp-thy} with the UTP theory of designs~\cite{Hoare&98,Cavalcanti04}.

For this, we first define a notion of equivalence modulo a lens that is needed to characterise frames.

\begin{definition}[Observation Equivalence] We assume $s_1, s_2 : \src$ and $X : \view \lto \src$, and define: \isalink{https://github.com/isabelle-utp/utp-main/blob/0189a374bad44431e6f5a89b50cc683ec6ef5221/optics/Lens_Laws.thy\#L61}
  $$s_1 \simeq_X s_2 \defs (s_1 = \lput_X~s_2~(\lget_X~s_1))$$
\end{definition}

\noindent This operator states that two states $s_1$ and $s_2$ are the same everywhere outside of the region described
by $X$. For example, if we have three variables $x, y, z : \num$, then we can show that
$$[x \mapsto 1, y \mapsto 2, z \mapsto 3] \simeq_x [x \mapsto 2, y \mapsto 2, z \mapsto 3]$$ since the two states differ
only in the value of $x$. Lenses can represent sets of variables, using $+$, and so we can group several independent
variables in a frame, as the following equivalences illustrate:
\begin{lemma} Assume $s_1, s_2 : \src$ then the following identities hold: \isalink{https://github.com/isabelle-utp/utp-main/blob/0189a374bad44431e6f5a89b50cc683ec6ef5221/optics/Lens_Algebra.thy\#L415}
$$(s_1 \simeq_{\lone_\src} s_2) = \ptrue \text{~and~} (s_1 \simeq_{\lzero_\src} s_2) = (s_1 = s_2)$$  \end{lemma}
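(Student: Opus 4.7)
The plan is to prove both identities by straightforward unfolding of the definitions of $\simeq$, $\lone$, and $\lzero$, since these basic lenses have trivial $\lget$ and $\lput$ components that collapse the equivalence condition immediately.

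For the first identity, I would expand $s_1 \simeq_{\lone_\src} s_2$ to $s_1 = \lput_{\lone}~s_2~(\lget_{\lone}~s_1)$. Using the definition $\lone_S \defs \lquad{S}{S}{\lambda s @ s}{\lambda s~v @ v}$, we have $\lget_{\lone}~s_1 = s_1$ and $\lput_{\lone}~s_2~v = v$ for any $v$, so the right-hand side reduces to $s_1 = s_1$, which is $\ptrue$. Note that this holds for arbitrary $s_1$ and $s_2$, which matches the intuition that $\lone$ frames the whole state space and therefore imposes no constraint on how $s_1$ and $s_2$ agree outside of it.

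For the second identity, I would similarly expand $s_1 \simeq_{\lzero_\src} s_2$ to $s_1 = \lput_{\lzero}~s_2~(\lget_{\lzero}~s_1)$. Using $\lzero_S \defs \lquad{\{\emptyset\}}{S}{\lambda s @ \emptyset}{\lambda s~v @ s}$, the function $\lput_{\lzero}~s_2~v = s_2$ discards its second argument entirely, so the right-hand side collapses to $s_2$, yielding $s_1 = s_2$. This again matches the intuition: $\lzero$ frames nothing, so two states can only agree outside $\lzero$'s (empty) region if they agree everywhere.

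There is no significant obstacle here; both results are pure unfolding calculations relying only on \cref{def:total-lenses} and the concrete definitions of $\lone$ and $\lzero$. In Isabelle/HOL the proofs are discharged automatically by \textsf{simp} with the lens definitions unfolded, and no appeal to \ref{law:put-get}, \ref{law:put-put}, or \ref{law:get-put} is required.
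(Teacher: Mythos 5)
Your proof is correct and is the natural argument: the paper states this lemma without proof (deferring to the mechanisation), and a direct unfolding of $s_1 \simeq_X s_2 \defs (s_1 = \lput_X~s_2~(\lget_X~s_1))$ together with the concrete definitions of $\lone$ and $\lzero$ is exactly how it is discharged. Both calculations are right, including the observation that the lens axioms themselves are not needed.
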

\noindent This theorem demonstrates the intuition that $\lone$ and $\lzero$ characterise the entirety and none of the state space, respectively.
The state space outside of $\lone$ is empty, and outside of $\lzero$ is the whole state. We can also show that $\simeq_X$ is an equivalence relation.
\begin{lemma} If $X$ is a total lens then $\simeq_X$ is an equivalence relation on states. \isalink{https://github.com/isabelle-utp/utp-main/blob/0189a374bad44431e6f5a89b50cc683ec6ef5221/optics/Lens_Laws.thy\#L207}
\end{lemma}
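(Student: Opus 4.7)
The plan is to unfold the definition of $s_1 \simeq_X s_2$ as the equation $s_1 = \lput_X~s_2~(\lget_X~s_1)$ and then discharge each of the three equivalence properties by direct calculation using the total lens axioms \ref{law:put-get}, \ref{law:put-put}, and \ref{law:get-put} from \cref{def:total-lenses}. Since $X$ is assumed total, all three axioms are available, and the proof reduces to routine equational reasoning.

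For reflexivity, I would show $s \simeq_X s$ by rewriting the goal to $s = \lput_X~s~(\lget_X~s)$ and appealing directly to \ref{law:get-put}. For symmetry, given the hypothesis $s_1 = \lput_X~s_2~(\lget_X~s_1)$, I would compute $\lput_X~s_1~(\lget_X~s_2)$ by substituting the hypothesis for the first $s_1$, producing $\lput_X~(\lput_X~s_2~(\lget_X~s_1))~(\lget_X~s_2)$; one application of \ref{law:put-put} collapses this to $\lput_X~s_2~(\lget_X~s_2)$, and then \ref{law:get-put} gives $s_2$, which is precisely $s_2 \simeq_X s_1$.

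For transitivity, given $s_1 = \lput_X~s_2~(\lget_X~s_1)$ and $s_2 = \lput_X~s_3~(\lget_X~s_2)$, I would substitute the second equation into the first to obtain $s_1 = \lput_X~(\lput_X~s_3~(\lget_X~s_2))~(\lget_X~s_1)$, and then apply \ref{law:put-put} to collapse the nested $\lput_X$, yielding $s_1 = \lput_X~s_3~(\lget_X~s_1)$, which is $s_1 \simeq_X s_3$.

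None of the three cases is really an obstacle; if there is any subtlety it is in the symmetry step, where one must see that substituting the hypothesis and then invoking \ref{law:put-put} is the right move rather than, say, applying $\lget_X$ to both sides (which only yields the trivial identity $\lget_X~s_1 = \lget_X~s_1$ via \ref{law:put-get}). Apart from that, the entire argument is a mechanical exploitation of the three total-lens axioms, so I would expect the Isabelle mechanisation at the cited location to dispatch it essentially automatically once the definitions are unfolded.
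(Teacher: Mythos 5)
Your proof is correct and is essentially the expected argument: the paper omits an explicit proof (delegating to the Isabelle mechanisation), and the natural route is exactly the one you take — reflexivity from \ref{law:get-put}, and symmetry and transitivity by substituting the hypothesis and collapsing the nested $\lput_X$ with \ref{law:put-put} (plus \ref{law:get-put} for symmetry). All three calculations check out as written.
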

\noindent We now use observation equivalence to define the specification statement.

\begin{definition}[Specification Statement] \label{def:morgan} \isalink{https://github.com/isabelle-utp/utp-main/blob/0189a374bad44431e6f5a89b50cc683ec6ef5221/utp/examples/morgan_rcalc.thy\#L9}
  $$w{:}[pre, post] \defs ((\upre{pre} \implies \upost{post}) \land \upre{\uv} \simeq_w \upost{\uv})$$
\end{definition}

\noindent A specification statement states that the precondition implies the postcondition, and in addition the initial
state is the same as the final state, modulo $w$. The semantics is similar to the partial correctness Hoare triple in
Definition~\ref{def:hoaretriple}, except that it only includes the specification. Indeed, we can prove the following
well-known correspondence~\cite{Hoare&98,Gomes2016}.

\begin{theorem} $\hoaretriple{pre}{P}{post} \iff (\uv{:}[pre, post] \refinedby P)$ \isalink{https://github.com/isabelle-utp/utp-main/blob/2376b415962a921172bb5586bf2aa1688d740dd9/utp/examples/morgan_rcalc.thy\#L59}
\end{theorem}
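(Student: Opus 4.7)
The plan is to unfold both sides and reduce the specification statement to the Hoare triple's defining refinement by exploiting the fact that $\uv$ is the universal lens $\lone$, which trivialises the frame condition.

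First I would expand the right-hand side using Definition~\ref{def:morgan}, obtaining that $\uv{:}[pre, post] \refinedby P$ is equivalent to $\bigl((\upre{pre} \implies \upost{post}) \land \upre{\uv} \simeq_\uv \upost{\uv}\bigr) \refinedby P$. Recalling from \S\ref{sec:utp-expr} that $\uv$ denotes the $\lone$ lens, I would then apply the identity $(s_1 \simeq_{\lone_\src} s_2) = \ptrue$, which was proved for $\simeq$ earlier in this section. This collapses the conjunct $\upre{\uv} \simeq_\uv \upost{\uv}$ to $\ptrue$, and by the unit law of $\land$ from the Boolean algebra of predicates (\cref{thm:predcba}), the right-hand side simplifies to $(\upre{pre} \implies \upost{post}) \refinedby P$.

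At this point I would invoke Definition~\ref{def:hoaretriple} of the Hoare triple, which is precisely $\hoaretriple{pre}{P}{post} \defs ((\upre{pre} \implies \upost{post}) \refinedby P)$, and conclude equality of the two characterisations. The argument is a chain of equalities, so both directions of the biconditional follow simultaneously without additional case analysis.

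I do not anticipate any real obstacle here: the proof is essentially a definitional unfolding combined with the observation that the universal lens imposes no frame restriction. The only subtlety is being careful that the $\uv$ appearing in $\uv{:}[\cdot,\cdot]$ (as a frame lens) is the same $\uv$ used for state liftings $\upre{\uv}, \upost{\uv}$, so that the $\simeq_\uv$ equivalence really does match the $\lone$ case of the lemma. Once that identification is made, \textsf{rel-auto} together with the simplification lemmas for $\simeq$ and Boolean algebra should discharge the goal automatically in Isabelle/UTP.
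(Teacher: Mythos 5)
Your proof is correct and follows exactly the route the paper intends: the paper gives no explicit proof but its accompanying remark — that the frame $\uv$ (i.e.\ $\lone$) permits all variables to change — is precisely your observation that $\upre{\uv} \simeq_{\lone} \upost{\uv}$ collapses to $\ptrue$, leaving the defining refinement of the Hoare triple. Nothing is missing.
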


\noindent A Hoare triple can be re-expressed as a specification statement refinement. The standard triple does not
include a frame, and consequently all variables are permitted to change, indicated by the frame being $\uv$
(i.e. $\lone$). Moreover, from Definition~\ref{def:morgan} we can derive many of the refinement calculus laws \cite{Mor1996} as
theorems.

\begin{theorem}[Refinement Calculus]
  If $x$ and $w$ are total lenses, then the following laws hold: \isalink{https://github.com/isabelle-utp/utp-main/blob/2376b415962a921172bb5586bf2aa1688d740dd9/utp/examples/morgan_rcalc.thy\#L14}
  \begin{align}
    w{:}[pre, post]    &\refinedby w{:}[pre, post'] & \text{if } post' \implies post \label{law:RC1} \\
    w{:}[pre, post]    &\refinedby w{:}[pre', post] & \text{if } pre \implies pre' \label{law:RC2} \\
    x,w{:}[pre, post]  &\refinedby w{:}[pre, post] & \text{if } x \lindep w \label{law:RC3} \\
    w{:}[pre, post]    &\refinedby \II & \text{if } pre \implies post \label{law:RC4} \\
    w{:}[pre, post]    &\refinedby x := e & \text{if } x \lsubseteq w, pre \implies post[e/x] \label{law:RC5} \\
    (w := e)           &= w{:}[\ptrue, w = e] & \text{if } w \unrest e \label{law:RC6} \\
    w{:}[pre, post]    &\refinedby w{:}[pre, mid] \relsemi w{:}[mid, post] \label{law:RC7} \\
    w{:}[pre, post]    &\refinedby \conditional{w{:}[b \land pre, post]}{b}{w{:}[(\neg b) \land pre, post]} \label{law:RC8}
  \end{align}
\end{theorem}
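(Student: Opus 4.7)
The plan is to unfold Definition~\ref{def:morgan}, the refinement relation, and each programming construct, reducing every conjecture to a pointwise statement over pre-state/post-state pairs in the underlying HOL type, and then discharge the residual obligations by lens algebra. In each case, $w{:}[pre,post]$ expands to the conjunction of $(\upre{pre} \implies \upost{post})$ and $\upre{\uv} \simeq_w \upost{\uv}$, so a refinement $w{:}[pre,post] \refinedby P$ splits at every $(s,s')$ into (i)~an implication obligation on pre/post, and (ii)~a frame-preservation obligation stating that $P$'s effect lies within the $w$-region. I would make both obligations available as simplification lemmas and then invoke $\textsf{rel-auto}$ combined with the lens lemmas we have already proved.

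For \ref{law:RC1} and \ref{law:RC2} the frame components on both sides are identical, so these collapse to implication-chaining and discharge by propositional reasoning. For \ref{law:RC4}, skip sets $s' = s$, so $s \simeq_w s'$ holds trivially and the implication follows from $pre \implies post$ applied pointwise. For \ref{law:RC8}, case analysis on $b$ at the initial state reduces each branch to one of the component specifications, and is essentially immediate. For \ref{law:RC7}, the refinement follows by picking the middle state $s_0$ of the sequential composition on the right as the witness for $mid$, with frame transitivity following from $\simeq_w$ being an equivalence relation.

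The genuinely lens-flavoured cases are \ref{law:RC3}, \ref{law:RC5}, and \ref{law:RC6}. For \ref{law:RC3}, I expand $\simeq_{x \lplus w}$ using Definition~\ref{def:lens-sum} and reduce the obligation $s \simeq_w s' \implies s \simeq_{x \lplus w} s'$ to a calculation that uses $x \lindep w$ together with axioms~\ref{law:PutsComm} and \ref{law:PutIrr1}/\ref{law:PutIrr2}. For \ref{law:RC5}, I use $x \lsubseteq w$ by Definition~\ref{def:sublens} to write $x = Z \lcomp w$ and then show that $x := e$ only alters $w$ by invoking \ref{law:sl-put-put}; the postcondition obligation then reduces, via \ref{law:SA1}--\ref{law:SA4}, to $pre \implies post[e/x]$. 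For \ref{law:RC6}, I unfold $\uassigns{[w \smapsto e]}$ using Definition~\ref{def:progop} and substitute using \ref{law:SA1}, with $w \unrest e$ ensuring that the substitution on $e$ is inert so the equality of relations reduces, via \ref{law:put-get} and \ref{law:get-put}, to a direct calculation.

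The main obstacle I anticipate is \ref{law:RC5}, because it is the single law that simultaneously exercises the sublens relation, the frame equivalence, and substitution; the key step is to combine $x = Z \lcomp w$ with \ref{law:sl-put-put} to discharge the frame obligation, and then use the substitution laws of Lemma~\ref{thm:substapp} to discharge the postcondition obligation without leaving a residual that mentions $x$. Once this pattern is in place, the remaining laws follow by the mechanical application of $\textsf{rel-auto}$ together with the unrestriction, sublens, and independence lemmas already mechanised in \S\ref{sec:lenses}.
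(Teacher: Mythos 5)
Your proposal is correct and follows essentially the same route as the mechanised proof behind the theorem: unfold Definition~\ref{def:morgan}, the refinement relation, and the programming operators into pointwise obligations, then discharge the frame conditions with the lens lemmas (independence and \ref{law:get-put} for \ref{law:RC3}, \ref{law:sl-put-put} for \ref{law:RC5}, \ref{law:put-put}/\ref{law:put-get} and unrestriction for \ref{law:RC6}, transitivity of $\simeq_w$ for \ref{law:RC7}) and the implication obligations with the substitution laws and propositional reasoning. The only stylistic remark is that for \ref{law:RC5} you need not unfold $x \lsubseteq w$ into $x = Z \lcomp w$, since \ref{law:sl-put-put} is already stated directly in terms of the sublens relation.
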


\noindent These laws use the lens operators as side conditions. Laws~\eqref{law:RC1} and \eqref{law:RC2} allows
strengthening and weakening of a post- and precondition, respectively. Law~\eqref{law:RC3} shows that a variable can be
removed from a frame in a refinement, because this effectively adds $\upost{x} = \upre{x}$ as a
conjunct. Laws~\eqref{law:RC4} and \eqref{law:RC5} show the circumstances under which a specification may be refined to
a skip ($\II$) or assignment ($x := e$). For the latter, $x$ must be in the frame $w$, represented by $x \lsubseteq
w$. Law~\eqref{law:RC6} shows that an assignment $w := e$ can be written as a specification statement when $e$ does not
depend on $w$. Law~\eqref{law:RC7} shows how a specification can be divided into two sequential specifications by inserting
a midpoint condition. Finally, \eqref{law:RC8} shows how a conditional can be introduced, by conjoining the precondition
with predicate $b$.

These results illustrate how lenses support modelling of frames. The theorems can be applied to derivation of algorithms
by stepwise refinement in Isabelle/UTP. We have therefore shown in this section how our lens-based semantic foundation
allows us to support a selection of verification calculi and linking theorems between them. In the next section we
consider mechanisation of UTP theories.

\section{Mechanising UTP Theories}
\label{sec:utp-thy}
In this section we apply Isabelle/UTP to the mechanisation of UTP theories. The relational program model in
\S\ref{sec:isabelle-utp} is powerful, but not without limitations. It is well known, for instance, that simple
relational programs, which capture only the possible initial and final values of program variables, cannot adequately
differentiate terminating and non-terminating behaviour~\cite{Cavalcanti&06}. Furthermore, several programming
paradigms, such as real-time, concurrency, and object orientation, require a richer semantic model with more observable
information~\cite{Cavalcanti&06,SCS06,Sherif2010,Foster19c-HybridRelations}.

This semantic enrichment can be facilitated by UTP theories. Semantic information is expressed by adding special
observational variables, which encode quantities of a program or model, and invariants that restrict their domain,
called healthiness conditions. The observational variables can be used to define specialised operators for a particular
computational paradigm, such as a delay or deadline operator for a real-time language. For example, a clock variable,
$clock : \nat$ can be added to record to passage of time, or a trace variable $tr : [\textit{Event}]list$ can be added
to record events~\cite{Cavalcanti&06}. The healthiness conditions allow us to impose well-formedness invariants over
these observational variables, and allow us to prove algebraic theorems that characterise the healthy elements. As
usual, healthiness conditions are represented as idempotent predicate transformers, that is, functions on relations over
the observational variables. The pay-off here is that general properties of elements of the UTP theory can often be
reduced to properties of the healthiness conditions~\cite{Hoare&98,Foster17c}.

We characterise UTP theories in Isabelle/UTP as follows.

\begin{definition} \label{def:isa-utp-theory} An Isabelle/UTP theory is a pair ($\src$,
  $\mathcal{H}$), where $\src$ is an observation (i.e. state) space, $\mathcal{H} : [\src]\hrel \to [\src]\hrel$ is a healthiness
  condition, and $\mathcal{H}$ is idempotent: $\mathcal{H} \circ \mathcal{H} =
  \mathcal{H}$. \isalink{https://github.com/isabelle-utp/utp-main/blob/2376b415962a921172bb5586bf2aa1688d740dd9/utp/utp_theory.thy\#L109}
\end{definition}

\noindent We mechanise this algebraic structure in Isabelle using locales~\cite{Ballarin06}. $\src$ characterises the
observations that can be made of the model. An observation space can be constructed using the \ckey{alphabet} command
(\S\ref{sec:mechss}), in which case it defines a set of lenses, $x_1 \cdots x_n$, which provide the observational
variables. The healthiness conditions are encoded as total endofunctions on homogeneous relations parameterised by
$\src$. If a theory has multiple healthiness conditions, then these can be composed function-wise. The simplest UTP
theory is the relational theory, $Rel_\alpha \defs (\alpha, \lambda X @ X)$. Any type is an instance of $\alpha$, and
any relation is a fixed-point of $\lambda X @ X$.

We say that a relation $P : [\src]\hrel$ is $\mathcal{H}$-healthy, written $P \is \mathcal{H}$, when $P$ is a
fixed-point of $\mathcal{H}$: $\mathcal{H}(P) = P$. Moreover, we characterise the healthy elements of a theory by the set
$\theoryset{\mathcal{H}} \defs \{P | P \is \mathcal{H}\}$. Idempotence of $\mathcal{H}$ guarantees the existence of at
least one fixed point. This ensures that the UTP theory is non-empty since, for any relation $P$, $\mathcal{H}(P) \in
\theoryset{\mathcal{H}}$ since $\mathcal{H}(\mathcal{H}(P)) = \mathcal{H}(P)$. To illustrate, we formalise the UTP
theory for timed relational programs introduced in \S\ref{sec:UTP}.

\begin{example}[Timed Relations] \label{ex:timerel} $ $ \isalink{https://github.com/isabelle-utp/utp-main/blob/master/tutorial/utp_simple_time.thy}

  \noindent The parametric observation space $\ssdef{[S]\ukey{rt}}{clock : \mathbb{N}, st : S}$ has lenses $clock : \mathbb{N}$,
  which denotes the passage of time, and $st : S$, which denotes the user state. Healthiness function
  $$\healthy{HT}(P) \defs (\upre{clock} \le \upost{clock} \land P)$$ ensures that time advances. $\healthy{HT}$ is
  clearly idempotent, since conjunction is idempotent. We define the following operator for introducing a delay for
  timed relations:
  \begin{align*}
    \ckey{wait} &: \mathbb{N} \to [[S]\ukey{rt}]\hrel \\
    \ckey{wait}(n) & \defs (\upost{clock} = \upre{clock} + n \land \upost{st} = \upre{st})
  \end{align*}
  Here, $clock$ is advanced by $n$, and the state is unchanged. We can prove that $\ckey{wait}(n)$ is
  $\healthy{HT}$-healthy, since it advances time. Conversely, the predicate $\upost{clock} = \upre{clock} - 1$ is not
  healthy, since it tries to reverse time. The mechanisation of the theory in Isabelle/UTP is shown in
  Figure~\ref{fig:time-rel}. \qed
\end{example}

As in previous work~\cite{Feliachi2010,Zeyda2012}, our theories form ``families'': they characterise relations on
several observation spaces, since the observation type is potentially polymorphic, and thus extensible with additional
variables. For example, our theory in Example~\ref{ex:timerel} is parametric in $S$. We can characterise a subtheory
relation between UTP theories.

\begin{definition} \label{def:subtheory} $([\vec{\beta}]\mathcal{T}_2, \mathcal{H}_2)$ is a subtheory of $(\src_1, \mathcal{H}_1)$ when $[\vec{\beta}]\mathcal{T}_2$
  specialises $\src_1$, and $\theoryset{\mathcal{H}_2} \subseteq \theoryset{\mathcal{H}_1}$.
\end{definition}

\noindent Subtheories allow us to arrange theories in a hierarchy with descendants that specialise the observation space
with additional variables and constraints.

UTP theories also include a signature: a set of operators that construct healthy elements of the theory. We say that an
operator
$$F : [\src]\hrel^{\,n} \to [\src]\hrel$$ in $n$ parameters, is
in the signature if $\theoryset{\mathcal{H}}$ is closed under $F$. Formally, we require that:
$$\forall P_1, \cdots, P_n ~ @ ~ P_1 \is \mathcal{H} \land \cdots \land P_n \is \mathcal{H} ~~ \Longrightarrow ~~ F(P_1, \cdots,
P_n) \is \mathcal{H}$$

\noindent The function $F$ can either denote a new operator defined on $\src$, or an existing operator defined over a
parent observation space. For example, nondeterministic choice $P \sqcap Q$ and sequential composition $P \relsemi Q$
often inhabit the signature of several theories, since they are typed by arbitrary relations and so can be instantiated
by any observation space. This also means that the corresponding algebraic laws for a parent operator $F$ can be
directly applied to elements of a new subtheory. For example, sequential composition is always associative, since every
theory is a subtheory of $Rel_\alpha$.

We exemplify this by giving the signature of our theory of timed relations.

\begin{lemma} $\theoryset{\healthy{HT}}$ is closed under the following relational operators: $\II$, $\relsemi$,
  $\conditional{}{b}{}$, and $x := v$ when $x \lindep clock$, and so these are also within the theory signature, as
  demonstrated by the laws below: \isalink{https://github.com/isabelle-utp/utp-main/blob/eead352783b456b61220e7a2157cef5dce4abfbc/tutorial/utp_simple_time.thy\#L80}

  \begin{center}
  \begin{tabular}{cccccc}
  \AxiomC{---\vphantom{$P$}}
  \UnaryInfC{$\II \is \healthy{HT}$}
  \DisplayProof
   &
  \AxiomC{$P \is \healthy{HT}$}
  \AxiomC{$Q \is \healthy{HT}$}
  \BinaryInfC{$(P \relsemi Q) \is \healthy{HT}$}
  \DisplayProof
   &
  \AxiomC{$P \is \healthy{HT}$}
  \AxiomC{$Q \is \healthy{HT}$}
  \BinaryInfC{$(\conditional{P}{b}{Q}) \is \healthy{HT}$}
  \DisplayProof
  &
  \AxiomC{$x \lindep clock$\vphantom{$P$}}
  \UnaryInfC{$(x := v) \is \healthy{HT}$}
  \DisplayProof
  &
  \AxiomC{---\vphantom{$P$}}
  \UnaryInfC{$\ckey{wait}(n) \is \healthy{HT}$}
  \DisplayProof  
  \end{tabular}
\end{center}
\end{lemma}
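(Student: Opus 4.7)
The plan is to first observe a convenient characterisation of $\healthy{HT}$-healthiness: since $\healthy{HT}(P) = (\upre{clock} \le \upost{clock}) \land P$, we have
$$P \is \healthy{HT} \iff P \implies (\upre{clock} \le \upost{clock}).$$
This turns every closure obligation into showing that an operator preserves the implication ``$clock$ does not decrease''. Most cases then follow from propositional reasoning and a few lens lemmas, so I would dispatch them one by one.

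For $\II$, I would unfold the definition in \cref{def:progop} to obtain $\upost{\uv} = \upre{\uv}$, from which $\upost{clock} = \upre{clock}$ follows by congruence of $\lget_{clock}$, giving the required inequality. For the conditional, the operator is defined as $(\upre{b} \land P) \lor (\neg \upre{b} \land Q)$; if both $P$ and $Q$ imply $\upre{clock} \le \upost{clock}$, then so does every disjunct, hence the whole disjunction. For $\ckey{wait}(n)$, the definition gives $\upost{clock} = \upre{clock} + n$, which immediately yields $\upre{clock} \le \upost{clock}$ on $\mathbb{N}$. For the assignment $x := v$ with $x \lindep clock$, I would expand $\uassigns{x \smapsto v}$ and show that $clock \unrest v$ is not even needed: using \ref{law:PutIrr1} (or equivalently \cref{law:SB4} at the substitution level, since $x \lindep clock$), the substitution leaves $clock$ untouched, so $\upost{clock} = \upre{clock}$ and the inequality holds.

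The main obstacle is the case for $\relsemi$, because it requires the one nontrivial use of transitivity of $\le$ on $\mathbb{N}$ across an intermediate state. Using \cref{def:progop}, $P \relsemi Q$ is $\bm{\exists} \mv{v}_0 @ P[\mv{v}_0/\upost{\uv}] \land Q[\mv{v}_0/\upre{\uv}]$. Assuming $P, Q \is \healthy{HT}$, the first conjunct gives $\upre{clock} \le \mv{v}_0.clock$ and the second gives $\mv{v}_0.clock \le \upost{clock}$, so transitivity yields $\upre{clock} \le \upost{clock}$, which then survives the existential quantification. In Isabelle/UTP, I expect this step to fall directly to \textsf{rel-auto} together with the transitivity lemma on $\mathbb{N}$, once we have rewritten composition via the substitution laws of \cref{thm:relsubst}.

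Finally, to deliver the stated natural-deduction-style rules, I would present each closure result in the form ``premises $P_i \is \healthy{HT}$ imply the conclusion $F(P_1, \dots, P_n) \is \healthy{HT}$'', noting that the side condition $x \lindep clock$ for assignment is a semantic lens independence proviso rather than a syntactic one, exactly the kind of meta-logical proviso our lens-based setting makes expressible. The five cases together establish that $\theoryset{\healthy{HT}}$ is closed under the listed operators, hence these operators belong to the signature of the timed relations theory.
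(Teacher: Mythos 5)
Your proposal is correct and matches the approach taken in the paper's mechanisation: the paper gives no in-text proof but discharges each closure rule by unfolding $\healthy{HT}$ to the refinement $[P \implies \upre{clock} \le \upost{clock}]$ and applying \textsf{rel-auto}, exactly as you describe, with transitivity of $\le$ on $\nat$ for the $\relsemi$ case and lens independence (\ref{law:PutIrr1}) for the assignment case. Your observation that $clock \unrest v$ is not needed for the assignment rule is also consistent with the stated side condition.
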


\noindent Once the signature operators have been established, the final step is to prove the characteristic algebraic
theorems for them. These laws effectively provide an algebraic semantics for the UTP theory, and can be used to aid in
the construction of program verification tools. There are three ways to obtain such theorems in Isabelle/UTP. Firstly,
we can inherit them from a parent theory by utilising Definition~\ref{def:subtheory}. Secondly, we can import them from
an algebraic structure by proving the algebra's axioms. Thirdly, we can prove them manually using Isabelle/UTP's proof
tactics.

\begin{figure}
  \begin{center}
    \includegraphics[width=.95\linewidth]{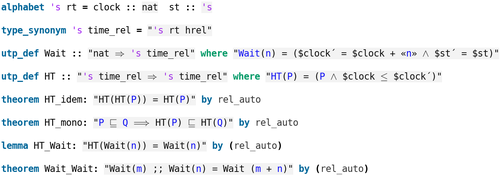}
  \end{center}

  \vspace{-2ex}
  
  \caption{Timed Relations in Isabelle/UTP}
  \label{fig:time-rel}

    \vspace{-2ex}
\end{figure}

We exemplify the third approach for the new delay operator, using our \textsf{rel-auto} proof tactic
(\S\ref{sec:utp-expr}).

\begin{theorem}[Delay Laws] \isalink{https://github.com/isabelle-utp/utp-main/blob/12a647e27a10f89cd527a89918591c6390a642bd/tutorial/utp_simple_time.thy\#L111}
  \begin{align*}
    \ckey{wait}(0) &~=~ \II \\
    \ckey{wait}(m) \relsemi \ckey{wait}(n) &~=~ \ckey{wait}(m + n) \\
    \ckey{wait}(m) \relsemi (\conditional{P}{b}{Q}) &~=~ \conditional{(\ckey{wait}(m) \relsemi P)}{b}{(\ckey{wait}(m) \relsemi Q)}  & \text{if } clock \unrest b \\
    \ckey{wait}(m) \relsemi x := v &~=~ x := v \relsemi \ckey{wait}(m) & \text{if } clock \unrest v, x \lindep clock
  \end{align*}
\end{theorem}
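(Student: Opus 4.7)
The plan is to dispatch each of the four equations by unfolding the definitions of $\ckey{wait}(n)$, $\II$, assignment, sequential composition, and conditional from Section~\ref{sec:relcalc}, and then either invoking the \textsf{rel-auto} tactic of Section~\ref{sec:utp-expr} or chaining the algebraic laws already proved in Theorem~\ref{thm:proglaws}, Corollary~\ref{thm:asnlaws}, and Lemma~\ref{thm:substapp}. Since the state space $[S]\ukey{rt}$ has been built via the \ckey{alphabet} command, \textsf{rel-auto} can eliminate $clock$ and $st$ to logical variables and reduce each obligation to an arithmetic/first-order tautology.

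For the first equation, I would unfold $\ckey{wait}(0)$ to $\upost{clock} = \upre{clock} + 0 \land \upost{st} = \upre{st}$ and simplify $+0$; because $clock$ and $st$ are independent and, by the properties guaranteed by \ckey{alphabet} in Section~\ref{sec:mechss}, $clock \lplus st \lequiv \lone$, this is equivalent to $\upost{\uv} = \upre{\uv}$, which is $\II$. For the second equation, I would unfold relational composition from Definition~\ref{def:progop}, introducing an intermediate logical variable $\mv{v}_0$ whose $clock$-component is $clock + m$ and whose $st$-component is $st$; substituting into the second conjunct yields $clock' = (clock + m) + n$ and $st' = st$, which by associativity of $+$ is precisely $\ckey{wait}(m + n)$.

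For the third equation, I would use the fact that $\ckey{wait}(m)$ modifies only $clock$, while the hypothesis $clock \unrest b$ (lifted to $\upost{clock} \unrest \upre{b}$ after the appropriate precondition-lifting) ensures that $b$'s value is preserved across the intermediate state of the composition. Concretely, I would appeal to \ref{law:SA5} and the distributivity reasoning behind \ref{law:LP10}, pushing the wait past the conditional by showing the relational analogue of substitution commuting with the guard. For the fourth equation, $x \lindep clock$ means the $\lput$ operations for $x$ and $clock$ commute, and $clock \unrest v$ ensures that the expression $v$ is unaffected by the clock update; this is the concrete specialisation of Example~\ref{ex:asn-comm}, and \textsf{rel-auto} should close it directly once the two hypotheses are available in their semantic form.

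The main obstacle will be the third equation. Unlike the others, its provisos combine both the semantic notion of unrestriction and the structural behaviour of a predicate ($b$) that is referenced on both sides of a sequential composition. The non-trivial step is to recognise that \textsf{rel-auto} needs $clock \unrest b$ in its unfolded form $\forall (s, k) @ b(\lput_{clock}~s~k) = b(s)$ to rewrite the intermediate-state occurrence of $b$. Once this rewrite fires, all four obligations collapse to elementary arithmetic or propositional facts, and no further lens-algebraic reasoning is required.
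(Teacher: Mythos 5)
Your proposal is correct and matches the paper's approach: the paper proves the Delay Laws by unfolding the definitions and applying the \textsf{rel-auto} tactic, exactly as you propose, with the unrestriction and independence hypotheses supplying the semantic facts needed to rewrite the intermediate state. Your additional manual chaining through the composition and substitution laws is a sound elaboration of what \textsf{rel-auto} does internally, so there is no gap.
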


\noindent Waiting for a zero length duration is simply a skip operation ($\II$), sequential composition of two delays,
by $m$ and $n$ time units, is equivalent to a single $m + n$ delay. A delay distributes through a conditional provided
that $b$ does not refer to the $clock$ variable. A delay commutes with an assignment ($x := v$) provided that the delay
expression does not depend on $x$ and $v$ does not depend on $clock$.

Algebraic theorems for UTP theories can also be obtained by linking to a variety of mechanised algebraic structures. The
\textsf{HOL-Algebra} library~\cite{HOL-Algebra}, for example, characterises the axioms of partial orders, lattices,
complete lattices, Galois connections, and the myriad of theorems that can be derived from them. A substantial advantage
of the UTP approach is that algebraic theorems can often be reduced to proving properties of the underlying healthiness
conditions, which allows us to obtain laws with minimal effort. We exemplify this by restricting ourselves to a
particular subclass of continuous UTP theories.

\begin{definition} \label{def:cont-utp-thy} $\mathcal{H}$ is continuous provided that
  $\mathcal{H}(\bigsqcap_{i \in A} P(i)) = \bigsqcap_{i \in A} \mathcal{H}(P(i))$ whenever $A \neq \emptyset$. \isalink{https://github.com/isabelle-utp/utp-main/blob/12a647e27a10f89cd527a89918591c6390a642bd/utp/utp_healthy.thy\#L112}
\end{definition}

\noindent In a continuous theory, the healthiness condition distributes through arbitrary non-empty relational infima. A
corollary of this definition is that $\mathcal{H}$ is also monotonic:
$P \refinedby Q \implies \mathcal{H}(P) \refinedby \mathcal{H}(Q)$. So, by the Knaster-Tarski theorem, also part of
\textsf{HOL-Algebra}, we can show that $\theoryset{\mathcal{H}}$ forms a complete lattice under refinement
$\refinedby$. This allows us to import theorems for recursive and iterative programs into a UTP theory.

From the induced complete lattice, we obtain the following operators: infimum
$\textstyle{\bigsqcap_{\thy{\mathcal{T}}}}$, supremum $\textstyle{\bigsqcup_{\thy{\mathcal{T}}}}$, top element
$\top_{\thy{\mathcal{T}}}$, bottom element $\bot_{\thy{\mathcal{T}}}$, and least fixed-point $\mu_{\thy{\mathcal{T}}}$,
which are all in the theory's signature, and for which the usual complete lattice and fixed-point calculus
theorems~\cite{Backhouse1995} hold. Moreover, in a continuous UTP theory, these operators can be calculated using the
equational theorems given below.

\begin{lemma}[Continuous Theory Properties] \label{thm:contthy} \isalink{https://github.com/isabelle-utp/utp-main/blob/12a647e27a10f89cd527a89918591c6390a642bd/utp/utp_theory.thy\#L284}
  \begin{align*}
    \top_{\thy{\mathcal{T}}} &= \mathcal{H}(\false) \\ 
    \bot_{\thy{\mathcal{T}}} &= \mathcal{H}(\true) \\ 
    \textstyle{\bigsqcap_{\thy{\mathcal{T}}}} \, A &= \left(\conditional{\top_{\thy{\mathcal{T}}}}{A = \emptyset}{\textstyle{\bigsqcap A}}\right) & \text{if } A \subseteq \theoryset{\mathcal{H}} \\
    \mu_{\thy{\mathcal{T}}} \, F &= \mu X @ F(\mathcal{H}(X)) & \text{if } F : \theoryset{\mathcal{H}} \to \theoryset{\mathcal{H}} \textnormal{ and } F \textnormal{ is monotonic}
  \end{align*}
  
\end{lemma}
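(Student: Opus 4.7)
The plan is to piggyback on the Knaster--Tarski observation made just before the lemma: $\theoryset{\mathcal{H}}$ is already known to form a complete lattice under $\refinedby$, so each clause reduces to exhibiting a concrete healthy representative and verifying the appropriate universal property in the sub-lattice. Throughout I would draw on three ingredients: (i) monotonicity of $\mathcal{H}$, which follows from continuity (\cref{def:cont-utp-thy}); (ii) idempotence of $\mathcal{H}$ from \cref{def:isa-utp-theory}; and (iii) the ambient lattice structure of \cref{thm:predcba}, which tells us that in $([\src]\hrel,\refinedby)$ the top is $\false$, the bottom is $\true$, and $\bigsqcap$ coincides with disjunction.

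For $\top_{\thy{\mathcal{T}}} = \mathcal{H}(\false)$, idempotence immediately delivers healthiness of $\mathcal{H}(\false)$, and for any healthy $P$ the chain $P = \mathcal{H}(P) \refinedby \mathcal{H}(\false)$ follows by monotonicity applied to the ambient inequality $P \refinedby \false$; hence $\mathcal{H}(\false)$ is a healthy upper bound for every healthy element. The case $\bot_{\thy{\mathcal{T}}} = \mathcal{H}(\true)$ is symmetric. For the infimum clause the empty-set case is just the lattice convention $\bigsqcap \emptyset = \top_{\thy{\mathcal{T}}}$; the non-empty case is precisely where continuity earns its keep, since $\mathcal{H}(\bigsqcap A) = \bigsqcap_{P \in A} \mathcal{H}(P) = \bigsqcap A$ shows the ambient infimum is already healthy, and as the greatest lower bound in the ambient lattice it is \emph{a fortiori} the greatest lower bound in the sub-lattice.

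The main obstacle is the fixed-point identity, because one side is a fixed point taken over $\theoryset{\mathcal{H}}$ and the other a fixed point taken over the ambient lattice. The plan here is to set $G(X) \defs F(\mathcal{H}(X))$, which is monotonic on $([\src]\hrel, \refinedby)$ as the composition of monotonic maps, so that $\mu G$ exists by Knaster--Tarski. I would then argue in three steps: (a) $\mu G$ is healthy, because $\mu G = G(\mu G) = F(\mathcal{H}(\mu G))$ lies in the image of $F$, and $F$ is assumed to land in $\theoryset{\mathcal{H}}$; (b) it is a fixed point of $F$ on the sub-lattice, since healthiness collapses $F(\mathcal{H}(\mu G))$ to $F(\mu G)$; and (c) it is the weakest such, since any healthy fixed point $P$ of $F$ satisfies $G(P) = F(\mathcal{H}(P)) = F(P) = P$, so $P$ is a fixed point of $G$ in the ambient lattice, and hence $\mu G \refinedby P$ by minimality of $\mu G$. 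The subtle point is (a): \emph{a priori} $\mu G$ is an arbitrary element of $[\src]\hrel$, but evaluating $G$ at $\mu G$ pushes its argument through $\mathcal{H}$ first, so the typing condition $F : \theoryset{\mathcal{H}} \to \theoryset{\mathcal{H}}$ really does apply and forces the fixed point to be healthy. With this in place, the four equations follow from the uniqueness of tops, bottoms, infima, and least fixed points in $\theoryset{\mathcal{H}}$.
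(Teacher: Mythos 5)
Your proposal is correct and follows essentially the argument used in the Isabelle mechanisation (the paper itself states this lemma without a textual proof): idempotence plus monotonicity for the top and bottom, continuity to show the ambient non-empty infimum is already healthy, and the transfer between $\mu G$ for $G = F \circ \mathcal{H}$ and the least fixed point in the sub-lattice. Your step (a), observing that $\mu G = F(\mathcal{H}(\mu G))$ forces healthiness of $\mu G$ via the typing of $F$, is exactly the key point that makes the fixed-point clause go through.
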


\noindent These lemmas demonstrate the relationship between the theory operators, and the relational ones defined in
\S\ref{sec:isabelle-utp}. The theory top and bottom elements are the relational top and bottom with $\mathcal{H}$
applied. The theory infimum $\textstyle{\bigsqcap_{\thy{\mathcal{T}}}} \, A$ is $\top_{\thy{\mathcal{T}}}$ when $A$ is
empty, and otherwise the relational infimum. Moreover, the least fixed-point operator can be expressed as the relational
least fixed-point by precomposition of $\mathcal{H}$. The utility of these theorems is that we can construct
nondeterministic choices and recursive programs using the relational operators $\sqcap$ and $\mu$, which again allows
reuse of their algebraic laws.

We now demonstrate how we can obtain such theorems for timed relations.

\begin{theorem}
  \noindent $\healthy{HT}$ is continuous, since conjunction distributes through disjunction ($\bigsqcap$). We therefore
  obtain a complete lattice, and can calculate the top and bottom element: \isalink{https://github.com/isabelle-utp/utp-main/blob/12a647e27a10f89cd527a89918591c6390a642bd/tutorial/utp_simple_time.thy\#L50}
  $$\top_{\!{\ukey{rt}}} = \healthy{HT}(\false) = \false \qquad \bot_{{\ukey{rt}}} = \healthy{HT}(\true) = (clock \le
  clock')$$
  We also obtain a least fixed-point operator for constructing and reasoning about iteration, $\mu_{\ukey{rt}}$, and can
  use it to denote a timed loop operator: $b \oast_{\ukey{rt}} P \defs (\mu_{\ukey{rt}} X @ \conditional{P \relsemi X}{b}{\II})$.
\end{theorem}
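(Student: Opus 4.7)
The plan is to establish the three conjuncts of the theorem in the order they are stated, leveraging the machinery already developed for continuous UTP theories.

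First, I would prove that $\healthy{HT}$ is continuous directly from its definition. Unfolding $\healthy{HT}(P) = (\upre{clock} \le \upost{clock} \land P)$, the continuity obligation becomes
\[
\upre{clock} \le \upost{clock} \land \textstyle{\bigsqcap_{i \in A}} P(i) ~=~ \textstyle{\bigsqcap_{i \in A}} \bigl(\upre{clock} \le \upost{clock} \land P(i)\bigr)
\]
for non-empty $A$. Since $\bigsqcap$ in the refinement lattice corresponds to disjunction over predicates, this is exactly the fact that conjunction distributes through arbitrary disjunction, which holds in any Boolean algebra and therefore in $([\src]\upred, \lor, \land, \neg, \pfalse, \ptrue)$ by \cref{thm:predcba}. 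I would discharge this via the \textsf{rel-auto} tactic of \S\ref{sec:utp-expr}, which eliminates the expression layer and reduces the goal to a HOL tautology.

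Second, having established continuity (and thereby monotonicity), the complete lattice structure on $\theoryset{\healthy{HT}}$ is obtained by appealing to \cref{thm:contthy}, which gives $\top_{\ukey{rt}} = \healthy{HT}(\false)$ and $\bot_{\ukey{rt}} = \healthy{HT}(\true)$. Both values are then calculated by simple evaluation: $\healthy{HT}(\false) = (\upre{clock} \le \upost{clock} \land \false) = \false$ by the zero law of conjunction, and $\healthy{HT}(\true) = (\upre{clock} \le \upost{clock} \land \true) = (clock \le clock')$ by the unit law. Again \textsf{rel-auto} suffices.

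Third, to justify the loop definition $b \oast_{\ukey{rt}} P \defs (\mu_{\ukey{rt}} X @ \conditional{P \relsemi X}{b}{\II})$, I would appeal once more to \cref{thm:contthy}, which guarantees existence of $\mu_{\ukey{rt}} F$ whenever $F$ is a monotonic endofunction on $\theoryset{\healthy{HT}}$. This requires two subsidiary facts: that the body function $X \mapsto \conditional{P \relsemi X}{b}{\II}$ preserves $\healthy{HT}$-healthiness (which follows from closure of $\theoryset{\healthy{HT}}$ under $\relsemi$, $\conditional{}{b}{}$, and $\II$ shown in the preceding signature lemma, provided $P$ is $\healthy{HT}$-healthy), and that it is monotonic in $X$ (which follows from the general monotonicity of sequential composition and the conditional in their relational arguments).

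The main obstacle is likely to be a technical rather than conceptual one: ensuring that the $\mu_{\ukey{rt}}$ operator obtained through \cref{thm:contthy} is correctly applied at the typed level $\theoryset{\healthy{HT}} \to \theoryset{\healthy{HT}}$, which requires the body to be shown healthiness-preserving before the fixed point is formed. Beyond that bookkeeping, every step reduces either to Boolean-algebra distributivity or to a direct instance of \cref{thm:contthy}.
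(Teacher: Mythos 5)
Your proposal is correct and follows essentially the same route as the paper: continuity of $\healthy{HT}$ from distributivity of conjunction over (non-empty) disjunction, the top and bottom elements via \cref{thm:contthy} followed by direct evaluation of $\healthy{HT}(\false)$ and $\healthy{HT}(\true)$, and the loop operator justified by monotonicity and $\healthy{HT}$-closure of the body. The only extra content you supply is the explicit bookkeeping for the fixed point (healthiness preservation and monotonicity of the body), which the paper leaves implicit via its earlier signature lemma.
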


\noindent From these theorems, we can proceed to define a timed Hoare calculus and proof tactics following the
template given in \S\ref{sec:verify-rel}. A concrete timed program can be modelled by creating a suitable state
space, $\ssdef{vars}{x_1 : \tau_1 \cdots x_n : \tau_n}$, defining the program $P : [[vars]\ukey{rt}]\hrel$ using the
signature operators, and proving that $P \is \healthy{HT}$. Finally, the proven Hoare logic deduction rules can be
used for program verification.

In this section, we have demonstrated how a UTP theory can be mechanically validated, a set of signature operators
verified, and algebraic theorems for these operators proved. Though the example of timed relations is simple, it
illustrates the basic concepts that we have used for more complex theories in Isabelle/UTP. Notably, we have applied our
approach to the mechanisation of a hierarchy of UTP theories for reactive programs that uses five observational
variables and seven healthiness conditions~\cite{Foster17c}. This UTP theory has been applied to the development of a
verification tool for the \Circus language~\cite{Foster18a}, which extends our imperative programming language with
concurrency and communication~\cite{Oliveira&09}.

\section{Related work}
\label{sec:relwork}
There have been several previous works on application of algebraic semantics to modelling mutation of state. Back and von
Wright~\cite{Back1998} use two functions $\textsf{val}.x : \src \to \view$ and
$\textsf{set}.x : \view \to \src \to \src$, to characterise each variable, together with five axioms that characterise
their behaviour. 

\begin{definition}[Back and von Wright's Variable Axioms~\cite{Back1998,Back2005-Procedures}] \label{def:back-var}
\begin{align}
  \textsf{val}.x.(\textsf{set}.x.a.\sigma) &= a \label{law:bv-put-get} \\
  x \neq y ~\implies~ \textsf{val}.y.(\textsf{set}.x.a.\sigma) &= \textsf{val}.y.\sigma \label{law:bv-put-irr} \\
  \textsf{set}.x.a \relsemi \textsf{set}.x.b &= \textsf{set}.x.b \label{law:bv-put-put} \\
  x \neq y ~\implies~ \textsf{set}.x.a \relsemi \textsf{set}.y.b &= \textsf{set}.y.b \relsemi \textsf{set}.x.a \label{law:bv-put-comm} \\
  \textsf{set}.x.(\textsf{val}.x.\sigma).\sigma &= \sigma \label{law:bv-get-put}
\end{align}
\end{definition}

\noindent The dot operator ($f.x$) denotes function application in Back and von Wright's work. From these axioms, and a
predicate transformer semantics, they are able to prove the laws of programming~\cite{Back1998}, including
\cref{ex:asn-comm}. The lens functions $\lput$ and $\lget$ correspond to the variable functions \textsf{set} and
\textsf{val}, and have almost the same axioms~\cite{Fischer2015}. Specifically, the axioms of total lenses in
\cref{def:total-lenses} correspond to Back and von Wright's axioms \eqref{law:bv-put-get}, \eqref{law:bv-put-put}, and
\eqref{law:bv-get-put} in Definition~\ref{def:back-var}. The other two axioms, \eqref{law:bv-put-comm} and
\eqref{law:bv-put-irr}, are captured by lens independence (\cref{def:lens-indep}). In fact, \cref{thm:indep-total}
demonstrates that only four of Back and von Wright's variable axioms are necessary~\cite{Back1998};
axiom~\eqref{law:bv-put-irr} can be proved from the others and so is redundant.

We have based our work on lenses rather than Back and von Wright's approach, which is also the reason for the different
order in the parameters of $\lput$~\cite{Foster07,Foster09}. Nevertheless, we effectively use lenses to relax Back and
von Wright's axioms so that we can characterise, not only independence of variables, but also when one variable is
``part of'' another. Specifically, in Definition~\ref{def:back-var}, $x$ and $y$ are individual variables, but with
lenses they can equally refer to a set or hierarchical structure. Variable sets are also supported by Back and
Preoteasa~\cite{Back2005-Procedures} by lifting these axioms over a list, for instance to support multiple assignment.
Our approach has more generality since it allows hierarchy and does not require formalising variable names. Also similar
to our work, Back and Preoteasa~\cite{Back2005-Procedures} use their \textsf{set} variable operator to characterise
substitution.

J.~Foster et al.~\cite{Foster09} created lenses as an abstraction for bidirectional programming and solving the
view-update problem in database theory~\cite{Bancilhon1981-ViewUpdate}, which seeks to propagate changes to a computed
view table back to the original source table. Fischer et al. \cite{Fischer2015} give a detailed study of the algebraic
laws for $\lget$ and $\lput$, which is the starting point for our work. Foster et al. provide combinators for composing
abstract lenses and concrete lenses based on trees~\cite{Foster07,Foster09}. They have been practically applied in the
\emph{Boomerang} language\footnote{Boomerang home page: \url{http://www.seas.upenn.edu/~harmony/}} for transformations
on textual data structures, such as XML databases. There is also a substantial Haskell library\footnote{Lenses, Folds
  and Traversals. \url{https://hackage.haskell.org/package/lens}} that can be used to develop generic data type
transformations. Pickering et al.~\cite{Pickering2017-Optics} extend this library to support modular data accessors by
introducing various additional combinators, including a form of parallel composition. Our lens sum operator
(\cref{def:lens-sum}) share similarity with the parallel composition operator defined by Pickering et
al.~\cite{Pickering2017-Optics}. It is different, however, because the latter acts on an explicitly separated state
space of the form $A \times B$, whereas lens sum acts on an arbitrary state space.

Variables are also given an algebraic semantics by Dongol et al.~\cite{Dongol19}, through the development of Cylindrical
Kleene Algebra, that extends their previous work~\cite{Armstrong2015,Gomes2016}. The core of their approach is a
cylindrification operator, $C_x~R$, that liberates the variable $x$ in relation $R$. Their work contains a comprehensive
investigation of the algebraic properties of this operator in an algebraic program model. They apply this algebra to
characterise assignment, substitution, and frames. They use a functional model of state~\cite{Schirmer2009}, and thus
need to fix types for both variables and values, as explained in \cref{sec:state-space-modelling}. We hope in the future
to apply their algebraic structures and generalise them in the context of lenses, building on our previous
results~\cite{Foster16a,Foster18b}. For comparison, relational cylindrification can effectively be implemented using our
lens summation and quantifier operators as $C_x~R \defs \exists \, (\upre{x} \lplus \upost{x}) @ R$.

\section{Conclusions}
\label{sec:concl}
In this paper we have given a comprehensive exposition of the foundations of our verification framework,
Isabelle/UTP. As we have demonstrated, Isabelle/UTP provides a unified semantic foundation for a variety of
computational paradigms and programming languages, with the ability to formulate machine-checked semantic models. These
models can then be used in constructing verification tools that harness the powerful automated proof facilities of
Isabelle/HOL.

We have described how variables can be modelled as algebraic objects using lenses. This allows us to unify and compose a
variety of variable models, and define generic operators for their comparison and manipulation. The lens-based model, in
particular, allows us to avoid dependence on syntactic concepts like naming, and instead build upon semantic properties
like independence and containment.

We used this algebraic foundation to develop a flexible shallow embedding of UTP, including expressions, predicates, and
relations. We provided proof tactics for relational conjectures, and a library of algebraic theorems. In contrast to
previous work, our expression model additionally supports syntax-like manipulations, including substitution and
unrestriction, without the need for explicit modelling of variable names. A particularly pleasing result is the semantic
unification of assignment and substitution, which leads to several elegant algebraic laws. Our mechanisation therefore
offers both efficient automated proof and expressivity.

We have used this relational model to develop programming operators, and prove the ``laws of
programming''~\cite{Hoare87} as theorems, along with the axioms of several algebraic structures, like complete lattices
and cylindrical algebra. We have then used this body of laws in \S\ref{sec:verify-rel} to provide symbolic execution,
Hoare logic verification, and refinement calculus components for relational programs. We show how lenses allow us to
perform symbolic evaluation, reason about aliasing of different variables, and model frames and Morgan's specification
statement~\cite{Mor1996}. Though we use UTP notation throughout, different syntactic flavours can easily be accommodated
using Isabelle's powerful syntax processing facilities~\cite{Tuong2019-CIsabelle}.

Finally we have described how the relational model can be applied to mechanised UTP theories. We have shown how UTP
theories are represented, in terms of observation spaces and healthiness conditions. Our theory model supports reuse of
algebraic laws by a notion of inheritance, whereby a UTP theory can be extended with additional observational quantities
and refined by further healthiness conditions. We have shown how further laws can be obtained with links to algebraic
structures.

In conclusion, we have made the first steps toward realising the UTP vision~\cite[Chapter~0]{Hoare&98} of integrated
formal methods, underpinned by unifying semantics, with mechanised support. In many ways, this is only the beginning, as
there remains a large number of computational paradigms that are not yet represented in Isabelle/UTP. A major item for
future work is the mechanisation of the various pen-and-paper UTP theories that have been developed by the UTP community
over the past twenty years~\cite{Hoare&98,Cavalcanti04,SCS06,Sherif2010,Bresciani2012}. We are currently working on
various UTP theories to support verification of low-level code, hybrid
programs~\cite{Foster19c-HybridRelations,Foster2020-dL}, control law block diagrams~\cite{Foster16b}, and state
machines~\cite{Foster18b}. Moreover, we also plan to analyse the efficiency of the various proof tactics described here
and see if they scale to larger models.

With respect to verification of low-level code, one of the major features needed is a model of dynamically allocated
memory addressed by pointers. We have hinted back in \S\ref{sec:lensax} that we can weaken the axioms of total lenses to
describe ``partial lenses''. These are only defined for a subset of the possible states, which can be used to
distinguish allocated and dangling pointers. We are currently using this idea to mechanise separation
logic~\cite{Reynolds2002} in Isabelle/UTP, by combining lenses and separation
algebra~\cite{Calcagno2007,Dongol2015,Foster20-LocalVars}. In tandem with this, we plan to study the algebraic structure
of lenses in more depth and from the perspective of category theory, due to heterogeneous nature of the equivalence
operator. In particular, we will explore the links between lenses and recent work on cylindrical Kleene
algebra~\cite{Dongol19}.

With respect to verification of hybrid dynamical systems, we have previously described an extension of the relational
calculus with continuous variables~\cite{Foster16b,Foster17b}. The generic trace model of reactive
designs~\cite{Foster17b} and our extensible mechanisation of UTP theories means that we can specialise the reactive
design hierarchy in a different direction to describe hybrid reactive systems, where the trace is a piecewise continuous
function. This can be applied, for instance, to assign a UTP semantics to a language like Hybrid CSP~\cite{He94}, which
integrates continuous evolution with discrete CSP-style events and concurrency. Proof support for such a language
depends on the ability to reason about invariants of differential equations, and so we are also integrating Platzer's
differential induction technique~\cite{Platzer2008,Foster2020-dL}, as employed by the KeYmaera tool\footnote{KeYmaera:
\url{http://symbolaris.com/info/KeYmaera.html}}, into Isabelle/UTP. Moreover, we have previously given a UTP semantics
to the dynamical systems modelling language Modelica~\cite{Foster16b}, and so we will also develop proof facilities for
this.

\section*{Acknowledgements}

\noindent We would like thank Prof. Burkhart Wolff for his invaluable feedback on our work, and for first pointing us in
the direction of lenses as a possible research direction. We also thank the anonymous reviewers of our paper, whose
thorough and constructive comments on our work have greatly enhanced its quality. This work is funded by the EPSRC
projects CyPhyAssure\footnote{CyPhyAssure Project: \url{https://www.cs.york.ac.uk/circus/CyPhyAssure/}} (Grant
EP/S001190/1), RoboCalc\footnote{RoboCalc Project: \url{https://www.cs.york.ac.uk/circus/RoboCalc/}} (Grant
EP/M025756/1), RoboTest (EP/R025479/1), and the Royal Academy of Engineering.

\section*{References}

\bibliographystyle{plain}
\bibliography{FACS-SCP}

\begin{thebibliography}{10}

\bibitem{Abrial1996a}
J.-R. Abrial.
\newblock {\em {The B-Book:~Assigning Programs to Meanings}}.
\newblock Cambridge University Press, 1996.

\bibitem{Alkassar2008}
E.~Alkassar, M.~Hillebrand, D.~Leinenbach, N.~Schirmer, and A.~Starostin.
\newblock The {Verisoft} approach to systems verification.
\newblock In {\em VSTTE 2008}, volume 5295 of {\em LNCS}, pages 209--224.
  Springer, 2008.

\bibitem{Armstrong2015}
A.~Armstrong, V.~Gomes, and G.~Struth.
\newblock Building program construction and verification tools from algebraic
  principles.
\newblock {\em Formal Aspects of Computing}, 28(2):265--293, 2015.

\bibitem{Back2005-Procedures}
R.-J. Back and V.~Preoteasa.
\newblock An algebraic treatment of procedure refinement to support mechanical
  verification.
\newblock {\em Formal Aspects of Computing}, 17(1), 2005.

\bibitem{vonWright1990-RefHOL}
R.-J. Back and J.~von Wright.
\newblock Refinement concepts formalised in higher order logic.
\newblock {\em Formal Aspects of Computing}, 2(3):247--272, 1990.

\bibitem{Back1998}
R.-J. Back and J.~von Wright.
\newblock {\em Refinement Calculus: A Systematic Introduction}.
\newblock Springer, 1998.

\bibitem{Backhouse1995}
R.~Backhouse, C.~Aarts, B.~Boiten, H.~Doombos, et~al.
\newblock Fixed-point calculus.
\newblock {\em Information Processing Letters}, 53:131--136, 1995.

\bibitem{Ballarin06}
C.~Ballarin.
\newblock Interpretation of locales in {Isabelle}: Theories and proof contexts.
\newblock In {\em Proc. 5th Intl. Conf. on Mathematical Knowledge Management
  (MKM)}, volume 4108 of {\em LNCS}, pages 31--43. Springer, 2006.

\bibitem{HOL-Algebra}
C.~Ballarin et~al.
\newblock {The Isabelle/HOL Algebra Library}.
\newblock Technical report, Isabelle/HOL Library, 2018.
\newblock
  \url{https://isabelle.in.tum.de/library/HOL/HOL-Algebra/document.pdf}.

\bibitem{Bancilhon1981-ViewUpdate}
F.~Bancilhon and N.~Spyratos.
\newblock Update semantics of relational views.
\newblock {\em ACM Transactions on Database Systems}, 6(4):557--575, 1981.

\bibitem{Barnett2005Boogie}
M.~Barnett, B.-Y. Chang, R.~DeLine, B.~Jacobs, and R.~Leino.
\newblock Boogie: A modular reusable verifier for object-oriented programs.
\newblock In {\em FMCO}, LNCS. Springer, 2005.

\bibitem{Blanchette2011}
J.~C. Blanchette, L.~Bulwahn, and T.~Nipkow.
\newblock Automatic proof and disproof in {Isabelle/HOL}.
\newblock In {\em FroCoS}, volume 6989 of {\em LNCS}, pages 12--27. Springer,
  2011.

\bibitem{Blanchette2010-Nitpick}
J.~C. Blanchette and T.~Nipkow.
\newblock A counterexample generator for higher-order logic based on a
  relational model finder.
\newblock In {\em Interactive Theorem Proving}, volume 6172 of {\em LNCS},
  pages 131--146. Springer, 2010.

\bibitem{BoultonGGHHT1992}
Richard~J. Boulton, Andrew~D. Gordon, Michael J.~C. Gordon, John Harrison, John
  Herbert, and John~Van Tassel.
\newblock Experience with embedding hardware description languages in {HOL}.
\newblock In Victoria Stavridou, Thomas~F. Melham, and Raymond~T. Boute,
  editors, {\em Proceedings of the {IFIP} {TC10/WG} 10.2 International
  Conference on Theorem Provers in Circuit Design: Theory, Practice and
  Experience, Nijmegen, 22--24 June 1992}, volume {A-10} of {\em {IFIP}
  Transactions}, pages 129--156. North-Holland, 1992.

\bibitem{Bresciani2012}
R.~Bresciani and A.~Butterfield.
\newblock A {UTP} semantics of {pGCL} as a homogeneous relation.
\newblock In {\em IFM}, volume 7321 of {\em LNCS}, pages 191--205. Springer,
  2012.

\bibitem{Broy1998-IntegratedFormalMethods}
M.~Broy and O.~Slotosch.
\newblock Enriching the software development process by formal methods.
\newblock In {\em Applied Formal Methods -- FM-Trends 98}, volume 1641 of {\em
  LNCS}, pages 44--61. Springer, 1998.

\bibitem{Butterfield2010}
A.~Buttefield.
\newblock Saoithin: A theorem prover for {UTP}.
\newblock In {\em UTP}, volume 6445 of {\em LNCS}, pages 137--156. Springer,
  2010.

\bibitem{Butterfield2012}
A.~Butterfield.
\newblock The logic of {$U\!\cdot\!(TP)^2$}.
\newblock In {\em UTP}, volume 7681 of {\em LNCS}, pages 124--143. Springer,
  2012.

\bibitem{Calcagno2007}
C.~Calcagno, P.~O'Hearn, and Y.~Hongseok.
\newblock Local action and abstract separation logic.
\newblock In {\em LICS}. IEEE, 2007.

\bibitem{Cavalcanti04}
A.~Cavalcanti and J.~Woodcock.
\newblock A tutorial introduction to designs in unifying theories of
  programming.
\newblock In {\em Proc. 4th Intl. Conf. on Integrated Formal Methods (IFM)},
  volume 2999 of {\em LNCS}, pages 40--66. Springer, 2004.

\bibitem{Cavalcanti&06}
A.~Cavalcanti and J.~Woodcock.
\newblock A tutorial introduction to {CSP} in unifying theories of programming.
\newblock In {\em PSSE}, volume 3167 of {\em LNCS}, pages 220--268. Springer,
  2006.

\bibitem{Colvin2017}
R.~J. Colvin, I.~J. Hayes, and L.~A. Meinicke.
\newblock Designing a semantic model for a wide-spectrum language with
  concurrency.
\newblock {\em Formal Aspects of Computing}, 29, 2017.

\bibitem{Dijkstra75}
E.~W. Dijkstra.
\newblock Guarded commands, nondeterminacy and formal derivation of programs.
\newblock {\em Communications of the ACM}, 18(8):453--457, 1975.

\bibitem{Dongol2015}
B.~Dongol, V.~Gomes, and G.~Struth.
\newblock A program construction and verification tool for separation logic.
\newblock In {\em {MPC} 2015}, volume 9129 of {\em LNCS}, pages 137--158.
  Springer, 2015.

\bibitem{Dongol19}
B.~Dongol, I.~Hayes, L.~Meinicke, and G.~Struth.
\newblock Cylindric {Kleene} lattices for program construction.
\newblock In {\em MPC}, volume 11825 of {\em LNCS}, pages 192--225. Springer,
  October 2019.

\bibitem{Feliachi2010}
A.~Feliachi, M.-C. Gaudel, and B.~Wolff.
\newblock Unifying theories in {Isabelle/HOL}.
\newblock In {\em UTP 2010}, volume 6445 of {\em LNCS}, pages 188--206.
  Springer, 2010.

\bibitem{Feliachi2012}
A.~Feliachi, M.-C. Gaudel, and B.~Wolff.
\newblock {Isabelle/Circus}: a process specification and verification
  environment.
\newblock In {\em VSTTE 2012}, volume 7152 of {\em LNCS}, pages 243--260.
  Springer, 2012.

\bibitem{Feliachi2015SymTest}
A.~Feliachi, M.-C. Gaudel, and B.~Wolff.
\newblock Symbolic test-generation in {HOL-TESTGEN/CirTA}: A case study.
\newblock {\em Intl J. Software and Informatics}, 9(2):122--203, 2015.

\bibitem{Filliatre2013-Why3}
J.-C. Filli\^{a}tre and A.~Paskevich.
\newblock Why3 -- where programs meet provers.
\newblock In {\em Programming Languages and Systems (ESOP)}, volume 7792 of
  {\em LNCS}. Springer, 2013.

\bibitem{Fischer2015}
S.~Fischer, Z.~Hu, and H.~Pacheco.
\newblock A clear picture of lens laws.
\newblock In {\em MPC 2015}, pages 215--223. Springer, 2015.

\bibitem{Foster09}
J.~Foster.
\newblock {\em Bidirectional programming languages}.
\newblock PhD thesis, University of Pennsylvania, 2009.

\bibitem{Foster07}
J.~Foster, M.~Greenwald, J.~Moore, B.~Pierce, and A.~Schmitt.
\newblock Combinators for bidirectional tree transformations: A linguistic
  approach to the view-update problem.
\newblock {\em ACM Trans. Program. Lang. Syst.}, 29(3), May 2007.

\bibitem{Foster19c-HybridRelations}
S.~Foster.
\newblock Hybrid relations in {Isabelle/UTP}.
\newblock In {\em UTP}, volume 11885 of {\em LNCS}, pages 130--153. Springer,
  2019.

\bibitem{Foster20-LocalVars}
S.~Foster and J.~Baxter.
\newblock Automated algebraic reasoning for collections and local variables
  with lenses.
\newblock In M.~Winter, editor, {\em Proc. 18th Intl. Conf. on Relational and
  Algebraic Methods in Computer Science (RAMiCS)}, volume 12062 of {\em LNCS}.
  Springer, April 2020.

\bibitem{Foster18b}
S.~Foster, J.~Baxter, A.~Cavalcanti, A.~Miyazawa, and J.~Woodcock.
\newblock Automating verification of state machines with reactive designs and
  {Isabelle/UTP}.
\newblock In {\em Proc. 15th. Intl. Conf. on Formal Aspects of Component
  Software}, volume 11222 of {\em LNCS}. Springer, October 2018.

\bibitem{Foster17c}
S.~Foster, A.~Cavalcanti, S.~Canham, J.~Woodcock, and F.~Zeyda.
\newblock Unifying theories of reactive design contracts.
\newblock {\em Theoretical Computer Science}, 802, September 2020.

\bibitem{Foster17b}
S.~Foster, A.~Cavalcanti, J.~Woodcock, and F.~Zeyda.
\newblock Unifying theories of time with generalised reactive processes.
\newblock {\em Information Processing Letters}, 135:47--52, 2018.

\bibitem{Foster16b}
S.~Foster, B.~Thiele, A.~Cavalcanti, and J.~Woodcock.
\newblock Towards a {UTP} semantics for {Modelica}.
\newblock In {\em UTP}, volume 10134 of {\em LNCS}, pages 44--64. Springer,
  2016.

\bibitem{Foster18a}
S.~Foster, K.~Ye, A.~Cavalcanti, and J.~Woodcock.
\newblock Calculational verification of reactive programs with reactive
  relations and {Kleene} algebra.
\newblock In {\em Proc. 17th Intl. Conf. on Relational and Algebraic Methods in
  Computer Science (RAMICS)}, volume 11194 of {\em LNCS}. Springer, October
  2018.

\bibitem{Foster18c-Optics}
S.~Foster and F.~Zeyda.
\newblock Optics in {Isabelle/HOL}.
\newblock {\em Archive of Formal Proofs}, August 2018.
\newblock \url{https://www.isa-afp.org/entries/Optics.html}.

\bibitem{Foster19a-IsabelleUTP}
S.~Foster, F.~Zeyda, Y.~Nemouchi, P.~Ribeiro, and B.~Wolff.
\newblock {Isabelle/UTP: Mechanised Theory Engineering for Unifying Theories of
  Programming}.
\newblock {\em Archive of Formal Proofs}, 2019.
\newblock \url{https://www.isa-afp.org/entries/UTP.html}.

\bibitem{Foster14}
S.~Foster, F.~Zeyda, and J.~Woodcock.
\newblock {Isabelle/UTP}: A mechanised theory engineering framework.
\newblock In {\em UTP}, volume 8963 of {\em LNCS}, pages 21--41. Springer,
  2014.

\bibitem{Foster16a}
S.~Foster, F.~Zeyda, and J.~Woodcock.
\newblock Unifying heterogeneous state-spaces with lenses.
\newblock In {\em ICTAC}, volume 9965 of {\em LNCS}. Springer, 2016.

\bibitem{Galloway1997-IntegratedFormalMethods}
A.~J. Galloway and B.~Stoddart.
\newblock Integrated formal methods.
\newblock In {\em Proc. INFORSID}. INFORSID, 1997.

\bibitem{Gomes2016}
V.~B.~F. Gomes and G.~Struth.
\newblock Modal {Kleene} algebra applied to program correctness.
\newblock In {\em 21st. Intl. Symp. on Formal Methods (FM)}, volume 9995 of
  {\em LNCS}. Springer, 2016.

\bibitem{Gordon1989-VerifyHOL}
M.~J.~C. Gordon.
\newblock Mechanizing programming logics in higher order logic.
\newblock In G.~Birtwistle and P.~A. Subrahmanyam, editors, {\em Current trends
  in hardware verification and automated theorem proving}, pages 387--439.
  Springer, May 1989.

\bibitem{GordonC2010}
Mike Gordon and H{\'{e}}l{\`{e}}ne Collavizza.
\newblock Forward with {Hoare}.
\newblock In A.~W. Roscoe, Clifford~B. Jones, and Kenneth~R. Wood, editors,
  {\em Reflections on the Work of C. A. R. Hoare}, pages 101--121. Springer,
  2010.

\bibitem{He94}
J.~He.
\newblock From {CSP} to hybrid systems.
\newblock In A.~W. Roscoe, editor, {\em A classical mind: essays in honour of
  C. A. R. Hoare}, pages 171--189. Prentice Hall, 1994.

\bibitem{Hehner1984-PredicativeProgramming}
E.~C.~R. Hehner.
\newblock Predicative programming.
\newblock {\em Communications of the ACM}, 27(2):134--151, 1984.

\bibitem{Hehner1990}
E.~C.~R. Hehner.
\newblock A practical theory of programming.
\newblock {\em Science of Computer Programming}, 14:133--158, 1990.

\bibitem{Hehner93}
E.~C.~R. Hehner.
\newblock {\em A Practical Theory of Programming}.
\newblock Springer, 1993.

\bibitem{Hehner1988}
E.~C.~R. Hehner and A.~J. Malton.
\newblock Termination conventions and comparative semantics.
\newblock {\em Acta Informatica}, 25, 1988.

\bibitem{Tarski71}
L.~Henkin, J.~Monk, and A.~Tarski.
\newblock {\em Cylindric Algebras, Part I.}
\newblock North-Holland, 1971.

\bibitem{Hoare69}
C.~A.~R. Hoare.
\newblock An axiomatic basis for computer programming.
\newblock {\em Commun. ACM}, 12(10):576--580, 1969.

\bibitem{Hoare1984-ProgramsArePredicates}
C.~A.~R. Hoare.
\newblock Programs are predicates.
\newblock {\em Philisophical Transactions of the Royal Society}, 312(1522),
  October 1984.

\bibitem{Hoare1994}
C.~A.~R. Hoare.
\newblock Unified theories of programming.
\newblock Technical report, Oxford Computing Laboratory, July 1994.

\bibitem{Hoare&98}
C.~A.~R. Hoare and J.~He.
\newblock {\em Unifying {Theories} of {Programming}}.
\newblock Prentice-Hall, 1998.

\bibitem{Hoare87}
T.~Hoare, I.~Hayes, J.~He, C.~Morgan, A.~Roscoe, J.~Sanders, I.~S{\o}rensen,
  J.~Spivey, and B.~Sufrin.
\newblock Laws of programming.
\newblock {\em Communications of the ACM}, 30(8):672--687, August 1987.

\bibitem{Hofmann2011-SymLens}
M.~Hofmann, B.~Pierce, and D.~Wagner.
\newblock Symmetric lenses.
\newblock In {\em Proc. 38th Intl. Symp. on Principles of Programming Languages
  (POPL)}, pages 371--384. IEEE, 2011.

\bibitem{Huffman13}
B.~Huffman and O.~Kun\v{c}ar.
\newblock Lifting and transfer: A modular design for quotients in
  {Isabelle/HOL}.
\newblock In {\em CPP}, volume 8307 of {\em LNCS}, pages 131--146. Springer,
  2013.

\bibitem{Mor1996}
C.~Morgan.
\newblock {\em {Programming from Specifications}}.
\newblock Prentice-Hall, January 1996.

\bibitem{Foster2020-dL}
J.~H.~Y. Munive, G.~Struth, and S.~Foster.
\newblock Differential {Hoare} logics and refinement calculi for hybrid systems
  with {Isabelle/HOL}.
\newblock In {\em 18th Intl. Conf on Relational and Algebraic Methods in
  Computer Science (RAMiCS)}, volume 12062 of {\em LNCS}. Springer, April 2020.

\bibitem{Nipkow2014-ConcreteSemantics}
T.~Nipkow and G.~Klein.
\newblock {\em Concrete Semantics with {Isabelle/HOL}}.
\newblock Springer, December 2014.

\bibitem{Isabelle}
T.~Nipkow, M.~Wenzel, and L.~C. Paulson.
\newblock {\em {Isabelle/HOL: A Proof Assistant for Higher-Order Logic}},
  volume 2283 of {\em LNCS}.
\newblock Springer, 2002.

\bibitem{Nuka2006-UTP}
G.~Nuka and J.~Woodcock.
\newblock Mechanising a unifying theory.
\newblock In {\em UTP}, volume 4010 of {\em LNCS}, pages 217--235. Springer,
  2006.

\bibitem{Oliveira07}
M.~Oliveira, A.~Cavalcanti, and J.~Woodcock.
\newblock {Unifying theories in ProofPower-Z}.
\newblock In {\em UTP 2006}, volume 4010 of {\em LNCS}, pages 123--140.
  Springer, 2007.

\bibitem{Oliveira&09}
M.~Oliveira, A.~Cavalcanti, and J.~Woodcock.
\newblock {A UTP semantics for {C}ircus}.
\newblock {\em Formal Aspects of Computing}, 21:3--32, 2009.

\bibitem{Paige1997FM-IntegratedFormalMethods}
R.~F. Paige.
\newblock A meta-method for formal method integration.
\newblock In {\em Proc. 4th. Intl. Symp. on Formal Methods Europe (FME)},
  volume 1313 of {\em LNCS}, pages 473--494. Springer, 1997.

\bibitem{Pickering2017-Optics}
M.~Pickering, J.~Gibbons, and N.~Wu.
\newblock Profunctor optics: Modular data accessors.
\newblock {\em The Art, Science, and Engineering of Programming}, 1(2), 2017.

\bibitem{Platzer2008}
A.~Platzer.
\newblock Differential dynamic logic for hybrid systems.
\newblock {\em Journal of Automated Reasoning}, 41(2):143--189, 2008.

\bibitem{Reynolds2002}
J.~Reynolds.
\newblock Separation logic: a logic for shared mutable data structures.
\newblock In {\em LICS}. IEEE, 2002.

\bibitem{SCS06}
T.~Santos, A.~Cavalcanti, and A.~Sampaio.
\newblock {Object-Orientation in the UTP}.
\newblock In {\em {UTP 2006}}, volume 4010 of {\em LNCS}, pages 20--38.
  Springer, 2006.

\bibitem{Schirmer2009}
N.~Schirmer and M.~Wenzel.
\newblock State spaces -- the locale way.
\newblock In {\em SSV 2009}, volume 254 of {\em ENTCS}, pages 161--179, 2009.

\bibitem{Sherif2010}
A.~Sherif, A.~Cavalcanti, J.~He, and A.~Sampaio.
\newblock A process algebraic framework for specification and validation of
  real-time systems.
\newblock {\em Formal Aspects of Computing}, 22(2):153--191, 2010.

\bibitem{Spivey89}
M.~Spivey.
\newblock {\em The Z-Notation - A Reference Manual}.
\newblock Prentice Hall, Englewood Cliffs, N. J., 1989.

\bibitem{Tarski41}
A.~Tarski.
\newblock On the calculus of relations.
\newblock {\em J. Symbolic Logic}, 6(3):73--89, 1941.

\bibitem{Tuong2019-CIsabelle}
F.~Tuong and B.~Wolff.
\newblock Deeply integrating {C11} code support into {Isabelle/PIDE}.
\newblock In {\em Formal Integrated Development Environment (F-IDE)}, volume
  310 of {\em EPTCS}, pages 13--28, 2019.

\bibitem{vonWright1994-RefHOL}
J.~von Wright.
\newblock Program refinement by theorem prover.
\newblock In {\em Proc. 6th Refinement Workshop}, pages 121--150. Springer,
  January 1994.

\bibitem{Zeyda2012}
F.~Zeyda and A.~Cavalcanti.
\newblock Mechanical reasoning about families of {UTP} theories.
\newblock {\em Science of Computer Programming}, 77(4):444--479, April 2012.

\bibitem{Zeyda16}
F.~Zeyda, S.~Foster, and L.~Freitas.
\newblock An axiomatic value model for {Isabelle/UTP}.
\newblock In {\em UTP}, volume 10134 of {\em LNCS}, pages 155--175. Springer,
  2016.

\end{thebibliography}

\begin{figure}
  \begin{center}
    \includegraphics[width=.9\linewidth]{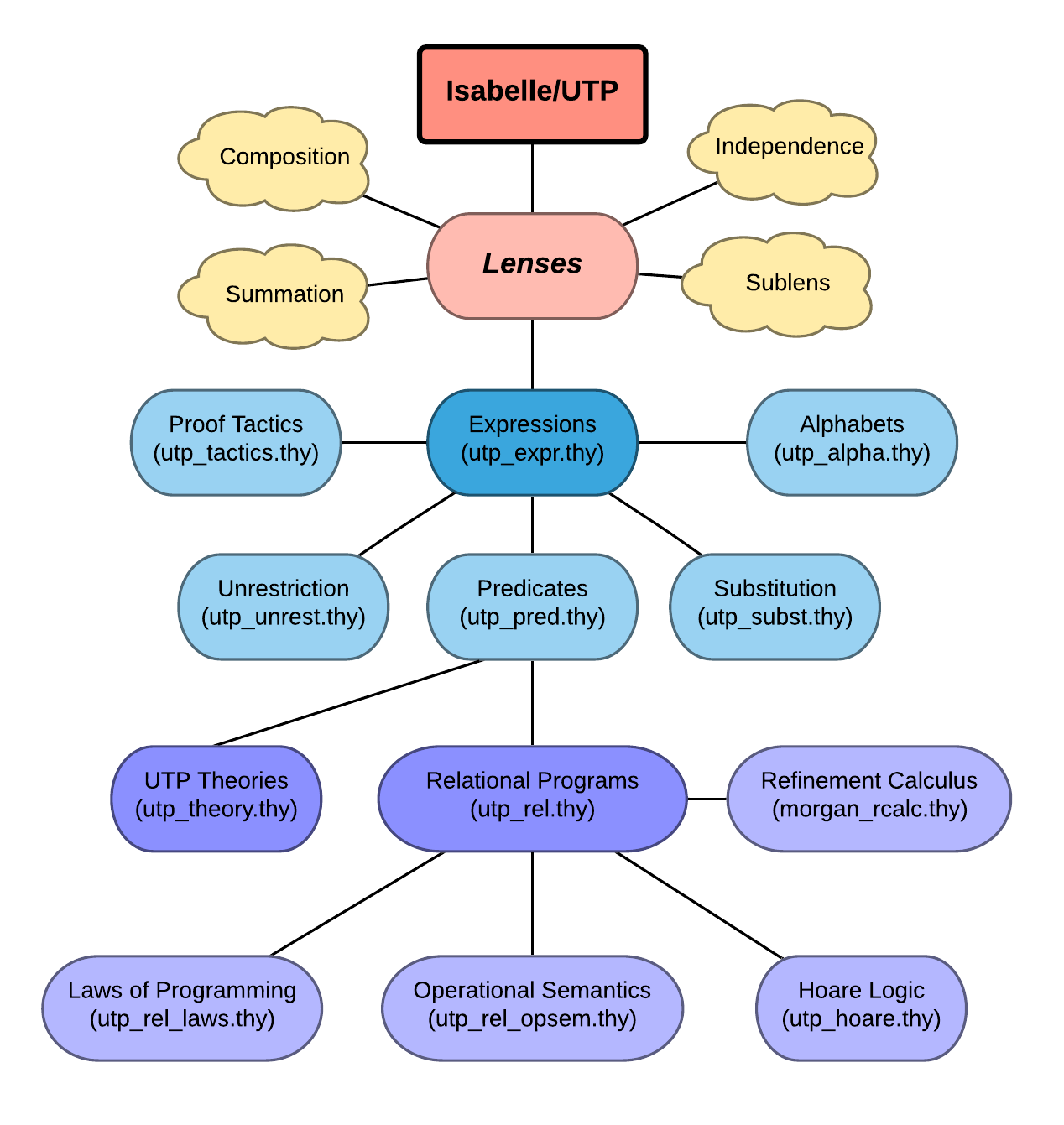}
  \end{center}

  \caption{Main Isabelle/UTP Concepts and Theories}

  \label{fig:UTP-Concepts}
\end{figure}

\end{document}